\newtheorem{theorem}{Theorem}[section]
\newtheorem{lemma}[theorem]{Lemma}
\newtheorem{corollary}[theorem]{Corollary}
\newtheorem{proposition}[theorem]{Proposition}
\theoremstyle{definition}
\newtheorem{definition}[theorem]{Definition}
\newtheorem{notation}[theorem]{Notation}
\newtheorem{example}[theorem]{Example}
\newtheorem{remark}[theorem]{Remark}
\numberwithin{equation}{section}
\definecolor{lime}{HTML}{A6CE39}
\DeclareRobustCommand{\orcidicon}{%
	\begin{tikzpicture}
		\draw[lime, fill=lime] (0,0) 
		circle [radius=0.16] 
		node[white] {{\fontfamily{qag}\selectfont \tiny ID}};
		\draw[white, fill=white] (-0.0625,0.095) 
		circle [radius=0.007];
	\end{tikzpicture}
	\hspace{-2mm}
}
\xdef\csname orcid\x\endcsname{\noexpand\href{https://orcid.org/\csname orcidauthor\x\endcsname}{\noexpand\orcidicon}}
\begin{document}
	\date{}
		\title{Polycyclic Codes over the Product Ring $\mathbb{F}_q^l$ and their Annihilator Dual  }
		\author{{\bf Akanksha\footnote{email: {\tt akankshafzd8@gmail.com}}\orcidA{}, \bf Ritumoni Sarma\footnote{	email: {\tt ritumoni407@gmail.com}}\orcidC{}} \\ Department of Mathematics\\ Indian Institute of Technology Delhi\\Hauz Khas, New Delhi-110016, India}
  
\maketitle

\begin{abstract}
In this article, for the finite field $\mathbb{F}_q$, we show that the $\mathbb{F}_q$-algebra $\mathbb{F}_q[x]/\langle f(x) \rangle$ is isomorphic to the product ring $\mathbb{F}_q^{\deg f(x)}$ if and only if $f(x)$ splits over $\mathbb{F}_q$ into distinct factors. We generalize this result to the quotient of the polynomial algebra $\mathbb{F}_q[x_1, x_2,\dots, x_k]$ by the ideal $\langle f_1(x_1), f_2(x_2),\dots, f_k(x_k)\rangle.$
On the other hand, we establish that every finite-dimensional $\mathbb{F}_q$-algebra $\mathcal{S}$ has an orthogonal basis of idempotents with their sum equal to $1_{\mathcal{S}}$ if and only if $\mathcal{S}\cong\mathbb{F}_q^l$ as $\mathbb{F}_q$-algebras, where  $l=\dim_{\mathbb{F}_q} \mathcal{S}$. Instead of studying polycyclic codes over $\mathbb{F}_q$-algebras $\mathbb{F}_q[x_1, x_2,\dots, x_k]/\langle  f_1(x_1), f_2(x_2),\dots, f_k(x_k)\rangle$ where $f_i(x_i)$ splits into distinct linear factors over $\mathbb{F}_q,$ which is a subclass of $\mathbb{F}_q^l,$  
we study polycyclic codes over $\mathbb{F}_q^l$ and obtain their unique decomposition into polycyclic codes over $\mathbb{F}_q$ for every such orthogonal basis of $\mathbb{F}_q^l$. We refer to it as an $\mathbb{F}_q$-decomposition. An $\mathbb{F}_q$-decomposition enables us to use results of polycyclic codes over $\mathbb{F}_q$ to study polycyclic codes over $\mathbb{F}_q^l$; for instance, we show that the annihilator dual of a polycyclic code over $\mathbb{F}_q^l$ is a polycyclic code over $\mathbb{F}_q^l$. Furthermore, with the help of different Gray maps, we produce a good number of examples of MDS or almost-MDS or/and optimal codes; some of them are LCD over $\mathbb{F}_q$. Finally, we study Gray maps from $(\mathbb{F}_q^l)^n$ to $\mathbb{F}_q^{nl},$ and use it to construct quantum codes with the help of CSS construction.
\medskip

\noindent \textit{Keywords:} Linear Code, Cyclic Code, Polycyclic Code, $\mathbb{F}_q$-algebra, Product Ring 
			
\medskip
			
\noindent \textit{2020 Mathematics Subject Classification:} 94B05, 94B15, 94B99, 13M05  

\end{abstract}

\section{Introduction}\label{Section 1}\label{sec1}
Linear codes have been extensively studied over finite fields due to their crucial role in error detection and error correction, which are essential for reliable information transmission. In 1994, Hammons et al. (\cite{hammons1994z}) studied linear codes over $\mathbb{Z}_4$ and constructed codes as binary images under the Gray maps of linear codes over $\mathbb{Z}_4$ for the first time. This drew the attention of coding theorists towards codes over finite commutative rings. 
A linear code invariant under the right (and hence left) cyclic shift is called a \textit{cyclic code},
first introduced by E. Prange (\cite{prange1957cyclic}) in 1957. Due to their relatively easier implementation and a great deal of rich algebraic structure, they have been studied widely. Numerous algebraic coding theorists have encouraged the study of cyclic codes for both burst-error correction and random-error correction. Cyclic codes over finite chain rings have been studied in different contexts, for instance, see \cite{abualrub2007cyclic},\cite{bonnecaze1999cyclic},\cite{dinh2004cyclic} and \cite{kanwar1997cyclic} and many authors also studied cyclic codes over finite non-chain rings, for instance,  
see \cite{yildiz2011cyclic} and \cite{islam2021cyclic}.
\par

    Constacyclic codes are a generalization of cyclic codes, first introduced by
    Berlekamp (\cite{berlekamp2015algebraic}) in 1960. Negacyclic codes are a special case of constacyclic codes. Both constacyclic and negacyclic codes over finite fields have been widely studied, for example, see \cite{bakshi2012class},\cite{chen2012constacyclic} and \cite{raka2015class}. Further, constacyclic codes over various rings have been extensively studied, for instance, see \cite{ABUALRUB2009520},\cite{swati_raka}, \cite{gao2018u},\cite{karadeniz20111+},\cite{QIAN2006820}, \cite{raza2024quantum} and \cite{shi2021construction}. Note that the rings considered to study constacyclic codes in \cite{swati_raka}, \cite{gao2018u}, \cite{raza2024quantum} and  \cite{shi2021construction} are isomorphic to a product ring of the form $\mathbb{F}_q^l,$ for some $l \in \mathbb{N}.$ 
    \par
    Polycyclic codes are a generalization of constacyclic codes, introduced by Peterson (\cite{peterson1972error}) in 1972. In recent years, polycyclic codes have been studied in \cite{aydin2022polycyclic},\cite{lopez2009dual},\cite{shi2020polycyclic},\cite{shi2020construction}, \cite{tabue2018polycyclic} and \cite{wu2022structure}. In 2009, L\'{o}pez-Permouth et al. (\cite{lopez2009dual}) studied polycyclic codes and sequential codes, and they established that a linear code over $\mathbb{F}_q$ is polycyclic if and only if its Euclidean dual is sequential. However, sequential codes are not necessarily polycyclic, and hence, the Euclidean dual of a polycyclic code need not be polycyclic. In 2016, Alahmadi et al. (\cite{alahmadi2016duality}) studied polycyclic codes over finite fields, and they introduced a different non-degenerate bilinear form, with respect to which the dual of polycyclic code is polycyclic. In 2020, Tabue et al. (\cite{tabue2018polycyclic}) studied polycyclic codes over finite chain rings and presented several results regarding the freeness of polycyclic codes and their annihilator dual. 
\par
In 2021, Islam et al. (\cite{islam2021cyclic}) studied cyclic codes over $R_{e,q}:=\mathbb{F}_q[u]/\langle u^e-1 \rangle,$ where $e|(q-1).$ In 2022, Bhardwaj and Raka (\cite{swati_raka}) studied constacyclic codes over a general non-chain ring $\mathcal{R}:=$ $\mathbb{F}_q[u_1,u_2,\dots,u_k]/\langle f_1(u_1),f_2(u_2),\dots,f_k(u_k)\rangle,$ where each $f_i(u_i)$ splits over $\mathbb{F}_q$ and has simple roots. Observe that both $R_{e, q}$ and $\mathcal{R}$ are isomorphic to a product ring $\mathbb{F}_q^l, $ for some $l \in \mathbb{N}.$ In 2021, Qi (\cite{qi2022polycyclic}) studied polycyclic codes over $\mathbb{F}_q[u]/\langle u^2-u\rangle,$ a non-chain ring which is isomorphic to $\mathbb{F}_q^2.$ However, note that in any case $\mathcal{R}$ and $R_{e,q}$ are not isomorphic to $\mathbb{F}_{p_1}^{p_2},$ where $p_1<p_2$ are primes. Hence, it is worth mentioning that the product ring $\mathbb{F}_q^l$ is a wider class as compared to the rings considered in the literature mentioned above. In 2024, Bajalan and Moro (\cite{bajalan2024polycyclic}) studied polycyclic codes over rings of the form $\mathcal{A}:=R[x_1,x_2,\dots,x_k]/\langle t_1(x_1),t_2(x_2),\dots,t_k(x_k)\rangle,$ where $R$ denotes a finite chain ring and each $t_i(x_i)$ is a monic square-free polynomial over $R.$ It is important to emphasize that the base ring considered in our article does not represent a special case of the class considered by the authors in \cite{bajalan2024polycyclic}. Moreover, neither of the two classes can be regarded as a generalization of the other. Specifically, for any choice of a finite chain ring $R,$ $\mathcal{A}$ is not isomorphic to $\mathbb{F}_{p_1}^{p_2},$ where $p_1< p_2$ are primes. This motivates us to study polycyclic codes over the product ring $\mathbb{F}_q^l,$ for some $l \in \mathbb{N}.$
\par
The structure of the article is as follows: Preliminaries are presented in Section \ref{sec2}. In Section \ref{sec3}, we characterize a ring $\mathcal{S}$ isomorphic to a product ring $\mathbb{F}_q^l.$ In Section \ref{sec4}, we study polycyclic codes over $\mathbb{F}_q^l$ and study their annihilator duals. In Section \ref{sec5}, we consider different Gray maps from the $\mathbb{F}_q$-algebra $(\mathbb{F}_q^l)^n$ to $\mathbb{F}_q^{nl}$ and study certain properties of the Gray image. Moreover, we utilize Gray maps from $(\mathbb{F}_q^l)^n$ to $\mathbb{F}_q^{nl}$ to construct quantum codes with the help of CSS construction. We conclude the article in Section \ref{sec6}.  
\section{Preliminaries}\label{sec2}
Throughout this article, for a prime power $q=p^m,$ $\mathbb{F}_q$ denotes the finite field of order $q.$
Let $R$ denote a finite commutative unital ring, and $U(R)$ denote the group of units of $R$.
\begin{definition}[$R$-linear Code] A code $C$ of length $n$ is said to be \textit{$R$-linear code}, if it is an $R$-submodule of $R^n.$
\end{definition}
Recall that, for an $\mathbb{F}_q$-linear code $C$ of length $n$, dimension $k$ and distance $d$ we denote its parameters by $[n,k,d].$ Every $\mathbb{F}_q$-linear code with parameters $[n,k,d]$ satisfies $d -1\le n-k$, and codes for which $d-1=n-k,$ are said to be \textit{maximum distance separable} (in short, \textit{MDS}) codes.
These codes maximize the distance of the code. Consequently, they can detect and correct the highest number of errors.
\begin{definition}[Euclidean Dual]
    The \textit{Euclidean dual} of an $R$-linear code $C$ of length $n$, denoted by $C^\perp,$ is the $R$-linear code given by $\{\mathbf{x}\in R^n : \langle \mathbf{x}, \mathbf{y}\rangle =\mathbf{x}\cdot \mathbf{y}=\underset{i=1}{\overset{n}{\sum}}x_iy_i=0 \, , \forall \, \mathbf{y} \in C\}.$
\end{definition}

\begin{definition}[LCD Codes]
An {\it LCD code} $C$ is an $R$-linear code which intersects its Euclidean dual trivially, that is, $C\cap C^{\perp}=\{0\}$.  
\end{definition}
\begin{definition}
    Let  $C$ be an $R$-linear code,
    \begin{itemize}
        \item It is {\it self-orthogonal} if $ C \subset C^\perp$,
        \item It is {\it self-dual} if $C=C^\perp,$
        \item It is {\it dual-containing} if $C^\perp\subset C.$
    \end{itemize}
\end{definition}
\begin{definition}[Polycyclic Codes]
Suppose that $ \mathbf{a} =(a_0,a_1,\dots,a_{n-1}) \in R^n $ and $a_0 \in U(R)$. A linear code $C$ over $R$ is called \textit{$\mathbf{a}$-polycyclic} if whenever $ \mathbf{c}=(c_0,c_1,\dots,c_{n-1}) \in C$, then $(0,c_0,c_1,\dots,c_{n-2})+c_{n-1}(a_0,a_1,\dots,a_{n-1}) \in C.$
\end{definition}
Associate
\begin{equation}\label{tuple to polynomial}
\mathbf{b}=(b_0,b_1,\dots,b_{n-1}) \in R^n \ \text{ with } \ \mathbf{b}(x)=b_0+b_1x+\dots+b_{n-1}x^{n-1} \in R[x].
\end{equation}
Similar to a cyclic code, for $\mathbf{a}=(a_0,a_1,\dots,a_{n-1})\in R^n,$ we can characterize an $\mathbf{a}$-polycyclic code over $R$ as an ideal of $R^{\mathbf{a}} =R[x]/\langle x^n-\mathbf{a}(x)\rangle$, for $\mathbf{a}(x)\in R[x]$. 
\begin{proposition}\cite{lopez2009dual}
A linear code over $R$ is {\it $\mathbf{a}$-polycyclic code} if and only if it is an ideal of $R^\mathbf{a}$. 
\end{proposition}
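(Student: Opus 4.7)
The plan is to exploit the polynomial representation \eqref{tuple to polynomial}, under which the polycyclic shift becomes nothing more than multiplication by $x$ modulo $x^n-\mathbf{a}(x)$. Once this single identification is verified, both implications reduce to routine closure arguments.

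First I would compute, in $R^{\mathbf{a}}$, the product $x\cdot\mathbf{c}(x)$ for an arbitrary $\mathbf{c}(x)=c_0+c_1x+\cdots+c_{n-1}x^{n-1}$. Multiplying formally gives $c_0x+c_1x^2+\cdots+c_{n-2}x^{n-1}+c_{n-1}x^n$, and replacing $x^n$ by $\mathbf{a}(x)=a_0+a_1x+\cdots+a_{n-1}x^{n-1}$ yields the coefficient tuple
\[
(c_{n-1}a_0,\ c_0+c_{n-1}a_1,\ c_1+c_{n-1}a_2,\ \dots,\ c_{n-2}+c_{n-1}a_{n-1}),
\]
which is exactly $(0,c_0,c_1,\dots,c_{n-2})+c_{n-1}(a_0,a_1,\dots,a_{n-1})$. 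Thus, under the bijection $\mathbf{c}\leftrightarrow\mathbf{c}(x)$, applying the polycyclic shift to $\mathbf{c}$ is the same as multiplying $\mathbf{c}(x)$ by $x$ in the quotient $R^{\mathbf{a}}$.

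For the forward direction, suppose $C$ is $\mathbf{a}$-polycyclic. Being $R$-linear, the image of $C$ in $R^{\mathbf{a}}$ is closed under addition and under scalar multiplication by elements of $R$. By the computation above, it is also closed under multiplication by $x$, and hence by iteration under multiplication by $x^i$ for every $i\ge 0$. Since every element of $R^{\mathbf{a}}$ is an $R$-linear combination of $1,x,\dots,x^{n-1}$, the image of $C$ absorbs multiplication by arbitrary elements of $R^{\mathbf{a}}$, so it is an ideal. For the converse, any ideal $C$ of $R^{\mathbf{a}}$ is automatically an $R$-submodule, so it corresponds to an $R$-linear code; moreover, closure under multiplication by $x$ translates back, via the same identification, to invariance under the polycyclic shift, and the hypothesis $a_0\in U(R)$ plays no role beyond defining the notion. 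This establishes both directions, with the only substantive step being the coefficient computation above.

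There is no real obstacle here; the whole argument hinges on the single polynomial identity $x\cdot\mathbf{c}(x)\equiv(0,c_0,\dots,c_{n-2})+c_{n-1}\mathbf{a}$, so I would present that calculation carefully and let the rest follow by the standard module-versus-ideal closure dichotomy.
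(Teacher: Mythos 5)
Your argument is correct and complete. Note that the paper does not prove this proposition at all --- it is quoted from the cited reference \cite{lopez2009dual} --- so there is no in-paper proof to compare against; your write-up supplies the standard argument: the coefficient computation identifying the polycyclic shift with multiplication by $x$ in $R^{\mathbf{a}}$ is verified correctly, and the rest follows from the fact that $1,x,\dots,x^{n-1}$ generate $R^{\mathbf{a}}$ as an $R$-module (which holds because $x^n-\mathbf{a}(x)$ is monic, making $\mathbf{b}\mapsto\overline{\mathbf{b}(x)}$ an $R$-module isomorphism $R^n\to R^{\mathbf{a}}$). Your side remark that $a_0\in U(R)$ is not needed for this particular equivalence is also accurate; that hypothesis is used elsewhere in the paper (e.g.\ for invertibility of the shift and non-degeneracy of the annihilator form), not here.
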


\begin{definition}[Sequential Codes]
An $R$-linear code $C$ of length $n$ is said to be {\it $\textbf{b}$-sequential}, for $ \textbf{b} \in R^n $, if $ \mathbf{c}=(c_0,c_1,\dots,c_{n-1}) \in C\Longrightarrow (c_1,c_2,\dots, c_{n-1}, \mathbf{c}\cdot\textbf{b}) \in C,$ where $\mathbf{c}\cdot\mathbf{b}$ denotes the Euclidean inner product of $\mathbf{c}$ and $\mathbf{b}$ in $R^n.$ 
\end{definition}
\begin{definition}($\mathbb{F}_q$-algebra)  \cite{atiyah2018introduction}
A commutative ring $R$ with unity (denoted by $1_R$) is called an $\mathbb{F}_q$-algebra if there exists a ring homomorphism $f:\mathbb{F}_q\to R$ such that $f(1)=1_R.$
    
\end{definition}

\begin{notation}[Product Ring] The $l$-dimensional vector space $\mathbb{F}_q^l$ is a ring with respect to the component-wise multiplication, that is, for $\mathbf{x}=(x_1, x_2, \dots, x_l), \mathbf{y}=(y_1, y_2,\dots, y_l)\in \mathbb{F}_q^l,$ the multiplication is $\mathbf{x} \mathbf{y}=(x_1y_1, x_2y_2,\dots, x_ly_l).$ We denote this product ring simply by $\mathbb{F}_q^l.$  Note that for $a\in\mathbb{F}_q$, $a\mapsto (a, a, \dots, a)$ defines an $\mathbb{F}_q$-algebra structure on this product ring.  
\end{notation}
\section{Characterization of the Base Ring}\label{sec3}
In this section we investigate when a particular class of $\mathbb{F}_q$-algebra is isomorphic to a product ring $\mathbb{F}_q^l$.
\begin{lemma}\label{Existence of basis consisting of orthogonal idempotents}
Let $\mathcal{S}$ be an $\mathbb{F}_q$-algebra. Then, $\mathcal{S}$ is isomorphic to the $\mathbb{F}_q$-algebra $\mathbb{F}_q^l$ if and only if there exists an $\mathbb{F}_q$-basis $\{e_1, e_2, \dots, e_l\}$ of $\mathcal{S}$ consisting of orthogonal idempotent elements satisfying $\sum\limits_{i=1}^l{e_i}=1$.
\end{lemma}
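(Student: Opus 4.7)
The plan is to prove both implications by directly constructing/pulling back the obvious structural data.

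For the forward direction ($\Rightarrow$), I would start from the standard basis $\{\varepsilon_1,\varepsilon_2,\ldots,\varepsilon_l\}$ of $\mathbb{F}_q^l$, where $\varepsilon_i$ has $1$ in the $i$-th coordinate and $0$ elsewhere. A direct coordinate computation shows that these are orthogonal idempotents with $\sum_{i=1}^l \varepsilon_i = 1_{\mathbb{F}_q^l}$. If $\psi\colon \mathbb{F}_q^l\to\mathcal{S}$ is an $\mathbb{F}_q$-algebra isomorphism, set $e_i:=\psi(\varepsilon_i)$. Since $\psi$ is $\mathbb{F}_q$-linear and bijective, $\{e_1,\ldots,e_l\}$ is an $\mathbb{F}_q$-basis of $\mathcal{S}$; since $\psi$ is a ring homomorphism preserving $1$, the idempotency, orthogonality and sum-to-one conditions transfer directly.

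For the reverse direction ($\Leftarrow$), suppose $\{e_1,\ldots,e_l\}$ is an $\mathbb{F}_q$-basis of $\mathcal{S}$ consisting of orthogonal idempotents with $\sum_{i=1}^l e_i=1_{\mathcal{S}}$. I would define
\[
\phi\colon \mathcal{S}\longrightarrow \mathbb{F}_q^l,\qquad \phi\!\left(\sum_{i=1}^l a_i e_i\right)=(a_1,a_2,\ldots,a_l).
\]
Since $\{e_i\}$ is a basis, $\phi$ is a well-defined $\mathbb{F}_q$-linear isomorphism of vector spaces. The sum-to-one condition gives $\phi(1_{\mathcal{S}})=(1,1,\ldots,1)=1_{\mathbb{F}_q^l}$. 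The main check is multiplicativity: for $x=\sum_i a_i e_i$ and $y=\sum_j b_j e_j$, expand
\[
xy=\sum_{i,j} a_i b_j\, e_i e_j,
\]
and use orthogonality ($e_ie_j=0$ for $i\ne j$) and idempotency ($e_i^2=e_i$) to collapse this to $\sum_i a_i b_i e_i$, whence $\phi(xy)=(a_1b_1,\ldots,a_lb_l)=\phi(x)\phi(y)$.

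I do not anticipate a genuine obstacle: once the basis/idempotent data is in hand the map $\phi$ essentially writes itself, and the verification is a one-line computation using orthogonality. The only mildly subtle point is to be careful that \emph{all three} conditions (idempotent, pairwise orthogonal, summing to $1$) are necessary: orthogonality is what collapses the double sum in the multiplicativity check, idempotency ensures the diagonal terms survive correctly, and $\sum e_i=1$ is what makes $\phi$ unital (equivalently, ensures the basis exhausts the identity so that the $\mathbb{F}_q$-algebra structure pulled back from $\mathbb{F}_q^l$ is consistent with the given one on $\mathcal{S}$).
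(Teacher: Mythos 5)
Your proposal is correct and follows essentially the same route as the paper: pull back the standard idempotent basis of $\mathbb{F}_q^l$ under the isomorphism for one direction, and for the converse send $e_i\mapsto \epsilon_i$ and verify that orthogonality and idempotency make the resulting linear bijection multiplicative. You simply spell out the multiplicativity computation in more detail than the paper does, but there is no substantive difference.
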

\begin{proof}For $1\le i\le l$, denote by $\epsilon_i,$ the $l$-tuple over $\mathbb{F}_q$ whose $i$-th component is $1$ and $j$-th component is $0$ for $1\le j\ne i\le l$ so that $\{\epsilon_1,\dots,\epsilon_l\}$ is a basis of $\mathbb{F}_q^l.$
    Let $\phi:\mathcal{S} \to \mathbb{F}_q^l$ be an isomorphism of $\mathbb{F}_q$-algebras. Then, $\{\phi^{-1}(\epsilon_1),\dots,\phi^{-1}(\epsilon_l)\}$ is an $\mathbb{F}_q$-basis of $\mathcal{S}$ consisting of orthogonal idempotents such that $\underset{i=1}{\overset{l}{\sum}}\phi^{-1}(\epsilon_i)=1.$
    Conversely, let $\mathcal{S}$ be an $\mathbb{F}_q$-algebra that has a basis $\{e_1,\dots,e_l\}$ such that $e_ie_j=\delta_{i, j} e_i$ and $\underset{i=1}{\overset{l}{\sum}} e_i=1.$ Then, $e_i \mapsto \epsilon_i$, for $1\le i\le l$,  induces  an isomorphism of $\mathbb{F}_q$-algebras from $\mathcal{S}$ to $\mathbb{F}_q^l$.\hfill $\square$
\end{proof}
\begin{proposition}
If an $\mathbb{F}_q$-algebra $\mathcal{S}$ has an $\mathbb{F}_q$-basis $\mathcal{B}=\{e_1, e_2, \dots, e_l\}$ consisting of orthogonal idempotent elements satisfying $\sum\limits_{i=1}^l{e_i}=1$, then we have the following:  
\begin{enumerate}
    \item $e_i\mathbb{F}_q\lhd \mathcal{S}$ and $\mathcal{S}=\underset{i=1}{\overset{l}{\bigoplus}}e_i\mathbb{F}_q$;
    \item $e_i\mathcal{S}=e_i\mathbb{F}_q,$ and hence $\mathcal{S}=\underset{i=1}{\overset{l}{\bigoplus}}e_i\mathcal{S}.$
\end{enumerate}
\end{proposition}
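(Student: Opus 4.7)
The plan is to prove both parts by exploiting the fact that $\mathcal{B}$ is simultaneously an $\mathbb{F}_q$-basis and a collection of orthogonal idempotents summing to $1$. Writing a generic element of $\mathcal{S}$ in coordinates with respect to $\mathcal{B}$ reduces every claim to a one-line multiplication using $e_ie_j=\delta_{i,j}e_i$.

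First I would address part (1). To show $e_i\mathbb{F}_q$ is an ideal, I would take an arbitrary $s\in\mathcal{S}$, write $s=\sum_{j=1}^l\alpha_j e_j$ with $\alpha_j\in\mathbb{F}_q$, and compute $s\cdot(e_i\beta)=\beta\sum_{j=1}^l\alpha_j e_je_i=\alpha_i\beta e_i\in e_i\mathbb{F}_q$, where commutativity of $\mathcal{S}$ (being an $\mathbb{F}_q$-algebra in the sense defined in the paper) handles left versus right multiplication. Closure under addition is immediate. For the decomposition, the identity $s=\sum_{j=1}^l\alpha_j e_j=\sum_{j=1}^l e_j\alpha_j$ shows $\mathcal{S}=\sum_{i=1}^l e_i\mathbb{F}_q$, and the directness of the sum is precisely the linear independence of $\mathcal{B}$.

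Next I would dispatch part (2). The containment $e_i\mathbb{F}_q\subseteq e_i\mathcal{S}$ is immediate from the fact that the structural homomorphism $\mathbb{F}_q\to\mathcal{S}$ realizes $\mathbb{F}_q$ inside $\mathcal{S}$. For the reverse containment, I would again expand an arbitrary $s\in\mathcal{S}$ as $s=\sum_{j=1}^l\alpha_j e_j$ and obtain $e_is=\sum_{j=1}^l\alpha_j e_ie_j=\alpha_i e_i\in e_i\mathbb{F}_q$. The decomposition $\mathcal{S}=\bigoplus_{i=1}^l e_i\mathcal{S}$ then follows immediately by substituting $e_i\mathcal{S}=e_i\mathbb{F}_q$ into the identity proved in (1).

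There is no genuine obstacle here; the statement is essentially a computation forced by the defining properties of $\mathcal{B}$. The only point requiring minimal care is being explicit about the embedding $\mathbb{F}_q\hookrightarrow\mathcal{S}$ so that the expression $e_i\mathbb{F}_q$ is interpreted as a subset of $\mathcal{S}$ and the multiplication $e_i\beta$ for $\beta\in\mathbb{F}_q$ is well-defined via the $\mathbb{F}_q$-algebra structure. Once this identification is fixed, each of the four claims reduces to one application of orthogonality and one application of basis uniqueness.
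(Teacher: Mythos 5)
Your proof is correct and follows essentially the same route as the paper's: expand an arbitrary element of $\mathcal{S}$ in the basis $\mathcal{B}$, use orthogonality of the idempotents to compute $s\,(e_i\beta)=\alpha_i\beta e_i$ and $e_i s=\alpha_i e_i$, and derive the direct-sum decompositions from linear independence of $\mathcal{B}$. Your added remark about fixing the embedding $\mathbb{F}_q\hookrightarrow\mathcal{S}$ is a reasonable point of care that the paper leaves implicit, but it does not change the argument.
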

\begin{proof}  Suppose $a=\underset{j=1}{\overset{l}{\sum}}e_ja_j\in\mathcal{S}$ for $a_j\in\mathbb{F}_q.$
\begin{enumerate} 
    \item Observe $e_i\mathbb{F}_q<\mathcal{S}$. Since $e_ie_j=0$ and $e_i^2=e_i$, we have $a(e_ib)=e_ia_ib$ for $b\in\mathbb{F}_q$ so that $e_i\mathbb{F}_q\lhd\mathcal{S}$. Now the next statement follows as $\mathcal{B}$ is a basis.
    \item Observe that $e_i\mathcal{S}\supset e_i\mathbb{F}_q.$ For the other containment, observe $e_ia=e_ia_i.$  
\end{enumerate} \hfill $\square$
\end{proof}
 \begin{lemma}\label{Rings}
 Let $e, l\in\mathbb{N}$ and $p(x)\in \mathbb{F}_q[x]$ be irreducible. Then, $\frac{\mathbb{F}_q[x]}{\langle{p(x)}^{e}\rangle}$ is isomorphic to the product ring ${\mathbb{F}_q^{l}}$ if and only if $\deg(p(x))=1$, $e=1$ and $l=1$.
 \end{lemma}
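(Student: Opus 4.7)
The plan is to handle the two implications separately. The backward direction is immediate: if $\deg p(x) = 1$ and $e = 1$, write $p(x) = x - \alpha$ for some $\alpha \in \mathbb{F}_q$ and observe that evaluation at $\alpha$ gives $\mathbb{F}_q[x]/\langle p(x)\rangle \cong \mathbb{F}_q = \mathbb{F}_q^1$, so $l = 1$.

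For the forward direction, suppose $\mathcal{R} := \mathbb{F}_q[x]/\langle p(x)^e\rangle \cong \mathbb{F}_q^l$. Comparing $\mathbb{F}_q$-dimensions forces $l = e\cdot\deg p(x)$, so it suffices to prove $l = 1$. My strategy is to show that the only idempotents of $\mathcal{R}$ are $0$ and $1$, and then derive a contradiction from Lemma \ref{Existence of basis consisting of orthogonal idempotents} whenever $l \ge 2$.

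The main technical step is the idempotent count, and it is the most delicate part of the argument. Given $f(x) \in \mathbb{F}_q[x]$ with $f(x)^2 \equiv f(x) \pmod{p(x)^e}$, we have $p(x)^e \mid f(x)(f(x) - 1)$. Since $f(x) - (f(x) - 1) = 1$, the polynomials $f(x)$ and $f(x) - 1$ are coprime in $\mathbb{F}_q[x]$, so by unique factorization all copies of the irreducible $p(x)$ must lie in a single factor. Consequently, $p(x)^e$ divides either $f(x)$ or $f(x) - 1$, and hence $f \equiv 0$ or $f \equiv 1$ in $\mathcal{R}$.

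Now I apply Lemma \ref{Existence of basis consisting of orthogonal idempotents}: an isomorphism $\mathcal{R} \cong \mathbb{F}_q^l$ supplies an $\mathbb{F}_q$-basis $\{e_1, \dots, e_l\}$ of $\mathcal{R}$ consisting of pairwise orthogonal idempotents summing to $1$. Each basis vector $e_i$ is a nonzero idempotent of $\mathcal{R}$, so by the preceding step each $e_i$ equals $1$; if $l \ge 2$, this contradicts the linear independence of the $e_i$. Therefore $l = 1$, and then $e \cdot \deg p(x) = 1$ forces $e = 1$ and $\deg p(x) = 1$. Apart from the coprimality observation behind the idempotent count, no further obstacle is anticipated.
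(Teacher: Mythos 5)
Your proof is correct, but it takes a genuinely different route from the paper. The paper's forward direction is a counting argument on units: since $p(x)$ is irreducible, the non-units of $\mathbb{F}_q[x]/\langle p(x)^e\rangle$ are exactly $\langle p(x)\rangle/\langle p(x)^e\rangle$, so the number of units is $q^{(e-1)\deg p(x)}(q^{\deg p(x)}-1)$, which must equal $(q-1)^l$; since $\deg p(x)\ge 1$, this forces $e=\deg p(x)=l=1$ in one stroke. You instead show that $\mathbb{F}_q[x]/\langle p(x)^e\rangle$ has no idempotents other than $0$ and $1$ (via coprimality of $f$ and $f-1$ and unique factorization --- in effect, that the ring is local), and then rule out $l\ge 2$ because $\mathbb{F}_q^l$ carries $l$ nonzero pairwise orthogonal idempotents; this ties the lemma directly to the characterization in Lemma \ref{Existence of basis consisting of orthogonal idempotents} and is the more structural argument (it would show, more generally, that no local ring that is not a field can be a product of two or more nonzero rings). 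The paper's computation is shorter and purely numerical. Two cosmetic points: your appeal to ``comparing $\mathbb{F}_q$-dimensions'' presumes the isomorphism is $\mathbb{F}_q$-linear, whereas the paper only assumes a ring isomorphism; comparing cardinalities ($q^{e\deg p(x)}=q^l$) yields the same identity $e\deg p(x)=l$ without that assumption. Likewise, at the final step you do not need linear independence: if $l\ge2$ and each $e_i=1$, orthogonality already fails since $e_1e_2=1\neq0$, and this version of the contradiction survives under a mere ring isomorphism.
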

 \begin{proof}Assume that $\frac{\mathbb{F}_q[x]}{\langle{p(x)}^{e}\rangle}\cong{\mathbb{F}_q^{l}}$ as rings. 
     Since $p(x)$ is irreducible over $\mathbb{F}_q,$ the group of units of $\frac{\mathbb{F}_q[x]}{\langle{p(x)}^{e}\rangle}$ is the complement of $\langle p(x)\rangle/\langle p(x)^e\rangle$ so that the number of units in $\frac{\mathbb{F}_q[x]}{\langle{p(x)}^{e}\rangle}$ is $q^{e\deg{p(x)}}-q^{\deg{p(x)}(e-1)}=q^{(e-1)\deg{p(x)}}(q^{\deg{p(x)}}-1).$ But the number of units in $\mathbb{F}_q^{l}$ is $(q-1)^l.$ Hence, $q^{(e-1)\deg{p(x)}}(q^{\deg{p(x)}}-1)=(q-1)^l.$ As $ 1\leq \deg{p(x)},$ we get $e=1, \,\deg{p(x)}=1$ and $l=1.$\\ 
     Conversely, if $\deg p(x)=e=l=1$, then $p(x)=x+a$ for some $a\in\mathbb{F}_q$ and $\frac{\mathbb{F}_q[x]}{\langle{x+a}\rangle}\cong \mathbb{F}_q$.\hfill $\square$
 \end{proof}
The following theorem is immediate from Lemma \ref{Rings}.

\begin{theorem}\label{isomorphic to F_q^l}
Let $f(x)\in\mathbb{F}_q[x]$, $l\in\mathbb{N}$.Then, $\frac{\mathbb{F}_q[x]}{\langle f(x) \rangle}$ is isomorphic to the product ring ${\mathbb{F}_q^l}$ if and only if $f(x)$ splits over $\mathbb{F}_q$ into distinct linear factors and $l=\deg{f(x)}$.
\end{theorem}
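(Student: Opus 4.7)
The plan is to reduce to Lemma \ref{Rings} via the Chinese Remainder Theorem. I would begin by writing $f(x) = \prod_{i=1}^{k} p_i(x)^{e_i}$ with the $p_i(x)$ distinct monic irreducibles in $\mathbb{F}_q[x]$. Since the factors $p_i(x)^{e_i}$ are pairwise coprime, CRT gives
\[
\mathbb{F}_q[x]/\langle f(x)\rangle \;\cong\; \prod_{i=1}^{k} R_i, \qquad R_i := \mathbb{F}_q[x]/\langle p_i(x)^{e_i}\rangle.
\]
The ($\Leftarrow$) direction then drops out immediately: if each $p_i$ is linear with $e_i=1$, every $R_i$ is $\mathbb{F}_q$, and the product has $k=\deg f$ factors.

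For ($\Rightarrow$), assume $\phi: \prod_{i=1}^k R_i \to \mathbb{F}_q^l$ is a ring isomorphism. Comparing cardinalities forces $l=\deg f$ at the outset. The key step is to show that each local factor $R_i$ is itself isomorphic to $\mathbb{F}_q^{l_i}$ for some $l_i\geq 1$, since Lemma \ref{Rings} then forces $\deg p_i = e_i = l_i = 1$. To extract this, I would track the images under $\phi$ of the canonical orthogonal idempotents $\mathbf{e}_i\in \prod_{i=1}^k R_i$ (the identity of $R_i$ in the $i$-th slot, zero elsewhere). Because the only idempotents of $\mathbb{F}_q$ are $0$ and $1$, each $\phi(\mathbf{e}_i)$ is a $0/1$-vector in $\mathbb{F}_q^l$; orthogonality together with $\sum_i \mathbf{e}_i = 1$ then yields a partition of $\{1,\ldots,l\}$ into $k$ nonempty subsets $S_i$, and $\phi$ restricts on the $i$-th direct summand to an isomorphism $R_i \cong \mathbb{F}_q^{|S_i|}$. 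Applying Lemma \ref{Rings} gives $|S_i| = \deg p_i = e_i = 1$ for every $i$, so $f$ splits into distinct linear factors and $l = k = \deg f$.

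The main obstacle is purely conceptual: recognising that Lemma \ref{Rings} needs to be upgraded from the single-irreducible case to the general one via CRT plus the idempotent bookkeeping described above. No nontrivial computation is required, since the observation that idempotents of $\mathbb{F}_q^l$ have coordinates in $\{0,1\}$ immediately converts the multiplicative decomposition of the target into a set-theoretic partition of coordinate indices, which is what makes the statement \emph{immediate} from Lemma \ref{Rings}.
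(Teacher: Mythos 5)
Your proposal is correct and follows essentially the same route as the paper: factor $f(x)$ into prime powers, apply the Chinese Remainder Theorem to obtain $\prod_i \mathbb{F}_q[x]/\langle p_i(x)^{e_i}\rangle$, and invoke Lemma \ref{Rings}. Your idempotent bookkeeping in fact supplies the justification that the paper states without proof, namely that an isomorphism of this product with $\mathbb{F}_q^l$ forces each local factor to be isomorphic to some $\mathbb{F}_q^{l_i}$.
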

\begin{proof}Suppose the factorization of $f(x)\in\mathbb{F}_q[x]$ into irreducible factors is given by $$p_1(x)^{e_1}p_2(x)^{e_2}\dots p_r(x)^{e_r}$$ where $p_i(x)\ne p_j(x)$ for $i\ne j.$ Then, by the Chinese Remainder Theorem, $\frac{\mathbb{F}_q[x]}{\langle f(x) \rangle} \cong \underset{i=1}{\overset{r}{\prod}}\frac{\mathbb{F}_q[x]}{\langle p_i(x)^{e_i} \rangle}.$ By Lemma \ref{Rings}, it is isomorphic to $\mathbb{F}_q^l$ if and only if each factor is isomorphic to $\mathbb{F}_q$ so that $\deg{p_i(x)}=1$ and $ e_i=1$ for each $i.$ \hfill $\square$
\end{proof}
For a field $\mathbb{F}$, if $A$ and $B$ are $\mathbb{F}$-algebras, then the tensor product $A\otimes_\mathbb{F} B$ (or $A\otimes B$) is an $\mathbb{F}$-algebra [see chapter 2, \cite{atiyah2018introduction}].

\begin{lemma}\label{tensortodirectproduct} Let $\mathbb{F}$ be a field and $m, n\in\mathbb{N}$. If $A=\mathbb{F}^m$ and $B=\mathbb{F}^n,$ then $A\otimes B\cong \mathbb{F}^{mn}$ as $\mathbb{F}$-algebras. More generally, $ \underset{i=1}{\overset{l}{\otimes}}\mathbb{F}^{d_i}\cong \mathbb{F}^{\underset{i=1}{\overset{l}{\prod}}d_i}$  as $\mathbb{F}$-algebras, where $d_i\in \mathbb{N} $  $ \forall \, \,1\leq i\leq  l.$
\end{lemma}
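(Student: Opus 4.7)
The plan is to apply Lemma \ref{Existence of basis consisting of orthogonal idempotents} to the tensor product $A\otimes B$. To invoke that lemma, I need to exhibit an $\mathbb{F}$-basis of $A\otimes B$ consisting of orthogonal idempotents whose sum equals the identity of $A\otimes B$.

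First I would take the standard bases: let $\{\epsilon_1,\dots,\epsilon_m\}$ be the orthogonal idempotent basis of $\mathbb{F}^m$ (with $\sum_i\epsilon_i=1_A$) and $\{\eta_1,\dots,\eta_n\}$ the analogous basis of $\mathbb{F}^n$ (with $\sum_j\eta_j=1_B$). By the universal property of tensor products over a field, the set $\{\epsilon_i\otimes\eta_j : 1\le i\le m,\,1\le j\le n\}$ is an $\mathbb{F}$-basis of $A\otimes B$. Using the multiplication in $A\otimes B$ given by $(a\otimes b)(a'\otimes b')=aa'\otimes bb'$, I would then verify
\begin{equation*}
(\epsilon_i\otimes\eta_j)(\epsilon_k\otimes\eta_l)=(\epsilon_i\epsilon_k)\otimes(\eta_j\eta_l)=\delta_{ik}\delta_{jl}\,(\epsilon_i\otimes\eta_j),
\end{equation*}
so these basis elements are orthogonal idempotents, and
\begin{equation*}
\sum_{i,j}\epsilon_i\otimes\eta_j=\Bigl(\sum_{i}\epsilon_i\Bigr)\otimes\Bigl(\sum_{j}\eta_j\Bigr)=1_A\otimes 1_B=1_{A\otimes B}.
\end{equation*}
Lemma \ref{Existence of basis consisting of orthogonal idempotents} then immediately yields $A\otimes B\cong\mathbb{F}^{mn}$ as $\mathbb{F}$-algebras.

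For the general statement, I would induct on $l$. The base case $l=1$ is trivial, and the inductive step reduces to the binary case already proved: $\bigotimes_{i=1}^{l}\mathbb{F}^{d_i}\cong\bigl(\bigotimes_{i=1}^{l-1}\mathbb{F}^{d_i}\bigr)\otimes\mathbb{F}^{d_l}\cong\mathbb{F}^{\prod_{i=1}^{l-1}d_i}\otimes\mathbb{F}^{d_l}\cong\mathbb{F}^{\prod_{i=1}^{l}d_i}$, where the first isomorphism uses the associativity of $\otimes_{\mathbb{F}}$ and the second uses the induction hypothesis.

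I do not anticipate a genuine obstacle here; the proof is essentially mechanical once Lemma \ref{Existence of basis consisting of orthogonal idempotents} is in hand. The only mildly subtle point is citing the fact that $\{\epsilon_i\otimes\eta_j\}$ is a basis of the tensor product, but over a field this is standard and can be stated without proof.
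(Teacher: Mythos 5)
Your proof is correct, but it takes a genuinely different route from the paper. The paper constructs the isomorphism explicitly: it defines a bilinear map $\psi(e_i,f_j)=\epsilon_{i+(j-1)m}$ into $\mathbb{F}^{mn}$, extends it to a linear map on $A\otimes B$, argues it is a vector-space isomorphism by surjectivity plus a dimension count, and then verifies multiplicativity on the basis $\{e_i\otimes f_j\}$ by the same computation $(e_i\otimes f_j)(e_s\otimes f_t)=\delta_{i,s}\delta_{j,t}\,e_i\otimes f_j$ that you perform. You instead verify that $\{\epsilon_i\otimes\eta_j\}$ is an orthogonal idempotent basis summing to $1_{A\otimes B}$ and then invoke the abstract characterization in Lemma \ref{Existence of basis consisting of orthogonal idempotents}. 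The underlying computation is identical; what differs is the packaging. Your version is slightly cleaner in that it avoids writing down the index bijection and reuses machinery the paper has already established, which also makes the structural reason for the result more transparent. The paper's version is self-contained and hands you the explicit isomorphism, which can be convenient later. One small point to flag: Lemma \ref{Existence of basis consisting of orthogonal idempotents} is stated in the paper only for $\mathbb{F}_q$-algebras over a finite field, whereas Lemma \ref{tensortodirectproduct} is stated for an arbitrary field $\mathbb{F}$; the proof of the characterization lemma uses nothing about finiteness, so your argument goes through, but you should either note that the characterization holds over any field or restrict your citation to the finite-field case actually needed in the paper. The inductive reduction of the $l$-fold statement to the binary case is the same in both arguments.
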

\begin{proof} 
 Let $\{e_1,e_2,\dots,e_{m}\}$ denote the standard base of $A$,  $\{f_1,f_2,\dots,f_{n}\}$ denote the standard base of $B$, and let $\{\epsilon_1,\epsilon_2,\dots,\epsilon_{mn}\}$ denote the standard base of ${\mathbb{F}^{mn}}$. Consider the bilinear map $\psi: \mathbb{F}^{m} \times {\mathbb{F}}^{n}\longrightarrow\mathbb{F}^{mn}$ given by $\psi(e_i, f_j)=\epsilon_{i+(j-1)m}.$ Since the image of $\psi$ contains a basis, $\psi$ extends uniquely to an $\mathbb{F}$-linear surjective map $\tilde{\psi}:A\otimes B\to \mathbb{F}^{mn}$ which is in fact an isomorphism of vector spaces as the domain and the codomain spaces have the same dimension over $\mathbb{F}$. For  $\tilde{\psi}$ to be an isomorphism of $\mathbb{F}$-algebras, it is enough to check that $\tilde{\psi}((e_i\otimes f_j)(e_s\otimes f_t))= \tilde{\psi}(e_i\otimes f_j)\tilde{\psi}(e_s\otimes f_t)$ for $1\le i, s\le m, 1\le j, t\le n$. Note that, $\tilde{\psi}((e_i\otimes f_j)(e_s\otimes f_t))=\tilde{\psi}(e_i e_s \otimes  f_j f_t)= \tilde{\psi}(e_i\delta_{i,s} \otimes  f_j\delta_{j,t})=\tilde{\psi}(\delta_{i,s}\delta_{j,t}e_i\otimes f_j)=\delta_{i,s}\delta_{j,t}\epsilon_{i+(j-1)m}.$
On the other hand,   
$\tilde{\psi}(e_i\otimes f_j)\tilde{\psi}(e_s\otimes f_t)=
\epsilon_{i+(j-1)m}\epsilon_{s+(t-1) m}=\delta_{i+(j-1)m, s+(t-1)m}\epsilon_{s+(t-1)m}.$ Observe that $i+(j-1)m=s+(t-1)m$ if and only if $i=s$ and $j=t$. The general statement follows by induction. \hfill $\square$
\end{proof}

In \cite{swati_raka}, Bhardwaj and Raka explicitly constructed a basis consisting of orthogonal idempotents whose sum is the unity (referred to as a complete set of orthogonal idempotents) for the quotient ring 
$\mathcal{R}=\mathbb{F}_q[x_1,\dots,x_k]/\langle f_1(x_1),\dots,f_k(x_k) \rangle,$ where each $f_i(x_i)$ splits over $\mathbb{F}_q$ and has simple roots. As a consequence, they obtained an $\mathbb{F}_q$-decomposition of $\mathcal{R}$
so that $\mathcal{R}\cong \mathbb{F}_q^l$, where $l=\dim_{\mathbb{F}_q} \mathcal{R}$. However, from Theorem \ref{complete orthogonal idempotents}, we see that it is not necessary to explicitly construct a basis consisting of orthogonal idempotents having sum $1,$ to show that $\mathcal{R}\cong\mathbb{F}_q^l$, for some $l\in\mathbb{N}$. Note that one requires the knowledge of roots of each $f_i(x_i)$ to construct such a basis of $\mathbb{F}_q[x_1,\dots,x_k]/\langle f_1(x_1),\dots,f_k(x_k) \rangle,$ where each $f_i(x_i)$ splits over $\mathbb{F}_q$ and has simple roots, by the construction given in \cite{swati_raka}. In fact, the following theorem can also be deduced from the results of the article \cite{poli1985important}, \cite{martinez2006multivariable} or \cite{chillag1995regular}, however, a simple and alternate proof can be seen by the argument given below which follows from the lemma proved above.

\begin{theorem}\label{complete orthogonal idempotents} Suppose, for $1\le i\le k,$  $f_i(x_i)$ splits over $\mathbb{F}_q$ and has simple roots. Then 
     $\mathcal{R}=\mathbb{F}_q[x_1,\dots,x_k]/\langle f_1(x_1),  \dots,f_k(x_k) \rangle$  is isomorphic to a product ring of the form $\mathbb{F}_q^l.$
\end{theorem}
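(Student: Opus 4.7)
The plan is to reduce the multivariate case to the single-variable case already handled by Theorem \ref{isomorphic to F_q^l}, and then glue the pieces together using Lemma \ref{tensortodirectproduct}. The key observation is that when the generators of an ideal in a polynomial ring involve disjoint sets of variables, the resulting quotient ring decomposes as a tensor product of single-variable quotients.

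First, I would establish the $\mathbb{F}_q$-algebra isomorphism
\[
\mathcal{R} \;=\; \frac{\mathbb{F}_q[x_1,\dots,x_k]}{\langle f_1(x_1),\dots,f_k(x_k)\rangle} \;\cong\; \bigotimes_{i=1}^{k} \frac{\mathbb{F}_q[x_i]}{\langle f_i(x_i)\rangle}.
\]
This can be proved by induction on $k$. For $k=1$ there is nothing to show. For the inductive step, one uses the standard fact that for any $\mathbb{F}_q$-algebra $A$ and any $g(y)\in\mathbb{F}_q[y]$, $A[y]/\langle g(y)\rangle \cong A\otimes_{\mathbb{F}_q}\mathbb{F}_q[y]/\langle g(y)\rangle$ as $\mathbb{F}_q$-algebras, applied with $A=\mathbb{F}_q[x_1,\dots,x_{k-1}]/\langle f_1(x_1),\dots,f_{k-1}(x_{k-1})\rangle$, $y=x_k$ and $g=f_k$. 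Alternatively, one can construct the map directly by sending $x_i \otimes 1 \otimes \cdots \otimes 1, \ \ldots,\ 1\otimes\cdots\otimes 1\otimes x_k$ to the cosets of $x_1,\dots,x_k$ respectively, and check it is well defined, surjective, and dimension-preserving.

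Next, since each $f_i(x_i)$ splits over $\mathbb{F}_q$ and has simple roots, it splits into distinct linear factors over $\mathbb{F}_q$, so Theorem \ref{isomorphic to F_q^l} yields
\[
\frac{\mathbb{F}_q[x_i]}{\langle f_i(x_i)\rangle} \;\cong\; \mathbb{F}_q^{d_i}, \qquad d_i:=\deg f_i(x_i),
\]
as $\mathbb{F}_q$-algebras. Substituting into the tensor decomposition and applying Lemma \ref{tensortodirectproduct} gives
\[
\mathcal{R} \;\cong\; \bigotimes_{i=1}^{k}\mathbb{F}_q^{d_i} \;\cong\; \mathbb{F}_q^{\,l},\qquad l \;=\; \prod_{i=1}^{k} d_i,
\]
which is exactly $\dim_{\mathbb{F}_q}\mathcal{R}$, as desired.

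The only step that requires some care is the tensor product decomposition in the first paragraph; everything else is a direct appeal to results already proved in the excerpt. The argument has the bonus feature of being basis-free, in contrast to the explicit idempotent construction in \cite{swati_raka}, and therefore avoids any reliance on knowing the roots of the $f_i$.
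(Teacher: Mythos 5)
Your proposal is correct and follows essentially the same route as the paper: decompose $\mathcal{R}$ as the tensor product $\bigotimes_{i=1}^{k}\mathbb{F}_q[x_i]/\langle f_i(x_i)\rangle$, apply Theorem \ref{isomorphic to F_q^l} to each factor, and conclude via Lemma \ref{tensortodirectproduct}. The only difference is that you justify the tensor decomposition (by induction on $k$), whereas the paper merely asserts it with ``Observe that''.
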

\begin{proof}
    Observe that as rings $$\mathcal{R} \cong \mathbb{F}_q[x_1]/\langle f_1(x_1) \rangle \otimes \mathbb{F}_q[x_2]/\langle f_2(x_2) \rangle \otimes \dots \otimes \mathbb{F}_q[x_k]/\langle f_k(x_k) \rangle.$$
    So, by Theorem \ref{isomorphic to F_q^l}, $$\mathcal{R}\cong
    \mathbb{F}_q^{\deg{f_1}} \otimes \mathbb{F}_q^{\deg{f_2}}\otimes \dots \otimes \mathbb{F}_q^{\deg{f_k}}.$$ 
    Therefore, by Lemma \ref{tensortodirectproduct}, $\mathcal{R}$ is a ring isomorphic to $\mathbb{F}_q^l,$ where $l=\underset{i=1}{\overset{k}{\prod}}\deg{f}_i.$ \hfill $\square$
\end{proof}

\begin{remark} 
    Bhardwaj and Raka in \cite{swati_raka} considered codes over a wide class of rings which were isomorphic to product rings, still, considering codes over a product ring ${\mathbb{F}_q^l}$, where $l \in \mathbb{N},$ is more general than considering codes over a ring of the form  $\mathbb{F}_q[x_1,\dots,x_k]/\langle f_1(x_1),\dots,f_k(x_k) \rangle$ where each $f_i(x_i)$ splits over $\mathbb{F}_q$ and has simple roots and $k\in \mathbb{N}$. Since for a prime power $q$ and $f_1(x_1),\dots,f_k(x_k)\in\mathbb{F}_q[x_1,\dots,x_k],$ we have $\mathbb{F}_q[x_1,\dots,x_k]/\langle f_1(x_1),\dots,f_k(x_k) \rangle\not\cong\mathbb{F}_{p_1}^{p_2},$ where $p_1<p_2$ are primes.
\end{remark}
\begin{remark}
    Another advantage of using $\mathbb{F}_q^l$ as the base ring is that a basis consisting of orthogonal idempotents is readily available, namely, the standard basis.
\end{remark}
  \begin{remark}  
    In a finite commutative $\mathbb{F}_q$-algebra with unity, a complete orthogonal basis of idempotents need not be unique, for instance, $\mathbb{F}_2^4$ has two such bases, namely, \\
        $\{(1,0,0,0),(0,1,0,0),(0,0,1,0),(0,0,0,1)\},$ and 
        $\{(1,1,1,0),(0,1,1,1),(1,0,1,1),(1,1,0,1)\}$. 
   \end{remark}
\begin{remark}\label{remark2.10PIR}
Note that if $\mathcal{S}$ is isomorphic to the product ring $\mathbb{F}_q^l$ then $\mathcal{S}[x]$ is a principal ideal ring since $\mathbb{F}_q^l[x]\cong \mathbb{F}_q[x]\times \dots \times \mathbb{F}_q[x]$($l$ times) as rings, and $\mathbb{F}_q[x]$ is a principal ideal domain. Hence for any polynomial $f(x) \in \mathcal{S}[x]$, the quotient ring $\frac{\mathcal{S}[x]}{\langle f(x) \rangle}$ is also a principal ideal ring.
\end{remark}

\section{Polycyclic Codes over a Product Ring}\label{sec4}
Let $\mathcal{B}=\{\bm{e}_1,\bm{e}_2,\dots,\bm{e}_l\}$ denote an $\mathbb{F}_q$-basis of $\mathbb{F}_q^l$ consisting of orthogonal idempotents having sum $\bm{1}.$ Note that one such basis is the standard basis. Let $\bm{a}\in\mathbb{F}_q^l$ and write $\bm{a}=\underset{i=1}{\overset{l}{\sum}}\bm{e}_ia_i,$ where $a_i\in\mathbb{F}_q,$ for $1\le i\le l$.
Denote the $i$-th canonical projection of $\mathbb{F}_q^l$ to $\mathbb{F}_q$ by $\pi_i$ so 
that $\pi_i(\bm{a})=a_i$. Extend this to $\tilde{\pi}_i:(\mathbb{F}_q^l)^n\to\mathbb{F}_q^n$ such that $\tilde{\pi}_i(\bm{a}_0,\dots,\bm{a}_{n-1})=(\pi_i(\bm{a}_0),\dots,\pi_i(\bm{a}_{n-1})).$ We will use the following notations throughout the article:
\begin{notation} For any $\Bar{\bm{a}}=\left(\bm{a}_0, \bm{a}_2,\dots, \bm{a}_{n-1}\right)\in (\mathbb{F}_q^l)^n,$ if $\bm{a}_i=\underset{j=1}{\overset{l}{\sum}} \bm{e}_ja_{i,j}$, where $a_{i,j}\in\mathbb{F}_q,$ for $0\leq i\leq {n-1}.$  For $1 \leq j \leq l,$ we define, $\bm{a}^{(j)}=\big(a_{0,j},a_{1,j}, \dots, a_{{n-1},j}\big)\in\mathbb{F}_q^n$ so that, \begin{equation}\label{decomposition of a vector}
    \Bar{\bm{a}} =\underset{j=1}{\overset{l}{\sum}}(a_{0,j}\bm{e}_j,a_{1,j}\bm{e}_j,\dots,a_{n-1,j}\bm{e}_j)=\underset{j=1}{\overset{l}{\sum}}\bm{e}_j*{\bm{a}}^{(j)}.
\end{equation}
\end{notation}
\begin{notation}
Let $\mathcal{C}$ be an $\Bar{\bm{a}}$-polycyclic code over $\mathbb{F}_q^l$ for some $\Bar{\bm{a}}=(\bm{a}_0,\bm{a}_1,\dots,\bm{a}_{n-1})\in (\mathbb{F}_q^l)^n.$ Then from Equation \ref{tuple to polynomial} we have $\Bar{\bm{a}}(x)=\bm{a}_0+\bm{a}_1x+\dots+\bm{a}_{n-1}x^{n-1}$ and $\mathcal{C}$ is an ideal of $\mathbb{F}_q^l[x]/\langle x^n-\Bar{\bm{a}}(x)\rangle.$ Further, from Equation \ref{decomposition of a vector} we get 
$\Bar{\bm{a}}=\underset{j=1}{\overset{l}{\sum}}(a_{0,j}\bm{e}_j,a_{1,j}\bm{e}_j,\dots,a_{n-1,j}\bm{e}_j)=\underset{j=1}{\overset{l}{\sum}}\bm{e}_j*{\bm{a}}^{(j)}\in (\mathbb{F}_q^l)^n$ and $\bm{a}^{(j)}=\big(a_{0,j},a_{1,j}, \dots, a_{{n-1},j}\big)\in\mathbb{F}_q^n.$ Observe that, $x^n-\Bar{\bm{a}}(x)=\underset{j=1}{\overset{l}{\sum}}\bm{e}_j(x^n-\bm{a}^{(j)}(x)),$ where $x^n-\bm{a}^{(j)}(x)\in \mathbb{F}_q[x]$ exist uniquely.
\end{notation}
Then,  with respect to the basis $\mathcal{B}$, we have a unique decomposition of every $\mathbb{F}_q^l$-submodule of $(\mathbb{F}_q^l)^n$ into $\mathbb{F}_q$-linear subspaces of $\mathbb{F}_q^n.$  

\begin{theorem}[$\mathbb{F}_q$-decomposition]\label{linear}
    Let $\{\bm{e}_1,\bm{e}_2,\dots,\bm{e}_l\}$ be a basis of $\mathbb{F}_q^l$ over $\mathbb{F}_q$ consisting of orthogonal idempotents such that $\sum_{i=1}^l\bm{e}_i=\bm{1}$. For  $\mathcal{C}\subset(\mathbb{F}_q^l)^n$ denote $\tilde\pi_i(\mathcal{C})$ by $\mathcal{C}_i,$ for $1\le i\le l.$  Then,
    \begin{enumerate}
        \item[(a)] $\mathcal{C}$ is an $\mathbb{F}_q^l$-submodule of $(\mathbb{F}_q^l)^n$ if and only if $\,\forall\, 1\le j\le l,\, \mathcal{C}_j$ is an $\mathbb{F}_q$-subspace of $\mathbb{F}_q^n$ and $\mathcal{C}=\underset{i=1}{\overset{l}{\sum}} \bm{e}_i\mathcal{C}_i.$
        \item[(b)] If $\mathcal{C}$ is an $\mathbb{F}_q^l$-submodule of $(\mathbb{F}_q^l)^n$ and $\mathcal{C}=\underset{i=1}{\overset{l}{\sum}} \bm{e}_i\mathcal{C}_i'$, where $\mathcal{C}_i'\subset\mathbb{F}_q^n$ ($1\le i\le l$),  then $\mathcal{C}_i'=\mathcal{C}_i$ for each $i$ so that the sum is direct. 
    \end{enumerate}
\end{theorem}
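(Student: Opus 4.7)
The plan is to exploit the canonical decomposition $\bar{\bm{a}}=\sum_{j=1}^l \bm{e}_j*\bm{a}^{(j)}$ introduced in the notation, which identifies an element of $(\mathbb{F}_q^l)^n$ with the $l$-tuple of its coordinate projections $\tilde\pi_j(\bar{\bm{a}})\in\mathbb{F}_q^n$, and then to let the orthogonal-idempotent relation $\bm{e}_i\bm{e}_j=\delta_{i,j}\bm{e}_i$ do almost all the work. The whole theorem reduces to carefully tracking which sums and products live in which ring.

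For the forward implication in (a), each $\tilde\pi_i$ is $\mathbb{F}_q$-linear (it is the $i$-th coordinate with respect to the basis $\mathcal{B}$), so $\mathcal{C}_i=\tilde\pi_i(\mathcal{C})$ is automatically an $\mathbb{F}_q$-subspace of $\mathbb{F}_q^n$. The containment $\mathcal{C}\subset\sum_i\bm{e}_i\mathcal{C}_i$ is immediate from the decomposition $\bar{\bm{a}}=\sum_j \bm{e}_j*\tilde\pi_j(\bar{\bm{a}})$; for the reverse, given $\bm{c}^{(i)}\in\mathcal{C}_i$ I would choose some $\bar{\bm{c}}\in\mathcal{C}$ with $\tilde\pi_i(\bar{\bm{c}})=\bm{c}^{(i)}$, multiply by the scalar $\bm{e}_i\in\mathbb{F}_q^l$, and use orthogonality to collapse $\bm{e}_i*\bar{\bm{c}}$ to $\bm{e}_i*\bm{c}^{(i)}$, which lies in $\mathcal{C}$ by the module property. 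For the reverse implication, I would verify the $\mathbb{F}_q^l$-submodule axioms on $\sum_i\bm{e}_i\mathcal{C}_i$: closure under addition is immediate from the $\mathbb{F}_q$-subspace structure of each $\mathcal{C}_i$, and for scalar multiplication I write $\bm{\alpha}=\sum_j\bm{e}_j\alpha_j\in\mathbb{F}_q^l$ with $\alpha_j\in\mathbb{F}_q$ and compute $\bm{\alpha}*\sum_i\bm{e}_i*\bm{c}^{(i)}=\sum_i\bm{e}_i*(\alpha_i\bm{c}^{(i)})$ by orthogonality, which lies in $\sum_i\bm{e}_i\mathcal{C}_i$ since each $\mathcal{C}_i$ is $\mathbb{F}_q$-closed.

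For (b), given any alternative decomposition $\bar{\bm{c}}=\sum_j\bm{e}_j*\bm{d}^{(j)}$ with $\bm{d}^{(j)}\in\mathcal{C}_j'$, applying $\tilde\pi_i$ and using the identity $\tilde\pi_i(\bm{e}_j*\bm{d})=\delta_{i,j}\bm{d}$ yields $\tilde\pi_i(\bar{\bm{c}})=\bm{d}^{(i)}\in\mathcal{C}_i'$, proving $\mathcal{C}_i\subset\mathcal{C}_i'$; the reverse inclusion is clear because $\bm{e}_i*\bm{c}^{(i)}\in\mathcal{C}$ for every $\bm{c}^{(i)}\in\mathcal{C}_i'$ and then $\tilde\pi_i(\bm{e}_i*\bm{c}^{(i)})=\bm{c}^{(i)}\in\mathcal{C}_i$. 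Directness of the sum then follows by applying $\tilde\pi_i$ to a relation $\sum_i\bm{e}_i*\bm{c}^{(i)}=0$, which forces $\bm{c}^{(i)}=0$. The only real obstacle I anticipate is notational hygiene, namely, keeping the scalar action of $\mathbb{F}_q$, the ring multiplication of $\mathbb{F}_q^l$, and the $*$-action on coordinates of $(\mathbb{F}_q^l)^n$ clearly separated; none of the steps involve more than direct unwinding of definitions together with the orthogonality relation $\bm{e}_i\bm{e}_j=\delta_{i,j}\bm{e}_i$.
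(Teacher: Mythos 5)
Your proposal is correct and follows essentially the same route as the paper: project with $\tilde\pi_i$ to get the subspaces $\mathcal{C}_i$, use the decomposition $\bar{\bm{c}}=\sum_j\bm{e}_j*\tilde\pi_j(\bar{\bm{c}})$ together with the orthogonality $\bm{e}_i\bm{e}_j=\delta_{i,j}\bm{e}_i$ for both inclusions in (a), and read off (b) from the basis property of $\{\bm{e}_1,\dots,\bm{e}_l\}$. If anything, you supply more detail than the paper does for the containment $\sum_i\bm{e}_i\mathcal{C}_i\subset\mathcal{C}$ (the paper dismisses it with ``by definition of $\mathcal{C}_i$''), and your argument there is the right one.
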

\begin{proof}
\begin{enumerate}
    \item [(a)]Let $\mathcal{C}$ be an $\mathbb{F}_q^l$-submodule of $(\mathbb{F}_q^l)^n$, and let $\Bar{\bm{a}}=(\bm{a}_1,\bm{a}_2,\dots,\bm{a}_n)$    and $\Bar{\bm{b}}=(\bm{b}_1,\bm{b}_2,\dots, \bm{b}_n)\in \mathcal{C}.$ For  $\alpha,\beta\in \mathbb{F}_q,$ observe that  $\alpha\tilde{\pi}_{i}(\Bar{\bm{a}})+\beta\tilde{\pi}_{i}(\Bar{\bm{b}})=\tilde{\pi}_{i}(\alpha\Bar{\bm{a}})+\tilde{\pi}_{i}(\beta\Bar{\bm{b}})=\tilde{\pi}_{i}(\alpha\Bar{\bm{a}}+\beta\Bar{\bm{b}}).$ Since $\alpha\Bar{\bm{a}}+\beta\Bar{\bm{b}}\in \mathcal{C}$,  we have $\alpha\tilde{\pi}_{i}(\Bar{\bm{a}})+\beta\tilde{\pi}_{i}(\Bar{\bm{b}})\in \mathcal{C}_i$ so that $\mathcal{C}_i$ is an $\mathbb{F}_q$-subspace of $\mathbb{F}_q^n.$ Further, by definition of $\mathcal{C}_i$, it follows that $\mathcal{C}=\underset{i=1}{\overset{l}{\sum}}\bm{e}_i\mathcal{C}_i.$  Conversely, suppose that $\mathcal{C}=\underset{i=1}{\overset{l}{\sum}}\bm{e}_i\mathcal{C}_i$ and each $\mathcal{C}_i$ is an $\mathbb{F}_q$-subspace. Let $\Bar{\bm{a}}=\underset{i=1}{\overset{l}{\sum}}\bm{e}_i*{\bm{a}}^{(i)},\,\Bar{\bm{b}}=\underset{i=1}{\overset{l}{\sum}}\bm{e}_i*{\bm{b}}^{(i)}\in\mathcal{C},$ where ${\bm{a}}^{(i)}, {\bm{b}}^{(i)} \in \mathcal{C}_i $ for $1\leq i\leq l,$ and let $\bm{\alpha}=\underset{i=1}{\overset{l}{\sum}}\alpha_i\bm{e}_i\in\mathbb{F}_q^l,$ where $\alpha_i\in \mathbb{F}_q$ for $1\leq i \leq l.$ Then, $\bm{\alpha}\Bar{\bm{a}}+\Bar{\bm{b}}=\underset{i=1}{\overset{l}{\sum}}\bm{e}_i*(\alpha_i\bm{a}^{(i)}+\bm{b}^{(i)})\in \mathcal{C}$ as each $\mathcal{C}_i$ is an $\mathbb{F}_q$-subspace.
    \item[(b)] Since $\{\bm{e}_1,\bm{e}_2,\dots,\bm{e}_l\}$ is an $\mathbb{F}_q$-basis of $\mathbb{F}_q^l,$ the assertion follows. 
\end{enumerate} \hfill $\square$
    
\end{proof}
Thus, $\mathcal{C}=\underset{i=1}{\overset{l}{\bigoplus}} \bm{e}_i\mathcal{C}_i$, where $\mathcal{C}_i=\tilde\pi_i(\mathcal{C}).$ We refer to this decomposition of $\mathcal{C}$ as the {\it $\mathbb{F}_q$-decomposition} with respect to the basis $\{\bm{e}_1,\bm{e}_2,\dots,\bm{e}_l\}$.  
 In part (a) of Theorem \ref{linear}, ``only if" fails if we drop the hypothesis that $\mathcal{C}=\underset{i=1}{\overset{l}{\sum}} \bm{e}_i\mathcal{C}_i.$ 
\begin{example}
   Let $\mathcal{C}=\mathbb{F}_2^2\setminus\{0\}.$ Then,  $\tilde{\pi}_1(\mathcal{C})=\mathbb{F}_2=\tilde{\pi}_2(\mathcal{C}).$ But $\mathcal{C}$ is not $\mathbb{F}_2$-linear.
\end{example}
By Theorem \ref{linear}, given any linear code $\mathcal{C}$ over $\mathbb{F}_q^l$, we obtain $d$ linear codes over $\mathbb{F}_q.$ 

\begin{theorem}\label{polycyclic_classification}
 Let $\mathcal{C}=\underset{i=1}{\overset{l}{\bigoplus}}\bm{e}_i\mathcal{C}_i$ be an $\mathbb{F}_q^l$-linear code of length $n,$ and let $\Bar{\bm{a}}=\bm{e}_1*\bm{a}^{(1)}+\bm{e}_2*\bm{a}^{(2)}+ \dots +\bm{e}_l*\bm{a}^{(l)} \in (\mathbb{F}_q^l)^n$, where $\bm{a}^{(i)}\in\mathbb{F}_q^n$, for $1\le i\le l$. Then, $\mathcal{C}$ is $\Bar{\bm{a}}$-polycyclic over $\mathbb{F}_q^l$ if and only if for each $1 \leq i \leq l$, $\mathcal{C}_i$ is $\bm{a}^{(i)}$-polycyclic over $\mathbb{F}_q$.
\end{theorem}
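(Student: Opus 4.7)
The plan is to reduce the $\bar{\bm{a}}$-polycyclic shift on $(\mathbb{F}_q^l)^n$ to $l$ independent $\bm{a}^{(i)}$-polycyclic shifts on $\mathbb{F}_q^n$, and then to invoke the uniqueness part of the $\mathbb{F}_q$-decomposition (Theorem \ref{linear}(b)) to translate the containment conditions componentwise.

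Concretely, I would start by picking an arbitrary codeword $\bar{\bm{c}}=(\bm{c}_0,\ldots,\bm{c}_{n-1})\in\mathcal{C}$ and writing $\bar{\bm{c}}=\sum_{i=1}^l \bm{e}_i*\bm{c}^{(i)}$ with $\bm{c}^{(i)}=(c_{0,i},\ldots,c_{n-1,i})\in\mathcal{C}_i$ guaranteed by Theorem \ref{linear}. The key computation is to expand the polycyclic shift
\[
\sigma_{\bar{\bm{a}}}(\bar{\bm{c}})\;=\;(0,\bm{c}_0,\ldots,\bm{c}_{n-2})+\bm{c}_{n-1}\,\bar{\bm{a}}
\]
under this decomposition. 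The first summand clearly equals $\sum_i \bm{e}_i*(0,c_{0,i},\ldots,c_{n-2,i})$. For the second summand, I would use the orthogonality $\bm{e}_i\bm{e}_j=\delta_{i,j}\bm{e}_i$ to get
\[
\bm{c}_{n-1}\,\bar{\bm{a}}\;=\;\Big(\sum_{i=1}^l\bm{e}_i c_{n-1,i}\Big)\Big(\sum_{j=1}^l \bm{e}_j*\bm{a}^{(j)}\Big)\;=\;\sum_{i=1}^l\bm{e}_i*\big(c_{n-1,i}\,\bm{a}^{(i)}\big).
\]
Combining the two gives $\sigma_{\bar{\bm{a}}}(\bar{\bm{c}})=\sum_{i=1}^l \bm{e}_i*\sigma_{\bm{a}^{(i)}}(\bm{c}^{(i)})$, where $\sigma_{\bm{a}^{(i)}}$ denotes the $\bm{a}^{(i)}$-polycyclic shift on $\mathbb{F}_q^n$.

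With this identity in hand, both directions are short. For the forward direction, if $\mathcal{C}$ is $\bar{\bm{a}}$-polycyclic then for any $\bm{c}^{(i)}\in\mathcal{C}_i$ I lift it to $\bm{e}_i*\bm{c}^{(i)}\in\mathcal{C}$, apply $\sigma_{\bar{\bm{a}}}$, project by $\tilde{\pi}_i$, and obtain $\sigma_{\bm{a}^{(i)}}(\bm{c}^{(i)})\in\mathcal{C}_i$. For the converse, if each $\mathcal{C}_i$ is $\bm{a}^{(i)}$-polycyclic then $\sigma_{\bm{a}^{(i)}}(\bm{c}^{(i)})\in\mathcal{C}_i$ for every $i$, so by the formula above $\sigma_{\bar{\bm{a}}}(\bar{\bm{c}})\in\bigoplus_i \bm{e}_i\mathcal{C}_i=\mathcal{C}$.

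I do not anticipate a genuine obstacle here; the only care needed is bookkeeping the componentwise scalar multiplication $\bm{c}_{n-1}\,\bar{\bm{a}}$ so that the cross-terms $\bm{e}_i\bm{e}_j$ with $i\neq j$ drop out. Once that is clean, everything else is an immediate consequence of the uniqueness of the $\mathbb{F}_q$-decomposition established in Theorem \ref{linear}(b).
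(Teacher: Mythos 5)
Your proposal is correct and follows essentially the same route as the paper: the central step in both is expanding the shift as $\sigma_{\Bar{\bm{a}}}(\Bar{\bm{c}})=\sum_{i=1}^l \bm{e}_i*\sigma_{\bm{a}^{(i)}}(\bm{c}^{(i)})$ via orthogonality of the idempotents, and then reading off both implications from the uniqueness of the $\mathbb{F}_q$-decomposition. Your forward direction (lifting $\bm{c}^{(i)}$ to $\bm{e}_i*\bm{c}^{(i)}$ and projecting back with $\tilde{\pi}_i$) is in fact slightly more explicit than the paper's terse ``if and only if'' at that point, but it is the same argument.
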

\begin{proof}Let $\Bar{\bm{c}}=\underset{i=1}{\overset{l}{\sum}}\bm{e}_i*\bm{c}^{(i)}\in\mathcal{C},$ where for each $1\leq i\leq l, \, \, \bm{c}^{(i)}=(c_{1,i},c_{2,i},\dots,c_{n,i} )\in \mathcal{C}_i$. Then, $\Bar{\bm{c}}=(\bm{c}_1,\bm{c}_2,\dots,\bm{c}_{n})$ where $\bm{c}_i=\underset{j=1}{\overset{l}{\sum}}\bm{e}_jc_{i,j}\in\mathbb{F}_q^l.$ Suppose $\Bar{\bm{a}}=(\bm{a}_0,\bm{a}_1,\dots,\bm{a}_{n-1})\in(\mathbb{F}_q^l)^n,$ where $\bm{a}_i=\underset{j=1}{\overset{l}{\sum}}\bm{e}_ja_{i,j},$ for $0\leq i\leq n-1.$ Set, for $1\le i\le l$, $\bm{a}^{(i)}=(a_{0,i},a_{1,i},\dots,a_{n-1,i})\in\mathbb{F}_q^n.$ Observe that $\bm{c}_n\cdot \bm{a}_i=\underset{j=1}{\overset{l}{\sum}}a_{i,j}c_{n,j}\bm{e}_j,$ for $0\leq i\leq n-1,$ and 
   \begin{align*}
     (\bm{0},\bm{c}_1,\dots,\bm{c}_{n-1})+\bm{c}_n\Bar{\bm{a}} = \,\, &  \bm{e}_1*[(0,c_{1,1},c_{2,1},\dots,c_{n-1,1})+c_{n,1}\bm{a}^{(1)}] \\ +\,\, & \bm{e}_2*[(0,c_{1,2},c_{2,2},\dots,c_{n-1,2})+c_{n,2}\bm{a}^{(2)}] \\ + \,\,& \cdots \\ + \,\,& \bm{e}_l*[(0,c_{1,l},c_{2,l},\dots,c_{n-1,l})+c_{n,l}\bm{a}^{(l)}].
   \end{align*}
   Then,  $(0, c_{1,i},c_{2,i},\dots,c_{n-1,i})+c_{n,i}\bm{a}^{(i)}\in \mathcal{C}_i$ for  each $1\le i\le l$ if and only if $(\bm{0},\bm{c}_1,\dots,\bm{c}_{n-1}) +\bm{c}_n\Bar{\bm{a}} \in \mathcal{C}.$ Hence, each $\mathcal{C}_i$ is $\bm{a}^{(i)}$-polycyclic over $\mathbb{F}_q$ if and only if $\mathcal{C}$ is 
   $\Bar{\bm{a}}$-polycyclic over $\mathbb{F}_q^l.$ \hfill $\square$
\end{proof}
\textbf{Note}: From Remark \ref{remark2.10PIR}, it is clear that every $\Bar{\bm{a}}$-polycyclic code over $\mathbb{F}_q^l$ is principally generated. Moreover, a generator polynomial can be given as:
\begin{theorem}\label{Generator polynomial of Polycyclic codes}
    Let $\mathcal{C}$ be an $\Bar{\bm{a}}$-polycyclic code over $\mathbb{F}_q^l$ and let 
 $\underset{i=1}{\overset{l}{\bigoplus}}\bm{e}_i\mathcal{C}_i$ be the $\mathbb{F}_q$-
 decomposition of $\mathcal{C}$ with respect to $\mathcal{B}$. If $\mathcal{C}_i$ is $\mathbb{F}_q$-isomorphic to the ideal generated by $\bm{g}^{(i)}(x)$ in $\mathbb{F}_q[x]/\langle x^n-\bm{a}^{(i)}(x)\rangle$, then $\Bar{\bm{g}}(x):=\underset{i=1}{\overset{l}{
    \sum}} \bm{e}_i\bm{g}^{(i)}(x)$ divides $x^n-\Bar{\bm{a}}(x)$ in $\mathbb{F}_q^l[x],$ and the ideal generated by $\Bar{\bm{g}}(x)$ in $\mathbb{F}_q^l[x]/\langle x^n-\Bar{\bm{a}}(x)\rangle$ is isomorphic to $\mathcal{C}$ as an $\mathbb{F}_q^l$-module. Moreover, if $\mathcal{C}=\langle \Bar{\bm{g}}(x) \rangle$, then, $\mathcal{C}_i=\langle \bm{g}^{(i)}(x) \rangle$ for each $1 \leq i \leq l,$ where $\Bar{\bm{g}}(x)=\underset{i=1}{\overset{l}{\sum}}\bm{g}^{(i)}(x).$
\end{theorem}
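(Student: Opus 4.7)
The plan is to leverage the ring isomorphism
\[
\Phi\colon \mathbb{F}_q^l[x]/\langle x^n - \bar{\bm{a}}(x)\rangle \;\cong\; \prod_{i=1}^l \mathbb{F}_q[x]/\langle x^n - \bm{a}^{(i)}(x)\rangle,
\]
induced by the idempotent decomposition $\mathbb{F}_q^l[x] \cong \mathbb{F}_q[x]^l$ together with the factorization $x^n - \bar{\bm{a}}(x) = \sum_{i=1}^l \bm{e}_i(x^n - \bm{a}^{(i)}(x))$ of the modulus. The decisive feature is that, thanks to $\bm{e}_i\bm{e}_j = \delta_{ij}\bm{e}_i$, multiplication of two elements of the form $\sum_i \bm{e}_i p_i(x)$ and $\sum_i \bm{e}_i q_i(x)$ is componentwise; and under $\Phi$ the $\mathbb{F}_q^l$-submodule $\mathcal{C}$ corresponds to $\prod_{i=1}^l \mathcal{C}_i$ by Theorem \ref{linear} combined with Theorem \ref{polycyclic_classification}.

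With $\Phi$ in hand, the divisibility $\bar{\bm{g}}(x) \mid x^n - \bar{\bm{a}}(x)$ is routine. Since each $\mathcal{C}_i = \langle \bm{g}^{(i)}(x)\rangle$ inside $\mathbb{F}_q[x]/\langle x^n - \bm{a}^{(i)}(x)\rangle$, I may choose the representative so that $x^n - \bm{a}^{(i)}(x) = \bm{g}^{(i)}(x)\bm{h}^{(i)}(x)$ in $\mathbb{F}_q[x]$. Setting $\bar{\bm{h}}(x) := \sum_{i=1}^l \bm{e}_i \bm{h}^{(i)}(x)$, orthogonality of the idempotents yields
\[
\bar{\bm{g}}(x)\bar{\bm{h}}(x) \;=\; \sum_{i=1}^l \bm{e}_i\, \bm{g}^{(i)}(x)\bm{h}^{(i)}(x) \;=\; \sum_{i=1}^l \bm{e}_i\bigl(x^n - \bm{a}^{(i)}(x)\bigr) \;=\; x^n - \bar{\bm{a}}(x).
\]

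For the isomorphism $\langle \bar{\bm{g}}(x)\rangle \cong \mathcal{C}$, I would transport through $\Phi$: the image of $\bar{\bm{g}}(x)$ is the tuple $(\bm{g}^{(1)}(x),\ldots,\bm{g}^{(l)}(x))$, whose generated ideal in the product ring is $\prod_{i=1}^l \langle \bm{g}^{(i)}(x)\rangle = \prod_{i=1}^l \mathcal{C}_i$; pulling back through $\Phi$ recovers $\bigoplus_{i=1}^l \bm{e}_i \mathcal{C}_i = \mathcal{C}$. For the last claim, if $\mathcal{C} = \langle \bar{\bm{g}}(x)\rangle$ with unique decomposition $\bar{\bm{g}}(x) = \sum_i \bm{e}_i \bm{g}^{(i)}(x)$, then applying the $i$-th projection (extended to polynomials, which is a ring homomorphism) to the generating relation delivers $\mathcal{C}_i = \langle \bm{g}^{(i)}(x)\rangle$ in the $i$-th quotient.

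The main obstacle, though a modest one, is setting up $\Phi$ carefully enough that the $\mathbb{F}_q^l$-module structure on $\mathcal{C}$ is transported faithfully to the componentwise $\mathbb{F}_q$-module structure on $\prod_i \mathcal{C}_i$; once this bookkeeping is in place, every assertion in the theorem reduces to a componentwise verification using only orthogonality of the idempotents and the known polycyclic structure over $\mathbb{F}_q$.
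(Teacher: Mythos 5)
Your proposal is correct and is essentially the paper's own argument: the divisibility step (choose $\bm{h}^{(i)}$ with $\bm{g}^{(i)}\bm{h}^{(i)}=x^n-\bm{a}^{(i)}$, set $\Bar{\bm{h}}=\sum_i\bm{e}_i\bm{h}^{(i)}$, and multiply using orthogonality of the idempotents) is identical, and your transport through the product-ring isomorphism $\Phi$ is just a repackaging of the paper's direct computation, which writes each codeword as $\sum_i\bm{e}_i\overline{\bm{g}^{(i)}(x)\bm{q}^{(i)}(x)}=\overline{\Bar{\bm{q}}(x)}\,\overline{\Bar{\bm{g}}(x)}$ and handles the final claim via the projections $\tilde\pi_i$.
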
 
\begin{proof}
    Assume the hypothesis of the theorem. Then, there exists $\bm{h}^{(i)}(x)\in \mathbb{F}_q[x]$ such that $\bm{g}^{(i)}(x)\bm{h}^{(i)}(x)=x^n-\bm{a}^{(i)}(x).$ Define, $\Bar{\bm{h}}(x)=\underset{i=1}{\overset{l}{\sum}}\bm{e}_i\bm{h}^{(i)}(x)\in\mathbb{F}_q^l[x].$ Then, $\Bar{\bm{g}}(x)\Bar{\bm{h}}(x)=\underset{i=1}{\overset{l}{\sum}}\bm{e}_i\bm{g}^{(i)}(x)\bm{h}^{(i)}(x)=\underset{i=1}{\overset{l}{\sum}}\bm{e}_i(x^n-\bm{a}^{(i)}(x))=x^n-\Bar{\bm{a}}(x)$ so that  $\Bar{\bm{g}}(x) \mid (x^n-\Bar{\bm{a}}(x))$ over $\mathbb{F}_q^l.$  Next, let  $\overline{\Bar{\bm{c}}(x)}\in\mathcal{C}.$ Since $\mathcal{C}=\underset{i=1}{\overset{l}{\bigoplus}}\bm{e}_i\mathcal{C}_i, \overline{\Bar{\bm{c}}(x)}=\underset{i=1}{\overset{l}{\sum}}\bm{e}_i \overline{\bm{g}^{(i)}(x) \bm{q}^{(i)}(x)}$ for $\bm{q}^{(i)}(x)\in \mathbb{F}_q[x].$ Observe that $\overline{\Bar{\bm{c}}(x)}=\overline{\Bar{\bm{q}}(x)}\, \overline{\Bar{\bm{g}}(x)}$ for $\Bar{\bm{q}}(x)=\left(\underset{i=1}{\overset{l}{\sum}}\bm{e}_i\bm{q}^{(i)}(x)\right)\in\mathbb{F}_q^l[x].$ The converse follows from the fact that, for each $1 \leq i \leq l$, $\bm{e}_i\mathcal{C}_i$ is embedded in $\mathcal{C}.$  \hfill $\square$
    \end{proof}
    \begin{corollary}
        Let $\Bar{\bm{a}}=(\bm{a}_0,\bm{a}_1,\dots,\bm{a}_{n-1})\in(\mathbb{F}_q^l)^n,$ and let $x^n-\Bar{\bm{a}}(x)=\underset{i=1}{\overset{l}{\sum}}\bm{e}_i(x^n-\bm{a}^{(i)}(x)).$ Suppose $x^n-\bm{a}^{(i)}(x)=f_{i,1}(x)^{n_{i,1}}\dots f_{i,{k_i}}(x)^{n_{i,{k_i}}}$ is the prime factorization of $x^n-\bm{a}^{(i)}(x) $ over $\mathbb{F}_q$ for $1\leq i\leq l.$ Then, there are exactly  $\underset{i=1}{\overset{l}{\prod}}\underset{j=1}{\overset{k_i}{\prod}}(n_{i,j}+1)$ distinct $\Bar{\bm{a}}$-polycyclic codes over $\mathbb{F}_q^l$. 
    \end{corollary}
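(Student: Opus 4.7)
The plan is to reduce the count to counting ideals of each component $\mathbb{F}_q[x]/\langle x^n-\bm{a}^{(i)}(x)\rangle$ via the $\mathbb{F}_q$-decomposition, and then invoke the well-known correspondence between ideals in such a quotient and monic divisors of $x^n-\bm{a}^{(i)}(x)$.

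First I would invoke Theorem \ref{polycyclic_classification} together with the uniqueness part of Theorem \ref{linear}(b). Any $\Bar{\bm{a}}$-polycyclic code $\mathcal{C}$ over $\mathbb{F}_q^l$ admits a unique $\mathbb{F}_q$-decomposition $\mathcal{C}=\bigoplus_{i=1}^l \bm{e}_i \mathcal{C}_i$ in which each $\mathcal{C}_i$ is an $\bm{a}^{(i)}$-polycyclic code over $\mathbb{F}_q$, and conversely every such tuple $(\mathcal{C}_1,\dots,\mathcal{C}_l)$ of $\bm{a}^{(i)}$-polycyclic codes determines an $\Bar{\bm{a}}$-polycyclic code. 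Thus the map $\mathcal{C}\longmapsto (\mathcal{C}_1,\dots,\mathcal{C}_l)$ is a bijection between the set of $\Bar{\bm{a}}$-polycyclic codes over $\mathbb{F}_q^l$ and the Cartesian product of the sets of $\bm{a}^{(i)}$-polycyclic codes over $\mathbb{F}_q$, so the total count is the product of the component counts.

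Next, for each fixed $i$, an $\bm{a}^{(i)}$-polycyclic code over $\mathbb{F}_q$ is by definition an ideal of $\mathbb{F}_q[x]/\langle x^n-\bm{a}^{(i)}(x)\rangle$. Since $\mathbb{F}_q[x]$ is a principal ideal domain, the ideals of this quotient are in bijection with the monic divisors of $x^n-\bm{a}^{(i)}(x)$ in $\mathbb{F}_q[x]$ (each ideal is generated by a unique monic divisor of $x^n-\bm{a}^{(i)}(x)$). Given the prime factorization $x^n-\bm{a}^{(i)}(x)=f_{i,1}(x)^{n_{i,1}}\cdots f_{i,k_i}(x)^{n_{i,k_i}}$, a monic divisor is specified by choosing, for each $j$, an exponent in $\{0,1,\dots,n_{i,j}\}$, giving exactly $\prod_{j=1}^{k_i}(n_{i,j}+1)$ ideals.

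Combining the two steps, the number of $\Bar{\bm{a}}$-polycyclic codes over $\mathbb{F}_q^l$ equals $\prod_{i=1}^l \prod_{j=1}^{k_i}(n_{i,j}+1)$, as claimed. There is no real obstacle here beyond being careful about the two bijections (the decomposition bijection and the divisor bijection); the only minor point to verify is that an $\bm{a}^{(i)}$-polycyclic code is genuinely the same object as an ideal of the quotient ring, which is the standard identification via (\ref{tuple to polynomial}) already used throughout the paper.
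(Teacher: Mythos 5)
Your proposal is correct and follows essentially the same route as the paper: decompose each $\Bar{\bm{a}}$-polycyclic code uniquely as $\bigoplus_{i=1}^l\bm{e}_i\mathcal{C}_i$ with each $\mathcal{C}_i$ an $\bm{a}^{(i)}$-polycyclic code (i.e., an ideal of $\mathbb{F}_q[x]/\langle x^n-\bm{a}^{(i)}(x)\rangle$), and count ideals in each factor via monic divisors of $x^n-\bm{a}^{(i)}(x)$. Your write-up is in fact slightly more careful than the paper's, since it makes the bijection with tuples and the divisor--ideal correspondence explicit.
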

    \begin{proof}
    Let $\Bar{\bm{a}}(x)\in \mathbb{F}_q^l[x]$ and let $x^n-\Bar{\bm{a}}(x)=\underset{i=1}{\overset{l}{\sum}}\bm{e}_i(x^n-\bm{a}^{(i)}(x)).$  Since every $\Bar{\bm{a}}$-polycyclic code $\mathcal{C}$ is an ideal of $\mathbb{F}_q^l[x]/\langle x^n-\Bar{\bm{a}}(x)\rangle$ and can be uniquely decomposed as $\mathcal{C}=\underset{i=1}{\overset{l}{\bigoplus}}\bm{e}_i\mathcal{C}_i,$  where each $\mathcal{C}_i$ is an $\mathbb{F}_q$-linear subspace of $\mathbb{F}_q^n$ and a principal ideal of $\mathbb{F}_q[x]/\langle x^n-\bm{a}^{(i)}(x) \rangle.$ The number of distinct $\Bar{\bm{a}}$-polycyclic codes over $\mathbb{F}_q^l$ is equal to the product of the  number of distinct $\bm{a}^{(i)}$-polycyclic codes over $\mathbb{F}_q$ for $1\le i\le l.$ The number of distinct $\bm{a}^{(i)}$-polycyclic codes over $\mathbb{F}q$ is equal to the number of divisors of $x^n-\bm{a}^{(i)}(x).$ \hfill $\square$
    \end{proof} 
   
    \begin{corollary}   
    Let $\mathcal{C}$ and $\widetilde{\mathcal{C}}$ be $\Bar{\bm{a}}$-polycyclic codes and $\Bar{\bm{g}}(x), \widetilde{\Bar{\bm{g}}(x)}$ (respectively) are their generator polynomials over $\mathbb{F}_q^l$. Then, $\mathcal{C} \subset \widetilde{\mathcal{C}}$ if and only if $\widetilde{\Bar{\bm{g}}(x)} $ divides $ \Bar{\bm{g}}(x)$ over $\mathbb{F}_q^l$.
\end{corollary}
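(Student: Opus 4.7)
The plan is to reduce the statement to its field-level analogue via the $\mathbb{F}_q$-decomposition and then translate divisibility back up to $\mathbb{F}_q^l[x]$. First, by Theorem \ref{linear} and Theorem \ref{Generator polynomial of Polycyclic codes}, I would write $\mathcal{C} = \bigoplus_{i=1}^{l} \bm{e}_i \mathcal{C}_i$ and $\widetilde{\mathcal{C}} = \bigoplus_{i=1}^{l} \bm{e}_i \widetilde{\mathcal{C}}_i$, with $\mathcal{C}_i = \langle \bm{g}^{(i)}(x)\rangle$ and $\widetilde{\mathcal{C}}_i = \langle \widetilde{\bm{g}^{(i)}}(x) \rangle$ inside $\mathbb{F}_q[x]/\langle x^n - \bm{a}^{(i)}(x)\rangle$, where $\Bar{\bm{g}}(x) = \sum_i \bm{e}_i \bm{g}^{(i)}(x)$ and $\widetilde{\Bar{\bm{g}}(x)} = \sum_i \bm{e}_i \widetilde{\bm{g}^{(i)}}(x)$.

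Next, I would argue that $\mathcal{C} \subset \widetilde{\mathcal{C}}$ if and only if $\mathcal{C}_i \subset \widetilde{\mathcal{C}}_i$ for every $i$. The forward direction is obtained by applying the projection $\tilde{\pi}_i$ to both sides (using that $\tilde{\pi}_i$ is $\mathbb{F}_q$-linear and sends $\bm{e}_j \mathcal{C}_j$ to $0$ for $j \neq i$, and to $\mathcal{C}_i$ for $j=i$); the reverse follows by summing the componentwise inclusions over $i$ and using the directness of the decomposition.

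For the base-field step, I would invoke the standard fact that for polycyclic codes over $\mathbb{F}_q$, the inclusion $\langle \bm{g}^{(i)}(x)\rangle \subset \langle \widetilde{\bm{g}^{(i)}}(x)\rangle$ inside $\mathbb{F}_q[x]/\langle x^n - \bm{a}^{(i)}(x)\rangle$ is equivalent to $\widetilde{\bm{g}^{(i)}}(x) \mid \bm{g}^{(i)}(x)$ in $\mathbb{F}_q[x]$; this is immediate since both generator polynomials are divisors of $x^n - \bm{a}^{(i)}(x)$ in the PID $\mathbb{F}_q[x]$, as supplied by Theorem \ref{Generator polynomial of Polycyclic codes}.

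Finally, I would transfer componentwise divisibility back to divisibility in $\mathbb{F}_q^l[x]$ using the ring isomorphism $\mathbb{F}_q^l[x] \cong \prod_{i=1}^l \mathbb{F}_q[x]$ recorded in Remark \ref{remark2.10PIR}. Concretely, the orthogonality relations $\bm{e}_i \bm{e}_j = \delta_{i,j}\bm{e}_i$ with $\sum_i \bm{e}_i = \bm{1}$ give the multiplication rule $\bigl(\sum_i \bm{e}_i \bm{f}^{(i)}(x)\bigr)\bigl(\sum_i \bm{e}_i \bm{h}^{(i)}(x)\bigr) = \sum_i \bm{e}_i \bm{f}^{(i)}(x)\bm{h}^{(i)}(x)$, so $\widetilde{\Bar{\bm{g}}(x)} \mid \Bar{\bm{g}}(x)$ in $\mathbb{F}_q^l[x]$ holds iff $\widetilde{\bm{g}^{(i)}}(x) \mid \bm{g}^{(i)}(x)$ in $\mathbb{F}_q[x]$ for every $i$. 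Chaining the three equivalences closes the argument. The only real obstacle is bookkeeping: keeping distinct the three flavors of divisibility in play (in the ambient $\mathbb{F}_q^l[x]$, componentwise in $\mathbb{F}_q[x]$, and inside the respective quotient rings where the generator polynomials actually reside), but no new idea beyond the decomposition machinery already developed is required.
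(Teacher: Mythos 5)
Your proposal is correct, but it takes a different route from the paper. The paper's proof is a direct four-step chain of equivalences carried out entirely inside $\mathbb{F}_q^l[x]/\langle x^n-\Bar{\bm{a}}(x)\rangle$: containment of ideals $\iff$ $\Bar{\bm{g}}(x)\in\langle\widetilde{\Bar{\bm{g}}(x)}\rangle$ $\iff$ $\Bar{\bm{g}}(x)=\widetilde{\Bar{\bm{g}}(x)}\Bar{\bm{b}}(x)$ $\iff$ divisibility in $\mathbb{F}_q^l[x]$, with no appeal to the $\mathbb{F}_q$-decomposition at all. You instead descend to the $l$ component codes via $\tilde{\pi}_i$, invoke the field-level equivalence between ideal containment and divisibility in $\mathbb{F}_q[x]$, and lift back using the componentwise multiplication rule. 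Both arguments ultimately hinge on the same key fact --- that the generator polynomials divide $x^n-\Bar{\bm{a}}(x)$ (componentwise, $x^n-\bm{a}^{(i)}(x)$), which is what lets one upgrade a congruence modulo $x^n-\Bar{\bm{a}}(x)$ to genuine divisibility in the polynomial ring. The paper's version is shorter and leaves this upgrade implicit; yours is longer but makes the dependence on Theorem \ref{Generator polynomial of Polycyclic codes} explicit and fits the paper's overall philosophy of reducing every question over $\mathbb{F}_q^l$ to $l$ questions over $\mathbb{F}_q$. One small caution either way: the ``only if'' direction of the divisibility claim genuinely requires taking the distinguished generators that divide $x^n-\Bar{\bm{a}}(x)$ (as both you and the paper tacitly do), since an arbitrary unit multiple of a generator in the quotient ring need not be divisible by $\widetilde{\Bar{\bm{g}}(x)}$ in $\mathbb{F}_q^l[x]$.
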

\begin{proof}
$\mathcal{C} \subset \widetilde{\mathcal{C}} 
\iff
\langle \Bar{\bm{g}}(x) \rangle \subset \langle \widetilde{\Bar{\bm{g}}(x)}\rangle \iff
\Bar{\bm{g}}(x)\in \langle \widetilde{\Bar{\bm{g}}(x)}\rangle
\iff
\Bar{\bm{g}}(x)=\widetilde{\Bar{\bm{g}}(x)}\Bar{\bm{b}}(x)$ for some $\Bar{\bm{b}}(x)\in \mathbb{F}_q^l[x]
\iff
\widetilde{\Bar{\bm{g}}}(x) $ divides $\Bar{\bm{g}}(x)$ over $\mathbb{F}_q^l.$ \hfill $\square$
\end{proof}

 \begin{corollary}
    Let $\mathcal{C}=\langle \Bar{\bm{g}}(x)\rangle$ be an $\Bar{\bm{a}}$-polycyclic code over $\mathbb{F}_q^l$ of length $n.$ If $\Bar{\bm{g}}(x)$ is monic then, $\mathcal{C}$ is a free $\mathbb{F}_q^l$-submodule of $(\mathbb{F}_q^l)^n$ and rank$(\mathcal{C})=n-\deg \Bar{\bm{g}}(x)$.
\end{corollary}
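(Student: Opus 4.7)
The plan is to exhibit the natural candidate basis
$\mathcal{B}_{\mathcal{C}} = \{\Bar{\bm{g}}(x),\, x\Bar{\bm{g}}(x),\, \dots,\, x^{n-d-1}\Bar{\bm{g}}(x)\}$
for $\mathcal{C}$ as an $\mathbb{F}_q^l$-module, where $d = \deg\Bar{\bm{g}}(x)$, and to show it is free. This mimics the classical free-basis argument for polycyclic codes over a field, with the caveat that $\mathbb{F}_q^l$ is not a domain. What rescues the argument is the monic hypothesis: the leading coefficient of $\Bar{\bm{g}}(x)$ equals $\bm{1}=\sum_i \bm{e}_i$, and comparing this with $\Bar{\bm{g}}(x)=\sum_i \bm{e}_i \bm{g}^{(i)}(x)$ and using linear independence of $\{\bm{e}_1,\dots,\bm{e}_l\}$ forces each $\bm{g}^{(i)}(x)$ to be monic of degree exactly $d$. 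In particular, both $\Bar{\bm{g}}(x)$ and its cofactor $\Bar{\bm{h}}(x)$ in the factorization $x^n-\Bar{\bm{a}}(x)=\Bar{\bm{g}}(x)\Bar{\bm{h}}(x)$ (provided by Theorem \ref{Generator polynomial of Polycyclic codes}) are monic, so a polynomial division algorithm by either of them is available in $\mathbb{F}_q^l[x]$ even though the coefficient ring has zero divisors.

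For spanning, I would take an arbitrary representative $\Bar{\bm{q}}(x)\Bar{\bm{g}}(x)$ of a codeword, divide $\Bar{\bm{q}}(x)$ by the monic polynomial $\Bar{\bm{h}}(x)$ in $\mathbb{F}_q^l[x]$ to write $\Bar{\bm{q}}(x)=\Bar{\bm{s}}(x)\Bar{\bm{h}}(x)+\Bar{\bm{r}}(x)$ with $\deg\Bar{\bm{r}}(x)<n-d$, and observe that modulo $x^n-\Bar{\bm{a}}(x)=\Bar{\bm{g}}(x)\Bar{\bm{h}}(x)$ the codeword reduces to $\Bar{\bm{r}}(x)\Bar{\bm{g}}(x)$, which is manifestly an $\mathbb{F}_q^l$-linear combination of $\mathcal{B}_{\mathcal{C}}$. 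For linear independence, a relation $\sum_{j=0}^{n-d-1}\bm{c}_j x^j \Bar{\bm{g}}(x) \equiv 0 \pmod{x^n-\Bar{\bm{a}}(x)}$ has left-hand side of degree at most $n-1$, so it must be the zero polynomial of $\mathbb{F}_q^l[x]$. Setting $P(x)=\sum_j \bm{c}_j x^j$ and equating the leading coefficient of $P(x)\Bar{\bm{g}}(x)$ to $\bm{0}$, the monic leading coefficient $\bm{1}$ of $\Bar{\bm{g}}(x)$ forces $P(x)=0$, hence every $\bm{c}_j=0$. This yields $\operatorname{rank}(\mathcal{C})=n-d$ and the freeness.

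The only real obstacle is the bookkeeping surrounding the non-integral-domain nature of $\mathbb{F}_q^l$, which could invalidate familiar degree identities such as $\deg(AB)=\deg A+\deg B$ or the cancellation step in the independence argument. Both instances where this matters are saved by the monic hypothesis, so I expect the proof to be short and essentially formal. An alternative route, if preferred, is to invoke the $\mathbb{F}_q$-decomposition $\mathcal{C}=\bigoplus_i \bm{e}_i \mathcal{C}_i$: each $\mathcal{C}_i=\langle \bm{g}^{(i)}(x)\rangle$ is free over $\mathbb{F}_q$ of dimension $n-d$ by the field-case theory, and one checks that the map $(\mathbb{F}_q^l)^{n-d}\to \mathcal{C}$ given on the $j$-th standard basis vector by $x^j\Bar{\bm{g}}(x)$ is an $\mathbb{F}_q^l$-isomorphism by projecting to each coordinate and using freeness of each $\mathcal{C}_i$.
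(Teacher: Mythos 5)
Your proposal is correct and follows essentially the same route as the paper: both exhibit $\{\Bar{\bm{g}}(x), x\Bar{\bm{g}}(x),\dots,x^{n-\deg\Bar{\bm{g}}(x)-1}\Bar{\bm{g}}(x)\}$ as a free basis, using the monic hypothesis to control degrees in the non-domain $\mathbb{F}_q^l[x]$. The only cosmetic differences are that you establish spanning by dividing the multiplier by the monic check polynomial $\Bar{\bm{h}}(x)$ (the paper instead reduces the codeword modulo $x^n-\Bar{\bm{a}}(x)$ and factors out $\Bar{\bm{g}}(x)$), and you derive independence from the monic leading coefficient where the paper appeals to $\Bar{\bm{g}}(0)$ being a unit; both variants are valid.
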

\begin{proof}
    Consider the set $\{\Bar{\bm{g}}(x)\Bar{\bm{h}}(x):\Bar{\bm{h}}(x)\in\mathbb{F}_q^l[x]\}.$ Note that, if $\deg \Bar{\bm{h}}_i(x) \le n-\deg \Bar{\bm{g}}(x)-1$ for $1\le i \le 2$, $\Bar{\bm{g}}(x)\Bar{\bm{h}}_1(x) \not \equiv \Bar{\bm{g}}(x)\Bar{\bm{h}}_2(x) \,  (\textnormal{mod}\, \, x^n-\Bar{\bm{a}}(x)).$ In fact, for any codeword $\Bar{\bm{g}}(x)\Bar{\bm{h}}(x)\in \mathcal{C},$ we write, $\Bar{\bm{g}}(x)\Bar{\bm{h}}(x)=(x^n-\Bar{\bm{a}}(x))\Bar{\bm{q}}(x)+\Bar{\bm{r}}(x),$ where $\deg \Bar{\bm{r}}(x) \le n-1.$ So, we have  $\Bar{\bm{r}}(x)=\Bar{\bm{g}}(x)\Bar{\bm{h}}(x)-(x^n-\Bar{\bm{a}}(x))\Bar{\bm{q}}(x).$ Hence, $\Bar{\bm{g}}(x) \mid \Bar{\bm{r}}(x)$ and $\Bar{\bm{r}}(x)=\Bar{\bm{g}}(x)\Bar{\bm{l}}(x).$ Since $\Bar{\bm{g}}(x)$ is monic, $\deg \Bar{\bm{l}}(x) \le n-\deg \Bar{\bm{g}}(x)-1.$ So, $\{\Bar{\bm{g}}(x),x\Bar{\bm{g}}(x),\dots,x^{n-\deg \Bar{\bm{g}}(x)-1}\Bar{\bm{g}}(x)\}$ spans $\mathcal{C}.$ Also, as $\Bar{\bm{g}}(0)$ is unit, $\{\Bar{\bm{g}}(x),x\Bar{\bm{g}}(x),\dots,x^{n-\deg \Bar{\bm{g}}(x)-1}\Bar{\bm{g}}(x)\}$ is linearly independent over $\mathbb{F}_q^l.$ \hfill $\square$
\end{proof}
\begin{corollary}
    Let $\mathcal{C}=\langle \Bar{\bm{g}}(x) \rangle$ be an $\Bar{\bm{a}}$-polycyclic code  over $\mathbb{F}_q^l$. Then $\Bar{\bm{g}}(x)$ is monic if and only if  $\deg(\bm{g}^{(i)}(x))=\deg(\bm{g}^{(j)}(x))$ for $1\le i, j\le l,$ where $\Bar{\bm{g}}(x)=\underset{i=1}{\overset{l}{\sum}}\bm{g}^{(i)}(x)$.
\end{corollary}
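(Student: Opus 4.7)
The plan is to compute the leading coefficient of $\Bar{\bm{g}}(x)$ in terms of the component polynomials $\bm{g}^{(i)}(x)$ and then exploit the linear independence of the basis $\mathcal{B}=\{\bm{e}_1,\dots,\bm{e}_l\}$ over $\mathbb{F}_q$. First, since each $\mathcal{C}_i$ is an ideal of the principal ideal domain $\mathbb{F}_q[x]/\langle x^n-\bm{a}^{(i)}(x)\rangle$, we follow the standard convention (used implicitly in Theorem~\ref{Generator polynomial of Polycyclic codes}) of taking each $\bm{g}^{(i)}(x)$ to be the monic generator of $\mathcal{C}_i$. Set $d_i=\deg \bm{g}^{(i)}(x)$ for $1\le i\le l$ and $d=\max_i d_i$.

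Next, I would read off the coefficients of $\Bar{\bm{g}}(x)=\sum_{i=1}^l \bm{e}_i\bm{g}^{(i)}(x)$ in $\mathbb{F}_q^l[x]$. Writing $\bm{g}^{(i)}(x)=\sum_{t=0}^{d_i} g^{(i)}_t x^t$ with $g^{(i)}_{d_i}=1$, the coefficient of $x^t$ in $\Bar{\bm{g}}(x)$ is $\sum_{i: d_i\ge t} \bm{e}_i g^{(i)}_t \in \mathbb{F}_q^l$. In particular, all coefficients of $x^t$ with $t>d$ vanish, while the coefficient of $x^d$ equals
\[
\sum_{\,i:\,d_i=d} \bm{e}_i,
\]
since every contributing $g^{(i)}_d$ is the leading coefficient $1$. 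Thus $\deg \Bar{\bm{g}}(x)=d$ and its leading coefficient is $\sum_{i:d_i=d}\bm{e}_i$.

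For the forward implication, suppose $\Bar{\bm{g}}(x)$ is monic, i.e.\ its leading coefficient equals $\bm{1}=\sum_{i=1}^l \bm{e}_i$. Then
\[
\sum_{i:\,d_i=d}\bm{e}_i \;=\; \sum_{i=1}^l \bm{e}_i.
\]
Since $\mathcal{B}$ is an $\mathbb{F}_q$-basis of $\mathbb{F}_q^l$, comparing coefficients forces $\{i:d_i=d\}=\{1,\dots,l\}$, so all $d_i$ coincide. For the converse, if $d_1=d_2=\dots=d_l=d$, then the computation above gives leading coefficient $\sum_{i=1}^l\bm{e}_i=\bm{1}$, so $\Bar{\bm{g}}(x)$ is monic.

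The only mild subtlety is the choice-of-generator convention: one must note that each $\bm{g}^{(i)}(x)$ is taken monic (otherwise the leading coefficient would be $\sum_{i:d_i=d} \bm{e}_i g^{(i)}_{d_i}$, and monicity of $\Bar{\bm{g}}(x)$ would force both equality of top-degree indices and each $g^{(i)}_{d_i}=1$). This normalization is consistent with the statement of Theorem~\ref{Generator polynomial of Polycyclic codes}, so no additional work is required; the rest is just linear independence of the idempotent basis.
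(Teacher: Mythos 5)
Your proof is correct and follows essentially the same route as the paper: both arguments reduce to identifying the leading coefficient of $\Bar{\bm{g}}(x)$ as $\sum_{i:\,\deg \bm{g}^{(i)}=d}\bm{e}_i$ (with each $\bm{g}^{(i)}$ normalized to be monic) and observing that this can equal $\bm{1}=\sum_{i=1}^l\bm{e}_i$ only when every index contributes --- the paper phrases this via the fact that a sum of fewer than $l$ of the idempotents is a zero divisor, while you invoke linear independence of $\mathcal{B}$, which is the same observation. Your explicit remark about the monic normalization of the $\bm{g}^{(i)}(x)$ is a point the paper leaves implicit, so nothing is missing.
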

\begin{proof}
    If each $\bm{g}^{(i)}(x)$ has same degree, then $\underset{i=1}{\overset{l}{\sum}}\bm{e}_i=1\implies \Bar{\bm{g}}(x)$ is monic. Conversely, sum of $l-1$ or less elements in $\{\bm{e}_1,\dots,\bm{e}_l\}$ is a zero divisor in $\mathbb{F}_q^l$. \hfill $\square$
\end{proof}

In 2016, A. Alahmadi (\cite{alahmadi2016duality}) introduced following bilinear form.
\begin{definition}\label{bilinearform} Let $\mathbf{f}(x) \in \mathbb{F}_q[x]$.
   Define the $\mathbb{F}_q$-valued bilinear form on $\mathbb{F}_q[x]/\langle \mathbf{f}(x)\rangle$ by
    \begin{equation}
    \langle \overline{\mathbf{h}_1(x)},\overline{\mathbf{h}_2(x)} \rangle_\mathbf{f}=\mathbf{r}(0)    
    \end{equation}
    for $\mathbf{h}_1(x), \mathbf{h}_2(x)\in \mathbb{F}_q[x],$ where $\mathbf{r}(x)$ is the remainder when $\mathbf{h}_1(x)\mathbf{h}_2(x)$ is divided by $\mathbf{f}(x)$ in $\mathbb{F}_q[x].$ 
\end{definition}
This bilinear form can be extended and defined over $\frac{\mathbb{F}_q^l[x]}{\langle x^n-\Bar{\bm{a}}(x) \rangle}$ in the following manner.
Define the $\mathbb{F}_q^l$-valued bilinear form on the quotient $\frac{\mathbb{F}_q^l[x]}{\langle x^n-\Bar{\bm{a}}(x) \rangle }$ by
    \begin{equation}
    \langle \overline{\Bar{\bm{h}}_1(x)},\overline{\Bar{\bm{h}}_2(x)} \rangle_{\Bar{\bm{a}}}=\Bar{\bm{r}}(0)    
    \end{equation}
    for $\Bar{\bm{h}}_1(x), \Bar{\bm{h}}_2(x)\in \mathbb{F}_q^l[x],$ where $\Bar{\bm{r}}(x)$ is the remainder when $\Bar{\bm{h}}_1(x)\Bar{\bm{h}}_2(x) $ is divided by $x^n-\Bar{\bm{a}}(x)$ in $ \mathbb{F}_q^l[x].$ 

\begin{lemma} If $\Bar{\bm{a}}(0)=\bm{a}_0$ is a unit in $\mathbb{F}_q^l$, then the  bilinear form in Definition \ref{bilinearform} is non-degenerate. 
\end{lemma}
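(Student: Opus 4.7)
The plan is to reduce the non-degeneracy of the $\mathbb{F}_q^l$-valued form to the non-degeneracy of $l$ ordinary $\mathbb{F}_q$-valued forms via the $\mathbb{F}_q$-decomposition, and then settle the single-field case by an explicit choice of pairing partner.

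First, I would show that the bilinear form respects the $\mathbb{F}_q$-decomposition. Write $\Bar{\bm{h}}_j(x) = \sum_{i=1}^l \bm{e}_i \bm{h}_j^{(i)}(x)$ for $j=1,2$ and $x^n - \Bar{\bm{a}}(x) = \sum_{i=1}^l \bm{e}_i(x^n-\bm{a}^{(i)}(x))$. Because the $\bm{e}_i$ are orthogonal idempotents with $\sum \bm{e}_i = \bm{1}$, the product $\Bar{\bm{h}}_1(x)\Bar{\bm{h}}_2(x) = \sum_i \bm{e}_i \bm{h}_1^{(i)}(x)\bm{h}_2^{(i)}(x)$. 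Since each $x^n - \bm{a}^{(i)}(x)$ is monic, divide componentwise in $\mathbb{F}_q[x]$ to get $\bm{h}_1^{(i)}\bm{h}_2^{(i)} = \bm{q}^{(i)}(x^n-\bm{a}^{(i)}) + \bm{r}^{(i)}$ with $\deg \bm{r}^{(i)} < n$. Uniqueness of the division algorithm then identifies the $\mathbb{F}_q^l[x]$-remainder as $\Bar{\bm{r}}(x) = \sum_i \bm{e}_i \bm{r}^{(i)}(x)$, so
\[
\langle \overline{\Bar{\bm{h}}_1(x)}, \overline{\Bar{\bm{h}}_2(x)}\rangle_{\Bar{\bm{a}}} = \Bar{\bm{r}}(0) = \sum_{i=1}^l \bm{e}_i\, \langle\overline{\bm{h}_1^{(i)}(x)}, \overline{\bm{h}_2^{(i)}(x)}\rangle_{x^n-\bm{a}^{(i)}}.
\]

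Second, I would translate non-degeneracy into the component statement. Suppose $\overline{\Bar{\bm{h}}_1(x)}$ pairs to $\bm{0}$ with every $\overline{\Bar{\bm{h}}_2(x)}$. Plugging in $\Bar{\bm{h}}_2 = \bm{e}_i \bm{h}_2^{(i)}$ for an arbitrary $\bm{h}_2^{(i)} \in \mathbb{F}_q[x]$ picks out the $i$-th component of the sum above, so every $\overline{\bm{h}_1^{(i)}(x)}$ must pair trivially against $\mathbb{F}_q[x]/\langle x^n-\bm{a}^{(i)}(x)\rangle$ under the $\mathbb{F}_q$-valued form of Definition \ref{bilinearform}. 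Thus it suffices to prove each $\mathbb{F}_q$-valued form is non-degenerate, with $\mathbf{f}(x) = x^n-\bm{a}^{(i)}(x)$. The hypothesis that $\Bar{\bm{a}}(0)$ is a unit in $\mathbb{F}_q^l$ means every component $a_{0,i} \in \mathbb{F}_q^\times$; in particular $\mathbf{f}(0) = -a_{0,i} \neq 0$.

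Finally, I would handle the single-field case directly. Take a nonzero $\overline{\mathbf{h}_1(x)}$ with canonical representative $\mathbf{h}_1(x) = h_0 + h_1 x + \cdots + h_{n-1}x^{n-1}$ of degree less than $n$, and let $k$ be the largest index with $h_k \neq 0$. If $k=0$, then $\langle \overline{\mathbf{h}_1(x)}, \overline{1}\rangle_\mathbf{f} = h_0 \neq 0$. If $k \geq 1$, pair against $\overline{x^{n-k}}$: the product $\mathbf{h}_1(x)\, x^{n-k} = h_0 x^{n-k} + \cdots + h_{k-1}x^{n-1} + h_k x^n$ has only the term $h_k x^n$ of degree $\geq n$, which reduces to $h_k \bm{a}^{(i)}(x)$ modulo $\mathbf{f}(x)$, so the remainder has constant term $h_k \cdot a_{0,i} \neq 0$. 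This exhibits a nonzero pairing, so each component form is non-degenerate and, by the first step, so is the $\mathbb{F}_q^l$-valued form.

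The main obstacle is the first step: justifying that the quotient-and-remainder decomposition in $\mathbb{F}_q^l[x]$ is compatible with the coordinate-wise decomposition in each $\mathbb{F}_q[x]$-factor. Once the uniqueness of division by a monic polynomial over the product ring is invoked (which is routine since each $x^n - \bm{a}^{(i)}(x)$ is monic and division is performed independently in each idempotent slot), the remainder of the proof is a short explicit construction. The single-field argument itself is essentially an observation that multiplying by $x^{n-k}$ ``pushes'' the leading nonzero coefficient into $x^n$ and the unit hypothesis $\mathbf{f}(0) \neq 0$ guarantees this contribution survives in the constant term.
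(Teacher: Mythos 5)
Your proof is correct, but it takes a different route from the paper's. The paper argues directly over $\mathbb{F}_q^l$ without any decomposition: assuming $\langle \overline{\Bar{\bm{f}}(x)},\overline{\Bar{\bm{h}}(x)}\rangle_{\Bar{\bm{a}}}=\bm{0}$ for all $\Bar{\bm{f}}$, it tests against $1, x, x^2,\dots,x^{n-1}$ and kills the coefficients of $\Bar{\bm{h}}$ one at a time by induction, using only that $\bm{a}_0$ is a unit (testing against $x^{n-i}$ yields $\bm{a}_0\bm{h}_i=\bm{0}$ once the lower coefficients are known to vanish). You instead first prove that the form is compatible with the $\mathbb{F}_q$-decomposition --- which requires the (correctly supplied) observation that division by the monic polynomial $x^n-\Bar{\bm{a}}(x)$ is performed slot-by-slot, so the remainder decomposes as $\sum_i\bm{e}_i\bm{r}^{(i)}(x)$ --- and then settle each $\mathbb{F}_q$-component with a one-shot argument: pair the top nonzero coefficient $h_k$ against $x^{n-k}$ so that exactly one term reaches degree $n$ and contributes $h_k a_{0,i}\ne 0$ to the constant term. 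Your single-field step is arguably cleaner than the paper's induction (which, as written, needs the coefficients eliminated in the right order so that no higher-degree terms require iterated reduction and pollute the constant term), and your compatibility formula $\langle\cdot,\cdot\rangle_{\Bar{\bm{a}}}=\sum_i\bm{e}_i\langle\cdot,\cdot\rangle_{x^n-\bm{a}^{(i)}}$ is a reusable fact consonant with the paper's overall $\mathbb{F}_q$-decomposition philosophy. On the other hand, the paper's direct argument is shorter, never uses the product structure, and therefore works verbatim over any commutative base ring in which $\bm{a}_0$ is a unit; your reduction step is, strictly speaking, unnecessary overhead for this particular lemma, though everything in it is sound.
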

\begin{proof}Let $\Bar{\bm{h}}(x)\in\mathbb{F}_q^l[x]$ be such that $\langle \overline{\Bar{\bm{f}}(x)}, \overline{\Bar{\bm{h}}(x)} \rangle_{\Bar{\bm{a}}}=0\,  \forall \, \Bar{\bm{f}}(x)\in \mathbb{F}_q^l[x].$ Assume, without loss of generality, $\deg(\Bar{\bm{h}}(x))\le n-1;$ and set $\Bar{\bm{h}}(x)=\bm{h}_0+\bm{h}_1x+\dots+\bm{h}_{n-1}x^{n-1}$.  Putting $\Bar{\bm{f}}(x)=1$, we get $\bm{h}_0=\Bar{\bm{h}}(0)=\bm{0}.$ Inductively, we get  for $1\le i\le n-1,$ putting $\Bar{\bm{f}}(x)=x^{n-i}$, that $\bm{a}_0\bm{h}_{i}=\bm{0}$ and hence $\bm{h}_{i}=\bm{0}$ as $\bm{a}_0$ is a unit so that $\Bar{\bm{h}}(x)=\Bar{\bm{0}}.$ \hfill $\square$
\end{proof}
However, the converse is false.
\begin{example}
    For $\Bar{\bm{a}}(x)=\bm{0}$ and $n=1$, in $\mathbb{F}_q^l[x]/\langle x \rangle$ the above annihilator product reduces to the product in the ring $\mathbb{F}_q^l,$ and hence it becomes a non-degenerate bilinear form. 
\end{example}
\begin{definition} The bilinear form below the Definition \ref{bilinearform} is called {\it annihilator product.} 
    We denote the dual of $\mathcal{C}$ with respect to the annihilator product by $\mathcal{C}^\circ,$ and call it the {\it annihilator dual.} Symbolically, $\mathcal{C}^\circ:=\{\overline{\Bar{\bm{f}}(x)}\in\frac{\mathbb{F}_q^l[x]}{\langle x^n-\Bar{\bm{a}}(x)\rangle}|\,\langle \overline{\Bar{\bm{f}}(x)},\overline{\Bar{\bm{h}}(x)} \rangle_{\Bar{\bm{a}}}=\bm{0}, \forall\, \overline{\Bar{\bm{h}}(x)} \in \mathcal{C}\}.$
\end{definition}
\begin{definition}
    The  {\it annihilator} of $\mathcal{C}$, denoted by  
    \begin{equation}
    \textnormal{ Ann}(\mathcal{C}):=\{\overline{\Bar{\bm{f}}(x)}\in\frac{\mathbb{F}_q^l[x]}{\langle x^n-\Bar{\bm{a}}(x)\rangle}|\, \overline{\Bar{\bm{f}}(x)}\,\overline{\Bar{\bm{h}}(x)}=\Bar{\bm{0}} \in \frac{\mathbb{F}_q^l[x]} { \langle x^n-\Bar{\bm{a}}(x)\rangle} \textnormal{ for all } \overline{\Bar{\bm{h}}(x)} \in \mathcal{C}\}.
    \end{equation}
\end{definition}

\begin{lemma}\label{Ann(C)=C^0}
     Let $\Bar{\bm{a}}(x)=\bm{a}_{n-1}x^{n-1}+\dots+\bm{a}_1x+\bm{a}_0 \in \mathbb{F}_q^l[x]$, where $\bm{a}_0 \in U(\mathbb{F}_q^l)$ and let $\mathcal{C}$ be an $\Bar{\bm{a}}$-polycyclic code over $\mathbb{F}_q^l$ having length $n$. Then $\mathcal{C}^\circ=\textnormal{Ann}(\mathcal{C}).$
\end{lemma}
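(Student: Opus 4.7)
The plan is to prove the two inclusions separately, with one direction being essentially tautological and the other exploiting both the ideal structure of $\mathcal{C}$ and the unit hypothesis on $\bm{a}_0$.

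First I would observe that $\text{Ann}(\mathcal{C}) \subseteq \mathcal{C}^\circ$ is immediate: if $\overline{\bar{\bm{f}}(x)}\,\overline{\bar{\bm{h}}(x)} = \bar{\bm{0}}$ in $\mathbb{F}_q^l[x]/\langle x^n - \bar{\bm{a}}(x)\rangle$, then the remainder $\bar{\bm{r}}(x)$ of $\bar{\bm{f}}(x)\bar{\bm{h}}(x)$ upon division by $x^n - \bar{\bm{a}}(x)$ is the zero polynomial, whose constant term $\bar{\bm{r}}(0)$ is certainly $\bm{0}$. So the pairing $\langle \overline{\bar{\bm{f}}(x)}, \overline{\bar{\bm{h}}(x)}\rangle_{\bar{\bm{a}}}$ vanishes for every $\overline{\bar{\bm{h}}(x)} \in \mathcal{C}$.

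For the reverse containment $\mathcal{C}^\circ \subseteq \text{Ann}(\mathcal{C})$, fix $\overline{\bar{\bm{f}}(x)} \in \mathcal{C}^\circ$ and $\overline{\bar{\bm{h}}(x)} \in \mathcal{C}$, and let $\bar{\bm{p}}(x) = \bm{p}_0 + \bm{p}_1 x + \dots + \bm{p}_{n-1} x^{n-1}$ denote the remainder of $\bar{\bm{f}}(x)\bar{\bm{h}}(x)$ modulo $x^n - \bar{\bm{a}}(x)$. The goal is to prove $\bar{\bm{p}}(x) = \bar{\bm{0}}$, i.e.\ that all coefficients $\bm{p}_j$ vanish. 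The crucial input is that $\mathcal{C}$ is an ideal of $\mathbb{F}_q^l[x]/\langle x^n - \bar{\bm{a}}(x)\rangle$, so $\overline{x^i \bar{\bm{h}}(x)} \in \mathcal{C}$ for every $i \geq 0$; consequently the constant term of $\overline{x^i \bar{\bm{p}}(x)}$ is $\bm{0}$ for every $i \geq 0$.

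I would extract the $\bm{p}_j$'s one at a time by an inductive sweep. Taking $i = 0$ immediately gives $\bm{p}_0 = \bm{0}$. For $i = 1$, I compute $x\bar{\bm{p}}(x) = \bm{p}_0 x + \bm{p}_1 x^2 + \dots + \bm{p}_{n-2} x^{n-1} + \bm{p}_{n-1} x^n$ and reduce the leading $x^n$ to $\bar{\bm{a}}(x)$; the constant term of the reduction is $\bm{p}_{n-1}\bm{a}_0$, which must equal $\bm{0}$, and since $\bm{a}_0 \in U(\mathbb{F}_q^l)$ this forces $\bm{p}_{n-1} = \bm{0}$. Iterating with $i = 2, 3, \dots, n-1$ and using the previously established vanishings, the same mechanism yields $\bm{p}_{n-k}\bm{a}_0 = \bm{0}$ at the $k$-th step, hence $\bm{p}_{n-k} = \bm{0}$. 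This exhausts all coefficients, so $\bar{\bm{p}}(x) = \bar{\bm{0}}$ and $\overline{\bar{\bm{f}}(x)}\,\overline{\bar{\bm{h}}(x)} = \bar{\bm{0}}$, as required.

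The one step that needs care is the inductive reduction of $x^i\bar{\bm{p}}(x)$ modulo $x^n - \bar{\bm{a}}(x)$: one must track which monomials of degree $\geq n$ appear and verify that, after earlier cancellations, the only contribution to the constant term at stage $i = k$ comes from the single monomial $\bm{p}_{n-k}x^n$ collapsing to $\bm{p}_{n-k}\bar{\bm{a}}(x)$. Once this bookkeeping is in place, the unit hypothesis on $\bm{a}_0$ closes the argument and gives $\mathcal{C}^\circ = \text{Ann}(\mathcal{C})$.
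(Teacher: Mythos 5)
Your proof is correct and follows essentially the same route as the paper: both directions are handled identically, with the reverse inclusion obtained by pairing $\overline{\bar{\bm{f}}(x)}$ against $\overline{x^i\bar{\bm{h}}(x)}\in\mathcal{C}$ and using that $\bm{a}_0$ is a unit to peel off the remainder's coefficients one at a time. The only (immaterial) difference is that you run the argument for an arbitrary $\overline{\bar{\bm{h}}(x)}\in\mathcal{C}$, whereas the paper applies it to a generator $\bar{\bm{g}}(x)$ of $\mathcal{C}$ and lets principality do the rest.
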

\begin{proof}By definition, $\textnormal{Ann}(\mathcal{C})\subset\mathcal{C}^\circ.$ For the converse, let $\overline{\Bar{\bm{f}}(x)}\in\mathcal{C}^\circ.$ We require to show that $\overline{\Bar{\bm{f}}(x)\Bar{\bm{h}}(x)}=\Bar{\bm{0}}$ for every $\overline{\Bar{\bm{h}}(x)}\in\mathcal{C}.$  Suppose that $\Bar{\bm{g}}(x)$ generates $\mathcal{C}$,  $\Bar{\bm{f}}(x)\Bar{\bm{g}}(x)=(x^n-\Bar{\bm{a}}(x))\Bar{\bm{q}}(x)+\Bar{\bm{r}}(x)$ where $\Bar{\bm{q}}(x), \Bar{\bm{r}}(x)\in \mathbb{F}_q^l[x]$ and $\Bar{\bm{r}}(x)=\bm{r}_{n-1}x^{n-1}+\dots+\bm{r}_1x+\bm{r}_0.$ 
Then, for each $i\ge0,$ $\overline{x^i\Bar{\bm{g}}(x)}\in\mathcal{C}$ so that $\bm{r}_0=\langle \overline{\Bar{\bm{f}}(x)}, \overline{\Bar{\bm{g}}(x)} \rangle_{\Bar{\bm{a}}}=\bm{0},\, \bm{a}_0\bm{r}_{n-1}=\langle \overline{\Bar{\bm{f}}(x)}, \overline{x\Bar{\bm{g}}(x)} \rangle_{\Bar{\bm{a}}}=\bm{0},\dots, \bm{a}_0\bm{r}_1=\langle \overline{\Bar{\bm{f}}(x)}, \overline{x^{n-1}\Bar{\bm{g}}(x)} \rangle_{\Bar{\bm{a}}}=\bm{0}.$  Since $\bm{a}_0$ is a unit, $\bm{r}_j=0$ for each $j$ so that $\overline{\Bar{\bm{f}}(x)\Bar{\bm{g}}(x)}=\Bar{\bm{0}}.$   \hfill $\square$
 \end{proof}
We have a few results which hold for sequential codes over $\mathbb{F}_q^l.$
\begin{theorem}\label{classification of sequential codes}Let $\mathcal{C}$ be an $\mathbb{F}_q^l$-linear code of length $n,$ and let $\Bar{\bm{a}}=\underset{i=1}{\overset{l}{\sum}} \bm{e}_i*\bm{a}^{(i)}\in (\mathbb{F}_q^l)^n,$ where $\bm{a}^{(i)}\in\mathbb{F}_q^n,$ for $1\le i\le l$. Then, $\mathcal{C}$ is $\Bar{\bm{a}}$-sequential over $\mathbb{F}_q^l$ if and only if $\mathcal{C}_i$ is $\bm{a}^{(i)}$-sequential over $\mathbb{F}_q$ for each $i.$
    \end{theorem}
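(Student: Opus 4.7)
The plan is to mimic the argument used for the polycyclic case in Theorem \ref{polycyclic_classification}, leveraging the $\mathbb{F}_q$-decomposition $\mathcal{C}=\bigoplus_{i=1}^l \bm{e}_i\mathcal{C}_i$ given by Theorem \ref{linear}. The key computation is to check that the sequential shift on $(\mathbb{F}_q^l)^n$ ``respects'' the component-wise decomposition induced by the basis $\mathcal{B}$, so that the defining condition for $\Bar{\bm{a}}$-sequentiality splits coordinate-by-coordinate into the defining conditions for $\bm{a}^{(i)}$-sequentiality on each $\mathcal{C}_i$.

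Concretely, I would start by taking an arbitrary $\Bar{\bm{c}}=(\bm{c}_0,\bm{c}_1,\dots,\bm{c}_{n-1})\in\mathcal{C}$ and writing it in the form $\Bar{\bm{c}}=\sum_{j=1}^l \bm{e}_j*\bm{c}^{(j)}$ with $\bm{c}^{(j)}\in\mathcal{C}_j$, as in the notation introduced before Theorem \ref{linear}. The crucial step is to compute the inner product $\Bar{\bm{c}}\cdot\Bar{\bm{a}}=\sum_{k=0}^{n-1}\bm{c}_k\bm{a}_k$ in $\mathbb{F}_q^l$: using $\bm{c}_k=\sum_{j}\bm{e}_j c_{k,j}$, $\bm{a}_k=\sum_{j}\bm{e}_j a_{k,j}$, and the orthogonality relations $\bm{e}_j\bm{e}_{j'}=\delta_{j,j'}\bm{e}_j$, this simplifies to
\[
\Bar{\bm{c}}\cdot\Bar{\bm{a}}\;=\;\sum_{j=1}^l\bm{e}_j\bigl(\bm{c}^{(j)}\cdot\bm{a}^{(j)}\bigr).
\]
Consequently, the sequential shift of $\Bar{\bm{c}}$ with respect to $\Bar{\bm{a}}$ decomposes cleanly:
\[
(\bm{c}_1,\bm{c}_2,\dots,\bm{c}_{n-1},\Bar{\bm{c}}\cdot\Bar{\bm{a}})\;=\;\sum_{j=1}^l\bm{e}_j*\bigl(c_{1,j},c_{2,j},\dots,c_{n-1,j},\bm{c}^{(j)}\cdot\bm{a}^{(j)}\bigr).
\]

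Once this identity is in hand, the theorem follows by invoking Theorem \ref{linear}(b): the uniqueness of the $\mathbb{F}_q$-decomposition tells me that the left-hand side lies in $\mathcal{C}=\bigoplus_j\bm{e}_j\mathcal{C}_j$ if and only if, for every $j$, the tuple $(c_{1,j},\dots,c_{n-1,j},\bm{c}^{(j)}\cdot\bm{a}^{(j)})$ lies in $\mathcal{C}_j$. For the ``only if'' direction, I would additionally note that projecting $\mathcal{C}$ under $\tilde\pi_j$ produces exactly $\mathcal{C}_j$, and that every element of $\mathcal{C}_j$ arises as $\bm{c}^{(j)}$ for some $\Bar{\bm{c}}\in\mathcal{C}$ (take $\Bar{\bm{c}}=\bm{e}_j*\bm{c}^{(j)}\in\bm{e}_j\mathcal{C}_j\subset\mathcal{C}$), so the sequential condition on $\mathcal{C}$ forces the sequential condition on each $\mathcal{C}_j$. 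The ``if'' direction is immediate by reversing the implication.

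The only potential pitfall is the inner-product computation — I must be careful that the $\mathbb{F}_q$-bilinear structure on $\mathbb{F}_q^l$ interacts properly with the orthogonal idempotent basis, which the identity $\bm{e}_j\bm{e}_{j'}=\delta_{j,j'}\bm{e}_j$ handles cleanly. No additional hypotheses on $\Bar{\bm{a}}$ (such as $\bm{a}_0$ being a unit) are needed here, unlike in the polycyclic setting, since sequentiality is defined directly by the shift map without reference to a polynomial quotient.
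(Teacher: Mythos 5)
Your proposal is correct and follows essentially the same route as the paper, which simply states that the result ``follows on the same lines as'' Theorem \ref{polycyclic_classification}; your computation of $\Bar{\bm{c}}\cdot\Bar{\bm{a}}=\sum_{j}\bm{e}_j(\bm{c}^{(j)}\cdot\bm{a}^{(j)})$ and the resulting componentwise decomposition of the sequential shift is precisely the analogue of the displayed identity in that proof. You in fact supply more detail than the paper does, including the correct observation that the ``only if'' direction is secured by testing on elements of the form $\bm{e}_j*\bm{c}^{(j)}\in\bm{e}_j\mathcal{C}_j\subset\mathcal{C}$.
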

    \begin{proof}
        The proof of the above theorem follows on the same lines as of the Theorem \ref{polycyclic_classification}.
    \end{proof}
 
    \begin{lemma} An $\mathbb{F}_q^l$-linear code is $\Bar{\bm{a}}$-polycyclic if and only if its Euclidean dual is $\Bar{\bm{a}}$-sequential over $\mathbb{F}_q^l.$
           \end{lemma}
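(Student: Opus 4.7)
My plan is to reduce the statement to the corresponding result over $\mathbb{F}_q$, proved by L\'opez-Permouth et al.\ \cite{lopez2009dual}, by passing through the $\mathbb{F}_q$-decomposition established in Theorem \ref{linear}, combined with Theorem \ref{polycyclic_classification} and Theorem \ref{classification of sequential codes}. The strategy is: (i) observe that the Euclidean dual commutes with the $\mathbb{F}_q$-decomposition, (ii) invoke the two componentwise characterizations already proved, (iii) glue the $\mathbb{F}_q$-level equivalence across all $i$.

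First I would fix an $\mathbb{F}_q^l$-linear code $\mathcal{C}\subset(\mathbb{F}_q^l)^n$, write its $\mathbb{F}_q$-decomposition $\mathcal{C}=\bigoplus_{i=1}^l\bm{e}_i\mathcal{C}_i$ with respect to the basis $\mathcal{B}$, and write $\Bar{\bm{a}}=\sum_{i=1}^l\bm{e}_i*\bm{a}^{(i)}$. The key preliminary computation is the following. For $\Bar{\bm{x}}=\sum_i\bm{e}_i*\bm{x}^{(i)}$ and $\Bar{\bm{y}}=\sum_j\bm{e}_j*\bm{y}^{(j)}$ in $(\mathbb{F}_q^l)^n$, using $\bm{e}_i\bm{e}_j=\delta_{i,j}\bm{e}_i$, the Euclidean inner product becomes
\[
\Bar{\bm{x}}\cdot\Bar{\bm{y}}\;=\;\sum_{i,j}\bm{e}_i\bm{e}_j\,\bigl(\bm{x}^{(i)}\cdot\bm{y}^{(j)}\bigr)\;=\;\sum_{i=1}^l\bm{e}_i\,\bigl(\bm{x}^{(i)}\cdot\bm{y}^{(i)}\bigr),
\]
where the inner products inside parentheses are the $\mathbb{F}_q$-valued Euclidean inner products on $\mathbb{F}_q^n$. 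Since $\{\bm{e}_1,\dots,\bm{e}_l\}$ is an $\mathbb{F}_q$-basis, the sum vanishes iff each summand $\bm{x}^{(i)}\cdot\bm{y}^{(i)}$ vanishes. Combined with the fact that $\Bar{\bm{y}}$ ranges over $\mathcal{C}$ exactly when each $\bm{y}^{(i)}$ ranges independently over $\mathcal{C}_i$, this yields the $\mathbb{F}_q$-decomposition of the dual,
\[
\mathcal{C}^{\perp}\;=\;\bigoplus_{i=1}^l\bm{e}_i\,\mathcal{C}_i^{\perp},
\]
where $\mathcal{C}_i^{\perp}$ denotes the Euclidean dual of $\mathcal{C}_i$ in $\mathbb{F}_q^n$.

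With this identification in hand, the equivalence is a chain of iffs: by Theorem \ref{polycyclic_classification}, $\mathcal{C}$ is $\Bar{\bm{a}}$-polycyclic over $\mathbb{F}_q^l$ iff each $\mathcal{C}_i$ is $\bm{a}^{(i)}$-polycyclic over $\mathbb{F}_q$; by the $\mathbb{F}_q$-level result of L\'opez-Permouth et al., this holds iff each $\mathcal{C}_i^{\perp}$ is $\bm{a}^{(i)}$-sequential over $\mathbb{F}_q$; and by Theorem \ref{classification of sequential codes} applied to $\mathcal{C}^{\perp}=\bigoplus_i\bm{e}_i\mathcal{C}_i^{\perp}$, this is in turn equivalent to $\mathcal{C}^{\perp}$ being $\Bar{\bm{a}}$-sequential over $\mathbb{F}_q^l$.

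The only genuinely non-formal step is the dual decomposition $\mathcal{C}^{\perp}=\bigoplus_i\bm{e}_i\mathcal{C}_i^{\perp}$; everything else is a clean transport along the three componentwise characterizations. The main subtlety to get right is that the chosen basis $\mathcal{B}$ need not be the standard basis, so one must genuinely use the orthogonality relations $\bm{e}_i\bm{e}_j=\delta_{i,j}\bm{e}_i$ (rather than tacitly identifying $\mathbb{F}_q^l$ with a product via projections) to collapse the double sum in the inner product computation. Once that is done, the proof is essentially a three-line chain of equivalences.
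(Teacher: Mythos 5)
Your proof is correct and follows essentially the same route as the paper: the chain of equivalences via Theorem \ref{polycyclic_classification}, the $\mathbb{F}_q$-level result of L\'opez-Permouth et al., and Theorem \ref{classification of sequential codes}. In fact you are more careful than the paper, which silently uses the decomposition $\mathcal{C}^{\perp}=\bigoplus_{i}\bm{e}_i\mathcal{C}_i^{\perp}$ without justification, whereas you supply the inner-product computation that establishes it.
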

    \begin{proof}By Theorem \ref{polycyclic_classification}, $\mathcal{C}$ is $\Bar{\bm{a}}$-polycyclic code over $\mathbb{F}_q^l$ if and only if each $\mathcal{C}_i$ is $\bm{a}^{(i)}$-polycyclic over $\mathbb{F}_q.$ By [\cite{lopez2009dual},Theorem 3.2], $\mathcal{C}_i$ is $\bm{a}^{(i)}$-polycyclic over $\mathbb{F}_q$ if and only if $\mathcal{C}_i^\perp$ is $\bm{a}^{(i)}$-sequential over $\mathbb{F}_q.$ By Theorem \ref{classification of sequential codes} $\mathcal{C}^\perp$ is $\Bar{\bm{a}}$-sequential over $\mathbb{F}_q^l.$ \hfill $\square$
    \end{proof}
    \begin{lemma}\label{Dual of dual}
       If  $\mathcal{C}$ is an $\mathbb{F}_q^l$-linear $\Bar{\bm{a}}$-polycyclic code, then  $\mathcal{C}^\circ=(\mathcal{C}. A)^\perp$ where $\mathcal{C}. A=\{\Bar{\bm{c}}A\,|\,\Bar{\bm{c}}\in\mathcal{C}\}$ with $A:=(\langle \Bar{\bm{\epsilon}}_i(x),\Bar{\bm{\epsilon}}_j(x)\rangle_{\Bar{\bm{a}}})_{1\leq i,j\leq n}$, for the standard basis $\{\Bar{\bm{\epsilon}}_i \,|\,1\le i\le n\}$ of $(\mathbb{F}_q^l)^n$ over $\mathbb{F}_q^l.$ In particular,  $(\mathcal{C}^\circ)^\circ=\mathcal{C}.$
    \end{lemma}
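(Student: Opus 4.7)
The plan is to rewrite the annihilator product in matrix form using the symmetric Gram-style matrix $A$, turning the annihilator pairing into a twisted Euclidean pairing. Writing $\Bar{\bm{f}}=\sum_{i=1}^n\bm{f}_i\Bar{\bm{\epsilon}}_i$ and $\Bar{\bm{h}}=\sum_{j=1}^n\bm{h}_j\Bar{\bm{\epsilon}}_j$ in $(\mathbb{F}_q^l)^n$ and using the $\mathbb{F}_q^l$-bilinearity of $\langle\cdot,\cdot\rangle_{\Bar{\bm{a}}}$, I would expand
\[
\langle \Bar{\bm{f}},\Bar{\bm{h}}\rangle_{\Bar{\bm{a}}}=\sum_{i,j}\bm{f}_i\bm{h}_j A_{ij},\qquad A_{ij}=\langle\Bar{\bm{\epsilon}}_i,\Bar{\bm{\epsilon}}_j\rangle_{\Bar{\bm{a}}}.
\]
Because multiplication in $\mathbb{F}_q^l[x]$ is commutative, the annihilator product is symmetric, and hence $A$ is a symmetric matrix with entries in $\mathbb{F}_q^l$.

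Using symmetry of $A$, I can recognise the double sum as a Euclidean inner product: $\sum_{i,j}\bm{f}_i\bm{h}_j A_{ij}=\sum_j\bm{f}_j\bigl(\sum_i\bm{h}_iA_{ij}\bigr)=\Bar{\bm{f}}\cdot(\Bar{\bm{h}}A)$, where $\Bar{\bm{h}}A$ is the usual row-vector/matrix product in $\mathbb{F}_q^l$. Therefore $\Bar{\bm{f}}\in\mathcal{C}^\circ$ iff $\Bar{\bm{f}}\cdot(\Bar{\bm{h}}A)=\bm{0}$ for every $\Bar{\bm{h}}\in\mathcal{C}$, which is precisely the statement $\Bar{\bm{f}}\in(\mathcal{C}.A)^\perp$. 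This gives the first assertion $\mathcal{C}^\circ=(\mathcal{C}.A)^\perp$, and, importantly, does not yet require that $\bm{a}_0$ be a unit.

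For the second assertion, I use the polycyclic hypothesis $\bm{a}_0\in U(\mathbb{F}_q^l)$, which by the earlier lemma implies the bilinear form is non-degenerate. Non-degeneracy says that $\Bar{\bm{h}}\mapsto\bigl(\Bar{\bm{f}}\mapsto\langle\Bar{\bm{f}},\Bar{\bm{h}}\rangle_{\Bar{\bm{a}}}\bigr)$ has trivial kernel, which, in view of the first step, means the map $\Bar{\bm{h}}\mapsto\Bar{\bm{h}}A$ has trivial kernel on $(\mathbb{F}_q^l)^n$. Working through the $\mathbb{F}_q$-decomposition (Theorem \ref{linear}) this map is an $\mathbb{F}_q$-linear endomorphism on the finite space $(\mathbb{F}_q^l)^n\cong\mathbb{F}_q^{nl}$, hence it is an $\mathbb{F}_q^l$-module automorphism and $A$ is invertible. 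Consequently $|\mathcal{C}.A|=|\mathcal{C}|$, and from $\mathcal{C}^\circ=(\mathcal{C}.A)^\perp$ together with the standard Euclidean dimension identity over $\mathbb{F}_q$, I get $\dim_{\mathbb{F}_q}\mathcal{C}^\circ=nl-\dim_{\mathbb{F}_q}\mathcal{C}$. Applying this twice gives $\dim_{\mathbb{F}_q}(\mathcal{C}^\circ)^\circ=\dim_{\mathbb{F}_q}\mathcal{C}$. The inclusion $\mathcal{C}\subseteq(\mathcal{C}^\circ)^\circ$ is immediate from the symmetry of $\langle\cdot,\cdot\rangle_{\Bar{\bm{a}}}$, and equality of $\mathbb{F}_q$-dimensions then forces $(\mathcal{C}^\circ)^\circ=\mathcal{C}$.

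The main obstacle, in my view, is the careful bookkeeping in the bilinear expansion so that the symmetry of $A$ is used to pass from the $\mathbb{F}_q^l$-bilinear sum to an honest Euclidean product $\Bar{\bm{f}}\cdot(\Bar{\bm{h}}A)$; once this identification is in place, the remaining arguments (invertibility of $A$ from non-degeneracy, and the dimension-count closure argument) are routine. One subtlety worth guarding against is that $\mathbb{F}_q^l$ is a product ring, not a field, so the Euclidean duality dimension formula must be invoked at the $\mathbb{F}_q$-level (or, equivalently, applied componentwise through the $\mathbb{F}_q$-decomposition).
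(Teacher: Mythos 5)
Your first assertion is proved exactly as in the paper: write $\langle \Bar{\bm{x}},\Bar{\bm{y}}\rangle_{\Bar{\bm{a}}}=\Bar{\bm{x}}A\Bar{\bm{y}}^t$, use the symmetry $A^t=A$ to convert this to $\langle\Bar{\bm{x}},\Bar{\bm{y}}A\rangle$, and read off $\mathcal{C}^\circ=(\mathcal{C}.A)^\perp$; there is no difference of substance here. For the second assertion you diverge from the paper. The paper computes $(\mathcal{C}.A)^\perp=\mathcal{C}^\perp.A^{-1}$ and then chains the two identities to get $(\mathcal{C}^\circ)^\circ=((\mathcal{C}^\perp.A^{-1}).A)^\perp{}^\perp=\mathcal{C}$, silently assuming $A$ is invertible; you instead argue by cardinality, using invertibility of $A$ to get $\dim_{\mathbb{F}_q}\mathcal{C}^\circ=nl-\dim_{\mathbb{F}_q}\mathcal{C}$, combining this with the inclusion $\mathcal{C}\subseteq(\mathcal{C}^\circ)^\circ$ coming from symmetry of the form. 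Both closings are correct; yours has the merit of explicitly deriving the invertibility of $A$ from the non-degeneracy lemma (via the polycyclic hypothesis $\bm{a}_0\in U(\mathbb{F}_q^l)$ and the componentwise decomposition $A=\sum_i\bm{e}_iA_i$ with each $A_i\in GL_n(\mathbb{F}_q)$), a point the paper leaves implicit, and of flagging that the duality dimension formula over the product ring must be applied at the $\mathbb{F}_q$-level. The paper's closing is marginally more economical since it avoids any dimension count, but the two arguments are of essentially equal depth.
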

    \begin{proof}
        Note that, $\langle \Bar{\bm{x}}, \Bar{\bm{y}}\rangle_{\Bar{\bm{a}}}=\Bar{\bm{x}}A\Bar{\bm{y}}^t$ for $\Bar{\bm{x}}, \Bar{\bm{y}}\in (\mathbb{F}_q^l)^n.$ But $A^t=A$ and hence, $\Bar{\bm{x}}A\Bar{\bm{y}}^t=\Bar{\bm{x}}(\Bar{\bm{y}}A)^t=\langle \Bar{\bm{x}},\Bar{\bm{y}}A\rangle.$ Further, consider $\mathcal{C}^\circ=\{\Bar{\bm{x}}\in(\mathbb{F}_q^l)^n | \langle \Bar{\bm{x}}, \Bar{\bm{y}}\rangle_{\Bar{\bm{a}}} =\bm{0}, \, \forall \,\Bar{\bm{y}}\in \mathcal{C}\} =\{\Bar{\bm{x}}\in(\mathbb{F}_q^l)^n | \langle \Bar{\bm{x}},\Bar{\bm{y}}A\rangle=\bm{0}, \, \forall \,\Bar{\bm{y}}\in \mathcal{C}\}=\{\Bar{\bm{x}}\in(\mathbb{F}_q^l)^n| \langle \Bar{\bm{x}},\Bar{\bm{y}}\rangle=\bm{0}, \, \forall \,\Bar{\bm{y}}\in \mathcal{C}. A\}=(\mathcal{C}. A)^\perp.$\\
        Observe that, $\Bar{\bm{y}}\in (\mathcal{C}.A)^\perp \iff 
        \langle \Bar{\bm{y}}, \Bar{\bm{x}}A \rangle =\bm{0}\,\, \forall \,\Bar{\bm{x}}\in \mathcal{C} 
        \iff 
        \Bar{\bm{y}}(\Bar{\bm{x}}A)^\perp=\bm{0}\,\, \forall\, \Bar{\bm{x}}\in \mathcal{C}
        \iff
        \langle \Bar{\bm{y}}A, \Bar{\bm{x}} \rangle =\bm{0} \, \forall \, \Bar{\bm{x}}\in \mathcal{C}
        \iff 
        \Bar{\bm{y}}A\in \mathcal{C}^\perp
        \iff
        \Bar{\bm{y}}\in \mathcal{C}^\perp. A^{-1}.$ Hence, $(\mathcal{C}. A)^\perp=\mathcal{C}^\perp .A^{-1}.$
        Therefore, $(\mathcal{C}^\circ)^\circ=((\mathcal{C}. A)^\perp)^\circ=(\mathcal{C}^\perp. A^{-1})^\circ=\mathcal{C}.$ \hfill $\square$
    \end{proof}
    \begin{notation}
         We will denote $A$ in Lemma \ref{Dual of dual} as $(\langle \Bar{\bm{\epsilon}}_i,\Bar{\bm{\epsilon}}_j\rangle_{\Bar{\bm{a}}})$ for convinience. 
    \end{notation}
    \begin{theorem}\label{Dual decomposition}
        For $\Bar{\bm{a}}$-polycyclic code $\mathcal{C}$ over $\mathbb{F}_q^l,$ if the $\mathbb{F}_q$-decomposition with respect to $\mathcal{B}$ is  $\mathcal{C}=\bigoplus_{i=1}^l{\bm{e}_i\,\mathcal{C}_i}$, then the $\mathbb{F}_q$-decomposition of $\mathcal{C}^\circ$ with respect to $\mathcal{B}$ is given by $\mathcal{C}^\circ=\bigoplus_{i=1}^l{\bm{e}_i\,\mathcal{C}^\circ_i}.$
    \end{theorem}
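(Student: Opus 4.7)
The plan is to show both containments by exploiting the fact that the annihilator product on $\mathbb{F}_q^l[x]/\langle x^n - \bar{\bm{a}}(x)\rangle$ decomposes coordinate-wise with respect to the basis $\mathcal{B}$. Since $\bm{a}_0$ is assumed to be a unit, Lemma \ref{Ann(C)=C^0} lets me work with the annihilator $\mathrm{Ann}(\mathcal{C})$ instead of $\mathcal{C}^\circ$, and the same holds for each $\mathcal{C}_i^\circ = \mathrm{Ann}(\mathcal{C}_i)$ (noting that the constant term $a_{0,i}$ of $\bm{a}^{(i)}(x)$ is the $i$-th component of $\bm{a}_0$, which is a unit in $\mathbb{F}_q$ because $\bm{a}_0 \in U(\mathbb{F}_q^l)$).

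The key computational step is to verify that the annihilator product respects the $\mathbb{F}_q$-decomposition. Writing $\bar{\bm{f}}(x) = \sum_{i=1}^l \bm{e}_i \bm{f}^{(i)}(x)$ and $\bar{\bm{h}}(x) = \sum_{j=1}^l \bm{e}_j \bm{h}^{(j)}(x)$, the orthogonal idempotent relation $\bm{e}_i \bm{e}_j = \delta_{ij}\bm{e}_i$ yields $\bar{\bm{f}}(x)\bar{\bm{h}}(x) = \sum_{i=1}^l \bm{e}_i \bm{f}^{(i)}(x)\bm{h}^{(i)}(x)$. Dividing each $\bm{f}^{(i)}(x)\bm{h}^{(i)}(x)$ by $x^n - \bm{a}^{(i)}(x)$ in $\mathbb{F}_q[x]$ to get remainders $\bm{r}^{(i)}(x)$, and using $x^n - \bar{\bm{a}}(x) = \sum_i \bm{e}_i(x^n - \bm{a}^{(i)}(x))$, I get that the $\mathbb{F}_q^l[x]$-remainder of $\bar{\bm{f}}(x)\bar{\bm{h}}(x)$ modulo $x^n - \bar{\bm{a}}(x)$ is precisely $\bar{\bm{r}}(x) = \sum_i \bm{e}_i \bm{r}^{(i)}(x)$. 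Evaluating at $0$ gives
\[
\langle \overline{\bar{\bm{f}}(x)}, \overline{\bar{\bm{h}}(x)}\rangle_{\bar{\bm{a}}} \;=\; \sum_{i=1}^l \bm{e}_i\, \langle \overline{\bm{f}^{(i)}(x)},\overline{\bm{h}^{(i)}(x)}\rangle_{\bm{a}^{(i)}}.
\]

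With this identity in hand, the two containments follow quickly. For $\bigoplus_i \bm{e}_i \mathcal{C}_i^\circ \subseteq \mathcal{C}^\circ$: if $\bm{f}^{(i)}(x) \in \mathcal{C}_i^\circ$ for all $i$ and $\bar{\bm{h}}(x) = \sum_j \bm{e}_j \bm{h}^{(j)}(x) \in \mathcal{C}$ so that $\bm{h}^{(j)}(x) \in \mathcal{C}_j$, then each summand in the displayed identity vanishes, giving $\bar{\bm{f}}(x) \in \mathcal{C}^\circ$. For the reverse inclusion, take $\bar{\bm{f}}(x) \in \mathcal{C}^\circ$; for any fixed $i$ and any $\bm{h}^{(i)}(x) \in \mathcal{C}_i$, the element $\bm{e}_i \bm{h}^{(i)}(x) \in \mathcal{C}$ by Theorem \ref{linear}(a), so applying the identity with this specific $\bar{\bm{h}}$ forces the single surviving component $\bm{e}_i \langle \bm{f}^{(i)}, \bm{h}^{(i)}\rangle_{\bm{a}^{(i)}}$ to be $\bm{0}$, whence $\bm{f}^{(i)}(x) \in \mathcal{C}_i^\circ$. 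Combined with Theorem \ref{linear}(b), the decomposition is direct.

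I do not anticipate any significant obstacle; the main care needed is at the bookkeeping level, specifically verifying that the $\mathbb{F}_q^l[x]$-division algorithm modulo $x^n - \bar{\bm{a}}(x)$ genuinely decomposes into parallel $\mathbb{F}_q[x]$-divisions modulo $x^n - \bm{a}^{(i)}(x)$. This uses that $x^n - \bar{\bm{a}}(x)$ is monic (its leading coefficient is $\bm{1} = \sum_i \bm{e}_i$), which guarantees a well-defined remainder of degree less than $n$, and uniqueness of the $\mathbb{F}_q$-decomposition of polynomials in $\mathbb{F}_q^l[x]$ ensures that this remainder is exactly $\sum_i \bm{e}_i \bm{r}^{(i)}(x)$.
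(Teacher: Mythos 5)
Your proof is correct, and it takes a genuinely different route from the paper's. The paper's proof is a two-line reduction through Lemma \ref{Dual of dual}: it writes the Gram matrix as $A=\sum_{i=1}^{l}\bm{e}_iA_i$ with $A_i\in M_n(\mathbb{F}_q)$ and then computes $\mathcal{C}^\circ=(\mathcal{C}.A)^\perp=\bigoplus_i\bm{e}_i(\mathcal{C}_i.A_i)^\perp=\bigoplus_i\bm{e}_i\mathcal{C}_i^\circ$, silently using that the Euclidean dual over $\mathbb{F}_q^l$ also splits componentwise. You instead prove the splitting at the level of the form itself, establishing $\langle \overline{\bar{\bm{f}}},\overline{\bar{\bm{h}}}\rangle_{\bar{\bm{a}}}=\sum_i\bm{e}_i\langle\overline{\bm{f}^{(i)}},\overline{\bm{h}^{(i)}}\rangle_{\bm{a}^{(i)}}$ via the division algorithm (correctly using that $x^n-\bar{\bm{a}}(x)$ and each $x^n-\bm{a}^{(i)}(x)$ are monic, so remainders are unique), and then deduce both containments directly, with the nontrivial direction handled by testing against $\bm{e}_i\bm{h}^{(i)}(x)\in\bm{e}_i\mathcal{C}_i\subseteq\mathcal{C}$. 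Your argument is more self-contained and elementary — it does not need Lemma \ref{Dual of dual} or the Gram matrix, and it makes explicit the componentwise-duality step that the paper leaves implicit — at the cost of being longer; the paper's version is terser but leans on prior machinery. One small remark: your opening appeal to Lemma \ref{Ann(C)=C^0} (replacing $\mathcal{C}^\circ$ by $\mathrm{Ann}(\mathcal{C})$) is never actually used, since your computation works directly with the bilinear form; it can be dropped without loss.
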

    \begin{proof}Suppose $A$ is as in Lemma \ref{Dual of dual}. Then $A=\underset{i=1}{\overset{l}{\sum}}\bm{e}_iA_i,$ where $A_i\in M_n(\mathbb{F}_q).$ By Lemma \ref{Dual of dual},
       $$\mathcal{C}^\circ=(\mathcal{C}.A)^\perp
            =\underset{i=1}{\overset{l}{\bigoplus}}\bm{e}_i\left(\mathcal{C}_i. A_i\right)^\perp
            =\underset{i=1}{\overset{l}{\bigoplus}}\bm{e}_i\mathcal{C}_i^\circ.$$ \hfill $\square$
    \end{proof}
    \begin{theorem}
        Let $\Bar{\bm{a}}\in(\mathbb{F}_q^l)^n$. Then, an $\mathbb{F}_q^l$-linear code is $\Bar{\bm{a}}$-polycyclic if and only if its annihilator dual is $\Bar{\bm{a}}$-polycyclic.
    \end{theorem}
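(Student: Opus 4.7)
The plan is to reduce the claim over $\mathbb{F}_q^l$ to the corresponding statement over $\mathbb{F}_q$, exploiting the $\mathbb{F}_q$-decomposition machinery already assembled. Specifically, I will combine three ingredients: the polycyclicity criterion of Theorem \ref{polycyclic_classification} (polycyclicity is componentwise), the dual-decomposition result of Theorem \ref{Dual decomposition} ($\mathcal{C}^\circ = \bigoplus_{i=1}^{l}\bm{e}_i\mathcal{C}_i^\circ$), and the known fact over finite fields, due to Alahmadi et al. \cite{alahmadi2016duality}, that the annihilator dual of a polycyclic code over $\mathbb{F}_q$ is again polycyclic (with the same associated vector).

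First, fix the $\mathbb{F}_q$-decomposition $\mathcal{C}=\bigoplus_{i=1}^{l}\bm{e}_i\mathcal{C}_i$ with respect to the basis $\mathcal{B}$, and write $\Bar{\bm{a}}=\sum_{i=1}^{l}\bm{e}_i*\bm{a}^{(i)}$ with $\bm{a}^{(i)}\in\mathbb{F}_q^n$. For the forward direction, assume $\mathcal{C}$ is $\Bar{\bm{a}}$-polycyclic. By Theorem \ref{polycyclic_classification}, each $\mathcal{C}_i$ is $\bm{a}^{(i)}$-polycyclic over $\mathbb{F}_q$. Invoking \cite{alahmadi2016duality}, each $\mathcal{C}_i^\circ$ is $\bm{a}^{(i)}$-polycyclic over $\mathbb{F}_q$. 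Applying Theorem \ref{Dual decomposition} to express $\mathcal{C}^\circ=\bigoplus_{i=1}^{l}\bm{e}_i\mathcal{C}_i^\circ$, and then applying Theorem \ref{polycyclic_classification} in the opposite direction, we conclude that $\mathcal{C}^\circ$ is $\Bar{\bm{a}}$-polycyclic over $\mathbb{F}_q^l$.

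For the converse, I would simply invoke the double-annihilator identity $(\mathcal{C}^\circ)^\circ=\mathcal{C}$ from Lemma \ref{Dual of dual}: if $\mathcal{C}^\circ$ is $\Bar{\bm{a}}$-polycyclic, then by the forward direction already established, its annihilator dual $(\mathcal{C}^\circ)^\circ=\mathcal{C}$ is $\Bar{\bm{a}}$-polycyclic. This gives a two-line argument for the ``if'' direction without redoing the componentwise reduction.

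The only potentially delicate point is that Alahmadi et al.\ work with a bilinear form associated to the polynomial $x^n-\bm{a}^{(i)}(x)$ defining the quotient, whereas here the associated form $\langle\,\cdot\,,\,\cdot\,\rangle_{\Bar{\bm{a}}}$ decomposes componentwise into the forms $\langle\,\cdot\,,\,\cdot\,\rangle_{x^n-\bm{a}^{(i)}(x)}$ (this is precisely what is used implicitly in Theorem \ref{Dual decomposition}, since the matrix $A$ splits as $A=\sum_{i=1}^{l}\bm{e}_iA_i$ with $A_i$ the corresponding matrix over $\mathbb{F}_q$). Thus the hypothesis that $\bm{a}_0\in U(\mathbb{F}_q^l)$ is equivalent to each $\bm{a}^{(i)}(0)$ being a unit in $\mathbb{F}_q$, which is exactly what is required to cite the base-case result cleanly. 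Once this compatibility of the bilinear forms under $\mathbb{F}_q$-decomposition is acknowledged, the argument is essentially formal.
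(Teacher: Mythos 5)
Your proposal is correct and follows essentially the same route as the paper: decompose via Theorem \ref{Dual decomposition} and Theorem \ref{polycyclic_classification}, invoke the base-case fact that the annihilator dual of a polycyclic code over $\mathbb{F}_q$ is polycyclic (the paper cites [Proposition 3, \cite{tabue2018polycyclic}] for chain rings where you cite \cite{alahmadi2016duality}, an equally valid source), and settle the converse with $(\mathcal{C}^\circ)^\circ=\mathcal{C}$ from Lemma \ref{Dual of dual}. Your closing remark on the componentwise compatibility of the bilinear forms is a worthwhile clarification that the paper leaves implicit.
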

    \begin{proof}
    Let $\mathcal{C}$ be $\Bar{\bm{a}}$-polycyclic code over $\mathbb{F}_q^l,$ and let $\mathcal{C}=\underset{i=1}{\overset{l}{\bigoplus}}\bm{e}_i\mathcal{C}_i.$ By Theorem \ref{Dual decomposition}, $\mathcal{C}^\circ=\underset{i=1}{\overset{l}{\bigoplus}}\bm{e}_i\mathcal{C}_i^\circ.$ By Theorem \ref{polycyclic_classification}, each $\mathcal{C}_i$ is $\bm{a}^{(i)}$-polycyclic. By [Proposition 3, \cite{tabue2018polycyclic}], it follows that each $\mathcal{C}_i^\circ$ is $\bm{a}^{(i)}$-polycyclic as $\mathbb{F}_q$ is a chain ring. Again, by Theorem \ref{polycyclic_classification}, $\mathcal{C}^\circ$ is $\mathbf{a}$-polycyclic. Converse follows from Lemma \ref{Dual of dual}.\hfill $\square$
    \end{proof}
    \begin{corollary}\label{generatorpolynomialofC^0}
        Let $\mathcal{C}=\langle \Bar{\bm{g}}(x) \rangle$ be an $\Bar{\bm{a}}$-polycyclic code over $\mathbb{F}_q^l,$ and let $\Bar{\bm{h}}(x)$ is a check polynomial of $
        \mathcal{C}$ so that $x^n-\Bar{\bm{a}}(x)=\Bar{\bm{g}}(x)\Bar{\bm{h}}(x)$. Then, $\mathcal{C}^\circ$ is an $\Bar{\bm{a}}$-polycyclic code and is generated by $\Bar{\bm{h}}(x).$
    \end{corollary}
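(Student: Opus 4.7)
The plan is to establish $\mathcal{C}^\circ = \langle \Bar{\bm{h}}(x)\rangle$ by proving both containments, leveraging Lemma \ref{Ann(C)=C^0} to identify $\mathcal{C}^\circ$ with $\textnormal{Ann}(\mathcal{C})$ and then reducing the non-trivial direction to the $\mathbb{F}_q$-components via Theorem \ref{Dual decomposition}.

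For the easy containment $\langle \Bar{\bm{h}}(x)\rangle \subseteq \mathcal{C}^\circ$, I would note that $\Bar{\bm{g}}(x)\Bar{\bm{h}}(x) = x^n-\Bar{\bm{a}}(x)$ forces $\overline{\Bar{\bm{h}}(x)}\cdot\overline{\Bar{\bm{g}}(x)} = \Bar{\bm{0}}$ in $\mathbb{F}_q^l[x]/\langle x^n-\Bar{\bm{a}}(x)\rangle$. Because $\mathcal{C}=\langle \Bar{\bm{g}}(x)\rangle$, the class of $\Bar{\bm{h}}(x)$ annihilates every element of $\mathcal{C}$, so it lies in $\textnormal{Ann}(\mathcal{C})$. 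The polycyclic assumption ensures $\bm{a}_0 \in U(\mathbb{F}_q^l)$, so Lemma \ref{Ann(C)=C^0} applies and yields $\overline{\Bar{\bm{h}}(x)} \in \mathcal{C}^\circ$; since $\mathcal{C}^\circ$ is an ideal this gives $\langle \Bar{\bm{h}}(x)\rangle \subseteq \mathcal{C}^\circ$.

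For the reverse inclusion I would project to the $\mathbb{F}_q$-components. Write $\Bar{\bm{g}}(x)=\sum_i \bm{e}_i\bm{g}^{(i)}(x)$ and $\Bar{\bm{h}}(x)=\sum_i\bm{e}_i\bm{h}^{(i)}(x)$; orthogonality of $\{\bm{e}_i\}$ splits the identity $\Bar{\bm{g}}(x)\Bar{\bm{h}}(x)=x^n-\Bar{\bm{a}}(x)$ into $\bm{g}^{(i)}(x)\bm{h}^{(i)}(x)=x^n-\bm{a}^{(i)}(x)$ for each $i$, so $\bm{h}^{(i)}(x)$ is the $\mathbb{F}_q$-check polynomial of the component code $\mathcal{C}_i=\langle \bm{g}^{(i)}(x)\rangle$. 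The field-case result invoked in the proof of the preceding theorem (via [Proposition 3, \cite{tabue2018polycyclic}]) then gives $\mathcal{C}_i^\circ = \langle \bm{h}^{(i)}(x)\rangle$ inside $\mathbb{F}_q[x]/\langle x^n-\bm{a}^{(i)}(x)\rangle$. Combining Theorem \ref{Dual decomposition} with Theorem \ref{Generator polynomial of Polycyclic codes} then identifies the generator of the $\mathbb{F}_q$-decomposed code $\bigoplus_i \bm{e}_i\langle \bm{h}^{(i)}(x)\rangle$ as $\sum_i \bm{e}_i\bm{h}^{(i)}(x)=\Bar{\bm{h}}(x)$, giving $\mathcal{C}^\circ \subseteq \langle \Bar{\bm{h}}(x)\rangle$.

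The main obstacle is securing the equality $\mathcal{C}_i^\circ = \langle \bm{h}^{(i)}(x)\rangle$ rather than merely the inclusion $\supseteq$ that the annihilation argument above supplies. If one prefers a self-contained verification over citing \cite{tabue2018polycyclic}, a dimension count closes the gap: non-degeneracy of the annihilator product on $\mathbb{F}_q[x]/\langle x^n-\bm{a}^{(i)}(x)\rangle$ (which follows because $a_{0,i}=\pi_i(\bm{a}_0)$ inherits unit status from $\bm{a}_0$) gives $\dim_{\mathbb{F}_q}\mathcal{C}_i^\circ = n-\dim_{\mathbb{F}_q}\mathcal{C}_i = \deg \bm{g}^{(i)}(x) = n-\deg \bm{h}^{(i)}(x)=\dim_{\mathbb{F}_q}\langle \bm{h}^{(i)}(x)\rangle$, which, together with the inclusion, forces equality and completes the argument.
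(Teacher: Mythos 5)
Your proof is correct and follows essentially the same route as the paper's: decompose into $\mathbb{F}_q$-components via Theorem \ref{Generator polynomial of Polycyclic codes}, establish $\mathcal{C}_i^\circ=\langle \bm{h}^{(i)}(x)\rangle$ over the field, and reassemble with Theorem \ref{Dual decomposition}. The only difference is that the paper simply cites [Lemma 2.4, \cite{qi2022polycyclic}] for the component step, whereas your inclusion-plus-dimension-count argument (using non-degeneracy of the component form, valid since each $\pi_i(\bm{a}_0)$ is a unit) makes that step self-contained; the separate ``easy containment'' paragraph is then redundant but harmless.
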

    \begin{proof} Suppose
    $\Bar{\bm{g}}(x)=\underset{i=1}{\overset{l}{
    \sum}} \bm{e}_i\bm{g}^{(i)}(x)$ and $\Bar{\bm{h}}(x)=\underset{i=1}{\overset{l}{
    \sum}} \bm{e}_i\bm{h}^{(i)}(x).$
        By Theorem \ref{Generator polynomial of Polycyclic codes}, $\mathcal{C}_i=\langle \bm{g}^{(i)}(x) \rangle,$  for each $1\le i\le l.$ Then by [Lemma 2.4 in \cite{qi2022polycyclic}], $\mathcal{C}^\circ_i=\langle \bm{h}^{(i)}(x) \rangle,$ for each $1\le i\le l.$ So, by Theorem \ref{Dual decomposition}, the result follows. \hfill $\square$
    \end{proof}
    \begin{corollary}
        Let $\mathcal{C}=\langle \Bar{\bm{g}}(x)\rangle$ be an $\Bar{\bm{a}}$-polycyclic code over $\mathbb{F}_q^l,$ and let $\Bar{\bm{g}}(x)$ be a check polynomial of $\mathcal{C}.$ Then $\mathcal{C}$ is annihilator self orthogonal if and only if $\Bar{\bm{h}}(x) $ divides $ \Bar{\bm{g}}(x)$ in $\mathbb{F}_q^l[x].$ 
    \end{corollary}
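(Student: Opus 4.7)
The plan is to recognize that this statement is an immediate corollary of two results already established in the paper, and that the only substantive task is to cite them correctly (after noting the apparent typo: the hypothesis should read ``let $\Bar{\bm{h}}(x)$ be a check polynomial of $\mathcal{C}$,'' since $\Bar{\bm{g}}(x)$ is already declared as the generator). The definition of annihilator self-orthogonality is $\mathcal{C}\subseteq \mathcal{C}^\circ$, so the task reduces to characterizing this inclusion by a divisibility relation between generator polynomials.

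First, I would invoke Corollary \ref{generatorpolynomialofC^0}: since $x^n-\Bar{\bm{a}}(x)=\Bar{\bm{g}}(x)\Bar{\bm{h}}(x)$ in $\mathbb{F}_q^l[x]$, the annihilator dual $\mathcal{C}^\circ$ is itself an $\Bar{\bm{a}}$-polycyclic code with generator polynomial $\Bar{\bm{h}}(x)$, i.e., $\mathcal{C}^\circ=\langle \Bar{\bm{h}}(x)\rangle$. Thus both $\mathcal{C}$ and $\mathcal{C}^\circ$ are principally generated $\Bar{\bm{a}}$-polycyclic codes with explicit generators $\Bar{\bm{g}}(x)$ and $\Bar{\bm{h}}(x)$, respectively.

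Next, I would apply the containment criterion proved just after Theorem \ref{Generator polynomial of Polycyclic codes}, which states that for two $\Bar{\bm{a}}$-polycyclic codes $\mathcal{C}$ and $\widetilde{\mathcal{C}}$ with generator polynomials $\Bar{\bm{g}}(x)$ and $\widetilde{\Bar{\bm{g}}}(x)$, one has $\mathcal{C}\subseteq \widetilde{\mathcal{C}}$ if and only if $\widetilde{\Bar{\bm{g}}}(x) \mid \Bar{\bm{g}}(x)$ in $\mathbb{F}_q^l[x]$. Taking $\widetilde{\mathcal{C}}=\mathcal{C}^\circ$ with generator $\Bar{\bm{h}}(x)$, we conclude $\mathcal{C}\subseteq \mathcal{C}^\circ$ if and only if $\Bar{\bm{h}}(x) \mid \Bar{\bm{g}}(x)$ in $\mathbb{F}_q^l[x]$, which is precisely the required equivalence.

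There is no genuine obstacle here: both ingredients are in place, and the argument is a one-line application of each. The only issue to flag is the need for consistency in notation (the hypothesis statement should refer to $\Bar{\bm{h}}(x)$ as the check polynomial), and the implicit use of Remark \ref{remark2.10PIR} together with $\Bar{\bm{g}}(x)\Bar{\bm{h}}(x)=x^n-\Bar{\bm{a}}(x)$ to ensure that $\Bar{\bm{h}}(x)$ is well-defined as a divisor in the principal ideal ring $\mathbb{F}_q^l[x]$, which justifies quoting Corollary \ref{generatorpolynomialofC^0} in the first place.
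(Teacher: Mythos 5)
Your proposal is correct and matches the paper's proof essentially verbatim: the paper likewise applies Corollary \ref{generatorpolynomialofC^0} to identify $\mathcal{C}^\circ=\langle \Bar{\bm{h}}(x)\rangle$ and then runs the chain $\mathcal{C}\subset\mathcal{C}^\circ \iff \Bar{\bm{g}}(x)\in\langle\Bar{\bm{h}}(x)\rangle \iff \Bar{\bm{h}}(x)\mid\Bar{\bm{g}}(x)$, which is exactly the containment criterion you cite. Your observation that the hypothesis should read ``let $\Bar{\bm{h}}(x)$ be a check polynomial'' is also well taken, as the paper's statement does contain that typo.
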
 
\begin{proof}
    By Corollary \ref{generatorpolynomialofC^0}, we have $\mathcal{C}^\circ=\langle \Bar{\bm{h}}(x)\rangle.$ Hence, $\mathcal{C}\subset\mathcal{C}^\circ \iff \langle \Bar{\bm{g}}(x) \rangle \subset \langle \Bar{\bm{h}}(x) \rangle \iff \Bar{\bm{g}}(x)\in \langle \Bar{\bm{h}}(x) \rangle \iff \Bar{\bm{h}}(x) \mid \Bar{\bm{g}}(x). $ \hfill $\square$
\end{proof}
   \begin{definition}
  Let $\mathcal{C}$ be an $\mathbb{F}_q^l$-linear code,
   \begin{itemize}
   \item It is called {\it annihilator self-orthogonal} if $\mathcal{C} \subset \mathcal{C}^\circ,$
   \item It is called {\it annihilator self-dual } if $\mathcal{C}=\mathcal{C}^\circ,$
   \item It is called {\it annihilator dual-containing} if $\mathcal{C}^\circ \subset  \mathcal{C} ,$
  \item  It is called an {\it annihilator LCD code} if $\mathcal{C} \cap \mathcal{C}^\circ=\{0\}.$  
   \end{itemize}

   \end{definition}
   From Theorem \ref{Dual decomposition} the following Corollary is immediate:
    \begin{corollary}\label{CiffC_i}
        If $\mathcal{C}$ be an $\mathbf{a}$-polycyclic code over $\mathbb{F}_q^l.$ Then,
        \begin{enumerate}
        \item $\mathcal{C}$ is annihilator self-orthogonal $\Leftrightarrow$  each $\mathcal{C}_i$ is self-orthogonal over $\mathbb{F}_q,$
            \item $\mathcal{C}$ is an annihilator dual-containing $\Leftrightarrow$ each  $\mathcal{C}_i$ is dual-containing over $\mathbb{F}_q,$
            \item $\mathcal{C}$ is annihilator self-dual $\Leftrightarrow$  each $\mathcal{C}_i$ is self-dual over $\mathbb{F}_q,$
            \item C is annihilator LCD $\Leftrightarrow$ each $\mathcal{C}_i$ is annihilator LCD over $\mathbb{F}_q.$
        \end{enumerate}
    \end{corollary}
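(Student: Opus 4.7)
The plan is to derive all four statements directly from the $\mathbb{F}_q$-decomposition of $\mathcal{C}^\circ$ established in Theorem \ref{Dual decomposition}, combined with the uniqueness of the $\mathbb{F}_q$-decomposition from part (b) of Theorem \ref{linear}. In particular, writing $\mathcal{C}=\bigoplus_{i=1}^l\bm{e}_i\mathcal{C}_i$ and $\mathcal{C}^\circ=\bigoplus_{i=1}^l\bm{e}_i\mathcal{C}_i^\circ$, every inclusion or equality relating $\mathcal{C}$ and $\mathcal{C}^\circ$ should project componentwise onto the corresponding relation between $\mathcal{C}_i$ and $\mathcal{C}_i^\circ$.

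The first step is the following general observation: if $\mathcal{D}=\bigoplus_{i=1}^l\bm{e}_i\mathcal{D}_i$ and $\mathcal{E}=\bigoplus_{i=1}^l\bm{e}_i\mathcal{E}_i$ are two $\mathbb{F}_q^l$-submodules of $(\mathbb{F}_q^l)^n$ written in their $\mathbb{F}_q$-decompositions, then $\mathcal{D}\subset\mathcal{E}$ if and only if $\mathcal{D}_i\subset\mathcal{E}_i$ for every $1\leq i\leq l$, and $\mathcal{D}\cap\mathcal{E}=\bigoplus_{i=1}^l\bm{e}_i(\mathcal{D}_i\cap\mathcal{E}_i)$. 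The forward direction of both assertions follows by applying $\tilde\pi_i$ to each equation (or by invoking the uniqueness in Theorem \ref{linear}(b) directly); the reverse direction is immediate since the $\bm{e}_i$ are orthogonal idempotents summing to $1$.

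With this lemma in hand, items (1), (2), and (3) are instantaneous: apply the inclusion statement with $\mathcal{D}=\mathcal{C}$, $\mathcal{E}=\mathcal{C}^\circ$ (and vice versa) to obtain (1) and (2), and then (3) is their conjunction. For (4), apply the intersection statement to get $\mathcal{C}\cap\mathcal{C}^\circ=\bigoplus_{i=1}^l\bm{e}_i(\mathcal{C}_i\cap\mathcal{C}_i^\circ)$; since $\bm{e}_i\ne\bm{0}$ for every $i$, this sum equals $\{\bm{0}\}$ if and only if each summand $\mathcal{C}_i\cap\mathcal{C}_i^\circ$ vanishes.

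There is no real obstacle here; the proof is essentially a bookkeeping exercise once Theorem \ref{Dual decomposition} is available. The only point worth stating carefully is the auxiliary lemma above, since it is the mechanism by which set-theoretic relations in $(\mathbb{F}_q^l)^n$ transfer to their componentwise counterparts. I would therefore structure the proof as one short paragraph proving that auxiliary lemma and then four one-line applications corresponding to items (1)--(4).
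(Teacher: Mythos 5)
Your proposal is correct and follows essentially the same route as the paper, which simply declares the corollary immediate from Theorem \ref{Dual decomposition}; you merely make explicit the componentwise transfer of inclusions and intersections via the uniqueness in Theorem \ref{linear}(b). The auxiliary lemma you isolate is sound and is exactly the bookkeeping the paper leaves implicit.
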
 \hfill $\square$\\
The following result is due to [Corollary 3, \cite{bajalan2024polycyclic}]:
    \begin{theorem}\label{Ann dual containing over F_q}
    Let $\mathcal{C}=\langle \Bar{\bm{g}}(x) \rangle $ be a $\Bar{\bm{a}}$-polycyclic code of length $n$ over $\mathbb{F}_q.$  Suppose $x^n-\Bar{\bm{a}}(x)=\Bar{\bm{h}}(x)\Bar{\bm{g}}(x).$ Then $\mathcal{C}^\circ \subseteq \mathcal{C}$ if and only if $\Bar{\bm{g}}(x)$ is a divisor of $\Bar{\bm{h}}(x).$
\end{theorem}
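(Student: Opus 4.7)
The plan is to reduce this statement to an ideal containment in the quotient ring by invoking Corollary \ref{generatorpolynomialofC^0}, which already identifies the generator of the annihilator dual. Since the field $\mathbb{F}_q$ is the case $l=1$ of $\mathbb{F}_q^l$, Corollary \ref{generatorpolynomialofC^0} gives $\mathcal{C}^\circ=\langle\Bar{\bm{h}}(x)\rangle$ in $\mathbb{F}_q[x]/\langle x^n-\Bar{\bm{a}}(x)\rangle$. Consequently, the containment $\mathcal{C}^\circ\subseteq\mathcal{C}$ translates to $\langle\Bar{\bm{h}}(x)\rangle\subseteq\langle\Bar{\bm{g}}(x)\rangle$ in that quotient ring.

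First I would note that $\langle\Bar{\bm{h}}(x)\rangle\subseteq\langle\Bar{\bm{g}}(x)\rangle$ is equivalent to $\Bar{\bm{h}}(x)\in\langle\Bar{\bm{g}}(x)\rangle$, because $\Bar{\bm{h}}(x)$ generates the left-hand ideal. Next I would unfold this membership: it means there exist $\Bar{\bm{q}}(x),\Bar{\bm{r}}(x)\in\mathbb{F}_q[x]$ with
\[
\Bar{\bm{h}}(x)=\Bar{\bm{g}}(x)\Bar{\bm{q}}(x)+(x^n-\Bar{\bm{a}}(x))\Bar{\bm{r}}(x).
\]
Since $x^n-\Bar{\bm{a}}(x)=\Bar{\bm{g}}(x)\Bar{\bm{h}}(x)$, the second summand already lies in the principal ideal $(\Bar{\bm{g}}(x))$ of $\mathbb{F}_q[x]$, and hence $\Bar{\bm{g}}(x)$ divides $\Bar{\bm{h}}(x)$ in $\mathbb{F}_q[x]$. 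The reverse implication is immediate: if $\Bar{\bm{g}}(x)\mid\Bar{\bm{h}}(x)$ in $\mathbb{F}_q[x]$, then $\Bar{\bm{h}}(x)$ represents an element of $\langle\Bar{\bm{g}}(x)\rangle$ in the quotient ring, giving $\mathcal{C}^\circ\subseteq\mathcal{C}$.

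I do not expect a genuine obstacle here; the argument is a straight mirror of the self-orthogonal corollary proved just above, with the roles of $\Bar{\bm{g}}(x)$ and $\Bar{\bm{h}}(x)$ swapped to reflect the reversed containment. The only point that deserves a sentence of care is the passage from ideal membership in the quotient back to polynomial divisibility in $\mathbb{F}_q[x]$, and that is handled cleanly by the identity $x^n-\Bar{\bm{a}}(x)=\Bar{\bm{g}}(x)\Bar{\bm{h}}(x)$, which guarantees that the ``quotient correction'' term is already a multiple of $\Bar{\bm{g}}(x)$ at the polynomial level.
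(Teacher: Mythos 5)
Your argument is correct. One point worth flagging up front: the paper does not actually prove this theorem --- it states it as imported from an external source (``[Corollary 3, \cite{bajalan2024polycyclic}]''), so there is no internal proof to mirror. Your derivation is therefore a genuine addition: you obtain the statement self-containedly from the paper's own Corollary~\ref{generatorpolynomialofC^0} (specialized to $l=1$, where it reduces to the identification $\mathcal{C}^\circ=\langle \Bar{\bm{h}}(x)\rangle$), and the reduction to ideal containment in $\mathbb{F}_q[x]/\langle x^n-\Bar{\bm{a}}(x)\rangle$ is sound. Your one ``point of care'' is exactly the right one, and it is the step the paper itself elides in the adjacent self-orthogonality corollary: passing from $\overline{\Bar{\bm{h}}(x)}\in\langle\overline{\Bar{\bm{g}}(x)}\rangle$ in the quotient to $\Bar{\bm{g}}(x)\mid\Bar{\bm{h}}(x)$ in $\mathbb{F}_q[x]$ requires absorbing the correction term $(x^n-\Bar{\bm{a}}(x))\Bar{\bm{r}}(x)$, and the factorization $x^n-\Bar{\bm{a}}(x)=\Bar{\bm{g}}(x)\Bar{\bm{h}}(x)$ does precisely that. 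There is no circularity, since Corollary~\ref{generatorpolynomialofC^0} is established earlier and independently of this theorem; the only implicit hypothesis you inherit is $\Bar{\bm{a}}(0)\in U(\mathbb{F}_q)$, which is built into the paper's definition of a polycyclic code, so nothing is missing.
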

    
     Now from Theorem 2.7, Theorem 2.8, Theorem 3.3, Theorem 4.6  of \cite{qi2022polycyclic} and from Theorem \ref{Ann dual containing over F_q}, Corollary \ref{CiffC_i} we get the following results:
     \begin{corollary} \label{decomposition}
     Let $\mathcal{C}$ be an $\Bar{\bm{a}}$-polycyclic code over $\mathbb{F}_q^l.$ Then,         \begin{enumerate}
         \item $\mathcal{C}=\langle \Bar{\bm{g}}(x) \rangle=\langle \underset{i=1}{\overset{l}{\sum}}\bm{e}_i\bm{g}^{(i)}(x) \rangle$ is annihilator dual-containing over $\mathbb{F}_q^l \Leftrightarrow $ $\bm{g}^{(i)}(x)\mid\bm{h}^{(i)}(x)$ for each $1\leq i\leq l.$  
         \item $\mathcal{C}=\langle \Bar{\bm{g}}(x) \rangle=\langle \underset{i=1}{\overset{l}{\sum}}\bm{e}_i\bm{g}^{(i)}(x) \rangle$ is annihilator self-dual over $\mathbb{F}_q^l \Leftrightarrow x^n-\bm{a}^{(i)}(x)=a_i(\bm{g}^{(i)}(x))^2$ where $ a_i\in U(\mathbb{F}_q)$ for each $1\leq i\leq l.$  
         \item $\mathcal{C}=\langle \Bar{\bm{g}}(x) \rangle=\langle \underset{i=1}{\overset{l}{\sum}}\bm{e}_i\bm{g}^{(i)}(x) \rangle$ is annihilator LCD over $\mathbb{F}_q^l \Leftrightarrow $ $\gcd(\bm{g}^{(i)}(x),\bm{h}^{(i)}(x))=1$ where $x^n-\bm{a}^{(i)}(x)=\bm{g}^{(i)}(x)\bm{h}^{(i)}(x)$ for each $1\leq i\leq l.$   
     \end{enumerate}
             
     \end{corollary}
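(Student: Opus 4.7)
The plan is to prove all three parts by pushing the question down to the component codes $\mathcal{C}_i$ via the $\mathbb{F}_q$-decomposition, and then quoting the corresponding one-component results already available over the field $\mathbb{F}_q$. By Theorem \ref{Generator polynomial of Polycyclic codes}, writing $\Bar{\bm{g}}(x)=\sum_{i=1}^{l}\bm{e}_i\bm{g}^{(i)}(x)$ gives $\mathcal{C}_i=\langle \bm{g}^{(i)}(x)\rangle$ as an $\bm{a}^{(i)}$-polycyclic code over $\mathbb{F}_q$; likewise the check polynomial $\Bar{\bm{h}}(x)=\sum_{i=1}^{l}\bm{e}_i\bm{h}^{(i)}(x)$ decomposes so that $x^n-\bm{a}^{(i)}(x)=\bm{g}^{(i)}(x)\bm{h}^{(i)}(x)$ for each $i$. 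This is the common setup for all three equivalences; once it is in place, the work reduces to invoking the cited component-wise characterizations.

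For part (1), I would apply Corollary \ref{CiffC_i}(2) to replace ``$\mathcal{C}$ is annihilator dual-containing over $\mathbb{F}_q^l$'' by ``$\mathcal{C}_i$ is annihilator dual-containing over $\mathbb{F}_q$ for every $i$''; then Theorem \ref{Ann dual containing over F_q}, applied to each $\mathcal{C}_i=\langle \bm{g}^{(i)}(x)\rangle$, translates this into the divisibility $\bm{g}^{(i)}(x)\mid \bm{h}^{(i)}(x)$ over $\mathbb{F}_q$, which is exactly the claimed criterion.

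For part (2), I would use Corollary \ref{CiffC_i}(3) to reduce annihilator self-duality to the fieldwise condition that each $\mathcal{C}_i$ is annihilator self-dual over $\mathbb{F}_q$, and then invoke Theorem~4.6 of \cite{qi2022polycyclic} (the self-duality criterion for polycyclic codes over a field), which states that $\mathcal{C}_i=\langle \bm{g}^{(i)}(x)\rangle$ is annihilator self-dual precisely when $x^n-\bm{a}^{(i)}(x)$ equals $(\bm{g}^{(i)}(x))^2$ up to a unit $a_i\in U(\mathbb{F}_q)$. For part (3), I would use Corollary \ref{CiffC_i}(4) to rephrase annihilator LCD as annihilator LCD in each component and then appeal to Theorem~3.3 of \cite{qi2022polycyclic}, which equates this with the coprimality $\gcd(\bm{g}^{(i)}(x),\bm{h}^{(i)}(x))=1$ in $\mathbb{F}_q[x]$.

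The only real subtlety is bookkeeping: one must check that the decomposition of the check polynomial matches the component check polynomials (this is immediate from multiplying $\sum \bm{e}_i\bm{g}^{(i)}(x)\cdot\sum \bm{e}_i\bm{h}^{(i)}(x)$ using $\bm{e}_i\bm{e}_j=\delta_{ij}\bm{e}_i$ and then comparing with $x^n-\Bar{\bm{a}}(x)=\sum \bm{e}_i(x^n-\bm{a}^{(i)}(x))$), and that divisibility and gcd in $\mathbb{F}_q^l[x]$ for polynomials whose coefficients lie in a single $\bm{e}_i$-summand coincide with the corresponding notions in $\mathbb{F}_q[x]$. There is no essentially new content beyond the decomposition machinery, so no genuine obstacle is expected; the proof is short and reduces, part by part, to an application of Corollary~\ref{CiffC_i} followed by the appropriate cited result.
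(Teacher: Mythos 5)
Your proposal is correct and follows exactly the route the paper intends: the paper gives no explicit proof but derives the corollary from Corollary \ref{CiffC_i} together with Theorem \ref{Ann dual containing over F_q} and Theorems 3.3 and 4.6 of \cite{qi2022polycyclic}, which is precisely your component-wise reduction. Your additional bookkeeping on the decomposition of the check polynomial is a sensible (and correct) elaboration of what the paper leaves implicit.
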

    \begin{corollary}
        Let $\Bar{\bm{a}}(x)\in\mathbb{F}_q^l[x]$ where $\bm{a}_0 \in U(\mathbb{F}_q^l).$ Assume, for each $1\leq i \leq l,$ that $x^n-\bm{a}^{(i)}(x)=\underset{j=1}{\overset{s_i}{\prod}}f_{i,j}^{m_{i,j}}(x)$  be the factorization into distinct irreducible polynomials $f_{i, j}(x)\in \mathbb{F}_q[x].$ Then,
        \begin{enumerate}
            \item The number of annihilator self-orthogonal $\Bar{\bm{a}}$-polycyclic codes over $\mathbb{F}_q^l$ is $$\underset{i=1}{\overset{l}{\prod}}\underset{j=1}{\overset{s_i}{\prod}}\Big(m_{i, j}-\Big\lceil \frac{m_{i, j}}{2} \Big\rceil +1\Big).$$
       
        \item The number of annihilator self-dual $\Bar{\bm{a}}$-polycyclic  codes over $\mathbb{F}_q^l$ is one if $m_{i, j}$ is even for each $(i, j)$ and zero otherwise.
\item The number of $\Bar{\bm{a}}$-polycyclic annihilator LCD codes over $\mathbb{F}_q^l$
is $2^{\underset{i=1}{\overset{l}{\sum}} s_i}.$     \end{enumerate}
\end{corollary}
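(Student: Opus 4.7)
The strategy is to reduce each count to an enumeration over $\mathbb{F}_q$ via the $\mathbb{F}_q$-decomposition. By Corollary \ref{CiffC_i}, each of the three properties (annihilator self-orthogonal, annihilator self-dual, annihilator LCD) holds for $\mathcal{C} = \bigoplus_{i=1}^{l} \bm{e}_i \mathcal{C}_i$ if and only if the corresponding property holds for each $\mathcal{C}_i$ over $\mathbb{F}_q$. Since the decomposition is unique and the components are chosen independently, the total count of $\bar{\bm{a}}$-polycyclic codes over $\mathbb{F}_q^l$ with a given property equals $\prod_{i=1}^{l} N_i$, where $N_i$ is the number of $\bm{a}^{(i)}$-polycyclic codes over $\mathbb{F}_q$ with that property.

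To compute each $N_i$, I would use Theorem \ref{Generator polynomial of Polycyclic codes}, which puts $\bm{a}^{(i)}$-polycyclic codes over $\mathbb{F}_q$ in bijection with divisors of $x^n - \bm{a}^{(i)}(x)$. Writing $\bm{g}^{(i)}(x) = \prod_{j=1}^{s_i} f_{i,j}(x)^{k_{i,j}}$ with $0 \le k_{i,j} \le m_{i,j}$, the check polynomial is $\bm{h}^{(i)}(x) = \prod_{j=1}^{s_i} f_{i,j}(x)^{m_{i,j} - k_{i,j}}$. Then each property translates into an elementary condition on the exponents $k_{i,j}$:

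\textbf{(1)} Annihilator self-orthogonality of $\mathcal{C}_i$ is equivalent to $\bm{h}^{(i)}(x) \mid \bm{g}^{(i)}(x)$ (obtained by reversing the argument proving Corollary \ref{decomposition}(1), since $\mathcal{C}_i \subset \mathcal{C}_i^\circ$ is equivalent to $\langle \bm{g}^{(i)}\rangle \subset \langle \bm{h}^{(i)}\rangle$). This requires $m_{i,j} - k_{i,j} \le k_{i,j}$, i.e.\ $k_{i,j} \ge \lceil m_{i,j}/2 \rceil$, giving $m_{i,j} - \lceil m_{i,j}/2 \rceil + 1$ choices for each $j$. Multiplying over $j$ and $i$ yields the first formula.

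\textbf{(2)} Annihilator self-duality forces $\bm{g}^{(i)}(x)$ and $\bm{h}^{(i)}(x)$ to generate the same ideal; since both divide $x^n - \bm{a}^{(i)}(x)$ into complementary factors, this forces $k_{i,j} = m_{i,j} - k_{i,j}$, i.e.\ $m_{i,j} = 2k_{i,j}$. Hence a self-dual $\mathcal{C}_i$ exists (and is unique, generated by $\prod_j f_{i,j}^{m_{i,j}/2}$) precisely when every $m_{i,j}$ is even, and then $N_i = 1$; otherwise $N_i = 0$. \textbf{(3)} For annihilator LCD, Corollary \ref{decomposition}(3) gives $\gcd(\bm{g}^{(i)}, \bm{h}^{(i)}) = 1$, which forces $k_{i,j} \in \{0, m_{i,j}\}$ for each $j$, hence $2^{s_i}$ choices, and multiplying gives $2^{\sum_i s_i}$.

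No real obstacle arises once the $\mathbb{F}_q$-decomposition reduces everything to counting divisors with parity/coprimality constraints; the only point requiring care is the self-orthogonal translation $\mathcal{C}_i \subset \mathcal{C}_i^\circ \Longleftrightarrow \bm{h}^{(i)} \mid \bm{g}^{(i)}$, which is the dual of the dual-containing criterion in Corollary \ref{decomposition}(1) and follows from Corollary \ref{generatorpolynomialofC^0} together with the same divisibility argument used in the corollary above that statement.
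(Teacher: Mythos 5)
Your proposal is correct and follows essentially the same route as the paper: both reduce the count to the components $\mathcal{C}_i$ via Corollary \ref{CiffC_i} and the uniqueness of the $\mathbb{F}_q$-decomposition, so that the total is the product of the per-component counts over $\mathbb{F}_q$. The only difference is that the paper cites [Theorem 2.9, \cite{qi2022polycyclic}] for those per-component counts, whereas you derive them directly from the exponent conditions $k_{i,j}\ge\lceil m_{i,j}/2\rceil$, $2k_{i,j}=m_{i,j}$, and $k_{i,j}\in\{0,m_{i,j}\}$ on the divisors of $x^n-\bm{a}^{(i)}(x)$ --- a self-contained substitute for the citation, and the arithmetic is right.
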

\begin{proof}
 From Corollary \ref{CiffC_i}, we have the number of annihilator self-orthogonal $\Bar{\bm{a}}$-polycyclic codes over $\mathbb{F}_q^l$ is the product of number of annihilator self-orthogonal $\bm{a}^{(i)}$-polycyclic codes over $\mathbb{F}_q$ for $1\leq i \leq l.$ The number of annihilator self-orthogonal $\bm{a}^{(i)}$-polycyclic codes over $\mathbb{F}_q$ for $1\leq i \leq l$ can be computed by [Theorem 2.9, \cite{qi2022polycyclic}]. Similarly, the proofs of $2$ and $3$ follow from [Theorem 2.9, \cite{qi2022polycyclic}]. \hfill $\square$
 \end{proof} 
\begin{remark}
    Note that the results proved above for the product $\mathbb{F}_q^l$ also hold for any $\mathbb{F}_q$-algebra isomorphic to the product ring due to Lemma \ref{Existence of basis consisting of orthogonal idempotents}. But product rings enable us to use the readily available standard base for implementation in MAGMA and sagemath.
\end{remark}

\section{Gray Maps}\label{sec5}
There are several Gray maps that we can define to get codes over $\mathbb{F}_q$ from codes over $\mathbb{F}_q^l.$ Suppose $\{\bm{e}_1,\bm{e}_2,\dots,\bm{e}_l\}$ is an $\mathbb{F}_q $-basis of $\mathbb{F}_q^l$ consisting of orthogonal idempotents having $\underset{i=1}{\overset{l}{\sum}}\bm{e}_i=\bm{1}$ and define a map $\Phi:{(\mathbb{F}_q^l})^n\rightarrow\mathbb{F}_q^{nl}$ such that,
$$\Phi(\bm{c}_1,\dots,\bm{c}_n)=(c_{1,1},c_{1,2},\dots,c_{1,l},\dots,c_{n,1},c_{n,2},\dots,c_{n,l}),$$ where $c_i=\underset{j=1}{\overset{l}{\sum}}e_jc_{i,j}.$ Note that $\Phi$ is a linear transformation from $(\mathbb{F}_q^{l})^n$ to $\mathbb{F}_q^{nl}.$
Let $\Bar{\bm{b}} \in  (\mathbb{F}_q^{l})^n,$ Gray weight of $\Bar{\bm{b}}$ is defined as $w_G(\Bar{\bm{b}})=w_H(\Phi(\Bar{\bm{b}})),$ where $w_H(\Phi(\Bar{\bm{b}}))$ is the Hamming weight of $\Phi(\Bar{\bm{b}})$. Then distance $d_G(\mathcal{C}):=min\{d_G(\Bar{\bm{b}},\Bar{\bm{c}})|\Bar{\bm{b}}\neq\Bar{\bm{c}}\in \mathcal{C}\}$ is clearly the minimum Gray weight of a non-zero codeword.
\begin{theorem}
    Let $\mathcal{C}$ be a linear code over $\mathbb{F}_q^l$ of length $n$ and $\mathcal{C}=\underset{i=1}{\overset{l}{\bigoplus}}\mathcal{C}_i$ be the $\mathbb{F}_q$-decomposition with respect $\mathcal{B}$. Then, $\Phi$ is a distance preserving map from $(\mathbb{F}_q^{l})^n$ to $\mathbb{F}_q^{nl}$. Moreover, $\Phi(\mathcal{C})$ would be an $\mathbb{F}_q$-linear code having parameters $[nl,nl-\underset{i=1}{\overset{l}{\sum}}\deg g_i].$
\end{theorem}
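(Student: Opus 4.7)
The plan is to split the statement into three assertions: $\Phi$ is a distance-preserving $\mathbb{F}_q$-linear bijection, $\Phi(\mathcal{C})$ is $\mathbb{F}_q$-linear, and $\dim_{\mathbb{F}_q}\Phi(\mathcal{C}) = nl - \sum_{i=1}^{l}\deg \bm{g}^{(i)}$, where $\bm{g}^{(i)}$ denotes the (monic) generator polynomial of the component $\mathcal{C}_i$ in the $\mathbb{F}_q$-decomposition. Each piece follows formally from results already established, so the proof will be essentially a bookkeeping exercise.

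First I would observe that $\Phi$ sends the $\mathbb{F}_q$-basis $\{\bm{e}_j\bar{\bm{\epsilon}}_i : 1\le i\le n,\ 1\le j\le l\}$ of $(\mathbb{F}_q^l)^n$ onto a permutation of the standard basis of $\mathbb{F}_q^{nl}$. This exhibits $\Phi$ as an $\mathbb{F}_q$-linear bijection between two $nl$-dimensional $\mathbb{F}_q$-spaces. Distance preservation is then immediate from the definitions and the $\mathbb{F}_q$-linearity of $\Phi$: for $\bar{\bm{b}},\bar{\bm{c}}\in(\mathbb{F}_q^l)^n$ one has $d_G(\bar{\bm{b}},\bar{\bm{c}}) = w_G(\bar{\bm{b}}-\bar{\bm{c}}) = w_H(\Phi(\bar{\bm{b}}-\bar{\bm{c}})) = w_H(\Phi(\bar{\bm{b}})-\Phi(\bar{\bm{c}})) = d_H(\Phi(\bar{\bm{b}}),\Phi(\bar{\bm{c}}))$. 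Since any $\mathbb{F}_q^l$-linear code is in particular $\mathbb{F}_q$-linear (via the diagonal $\mathbb{F}_q$-algebra structure on $\mathbb{F}_q^l$), applying $\Phi$ yields an $\mathbb{F}_q$-subspace $\Phi(\mathcal{C}) \subseteq \mathbb{F}_q^{nl}$ of length $nl$.

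For the dimension count I would invoke the $\mathbb{F}_q$-decomposition $\mathcal{C} = \bigoplus_{i=1}^{l} \bm{e}_i\mathcal{C}_i$ from Theorem \ref{linear}. Because the sum is direct as $\mathbb{F}_q$-vector spaces and the map $\bm{c}\mapsto \bm{e}_i*\bm{c}$ is an $\mathbb{F}_q$-linear bijection from $\mathcal{C}_i$ onto $\bm{e}_i\mathcal{C}_i$, we have $\dim_{\mathbb{F}_q}\mathcal{C} = \sum_{i=1}^{l}\dim_{\mathbb{F}_q}\mathcal{C}_i$. Since each $\mathcal{C}_i$ is a polycyclic code over $\mathbb{F}_q$ with monic generator $\bm{g}^{(i)}$, the corollary following Theorem \ref{Generator polynomial of Polycyclic codes} gives $\dim_{\mathbb{F}_q}\mathcal{C}_i = n - \deg \bm{g}^{(i)}$, and summing over $i$ produces $nl - \sum_{i=1}^{l}\deg \bm{g}^{(i)}$. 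Since $\Phi$ is an $\mathbb{F}_q$-linear injection on $\mathcal{C}$, $\dim_{\mathbb{F}_q}\Phi(\mathcal{C}) = \dim_{\mathbb{F}_q}\mathcal{C}$, which gives the stated parameters.

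There is no genuine obstacle here; the only subtlety is that the appearance of the $\bm{g}^{(i)}$ in the formula implicitly assumes $\mathcal{C}$ is polycyclic with monic component generators (the bare word ``linear'' in the hypothesis is stronger than needed for linearity of $\Phi(\mathcal{C})$ but underspecified for the dimension count). Once that reading is fixed, every ingredient—direct-sum decomposition, the $\mathbb{F}_q$-linear bijection on each summand, and the standard polycyclic dimension formula—has already been proved, so the theorem reduces to assembling these facts.
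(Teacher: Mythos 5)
Your proof is correct and follows essentially the same route the paper intends: the paper's own proof is the single sentence that the result ``follows directly from the definition of the Gray map,'' so your write-up simply supplies the bookkeeping (basis-to-basis bijectivity of $\Phi$, the weight/distance chain of equalities, and the dimension count $\dim_{\mathbb{F}_q}\mathcal{C}=\sum_i(n-\deg \bm{g}^{(i)})$ via the $\mathbb{F}_q$-decomposition) that the authors omit. Your observation that the dimension formula implicitly requires $\mathcal{C}$ to be polycyclic with monic component generators $\bm{g}^{(i)}$ --- otherwise the $g_i$ in the statement are undefined --- is a fair catch about the statement's phrasing, not a gap in your argument.
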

\begin{proof}
    The proof follows directly from the definition of Gray map.   \hfill $\square$
\end{proof} 
\begin{definition}[Polycyclic shift]
    Let $\Bar{\bm{a}}=(\bm{a}_1,\dots,\bm{a}_n)\in(\mathbb{F}_q^l)^n$ and $\bm{a}_1$ is a unit in $\mathbb{F}_q^l.$ Then for $\Bar{\bm{c}}=(\bm{c}_1,\dots,\bm{c}_n)\in (\mathbb{F}_q^l)^n,$ polycyclic shift $\tau_{\Bar{\bm{a}}}$ is defined as $ \tau_{\Bar{\bm{a}}}(\Bar{\bm{c}})=(\bm{0},\bm{c}_1,\dots,\bm{c}_{n-1})+\bm{c}_n(\bm{a}_1,\dots,\bm{a}_n).$
\end{definition}
\begin{definition}[Sequential shift]Let $\Bar{\bm{a}}=(\bm{a}_1,\dots,\bm{a}_n)\in(\mathbb{F}_q^l)^n$ and $\bm{a}_1$ is a unit in $\mathbb{F}_q^l.$ Then for $\Bar{\bm{c}}=(\bm{c}_1,\dots,\bm{c}_n)\in (\mathbb{F}_q^l)^n,$ sequential shift $\widetilde{\tau}_{\Bar{\bm{a}}}$ is defined as
$ \widetilde{\tau}_{\Bar{\bm{a}}}(\Bar{\bm{c}})=(\bm{c}_2,\dots,\bm{c}_{n},\Bar{\bm{c}}\cdot\Bar{\bm{a}}).$
    
\end{definition}
Define, $ \mathcal{T}^l_{\Bar{\bm{a}}},\, \widetilde{\mathcal{T}}^l_{\Bar{\bm{a}}}:\mathbb{F}_q^{nl}\rightarrow\mathbb{F}_q^{nl}$ such that, 
\begin{multline*}
    \mathcal{T}^l_{\Bar{\bm{a}}}(r_{1,1},r_{1,2},\dots,r_{1,l},r_{2,1},r_{2,2},\dots,r_{2,l},\dots,r_{n,1},\dots,r_{n,l})=(0,\dots,0,r_{1,1},r_{1,2},\dots,r_{1,l},\dots,\\r_{n-1,1},\dots,r_{n-1,l})+(r_{n,1}\,a_{1,1},\dots,r_{n,l}\,a_{n,l},r_{n,1}\,a_{2,1},\dots,r_{n,l}\,a_{2,l},\dots,r_{n,1}\,a_{n,1},\dots,r_{n,l}\,a_{n,l})
\end{multline*}
 and 
\begin{align*}
\widetilde{\mathcal{T}}^{l}_{\Bar{\bm{a}}}(r_{1,1},r_{1,2},\dots,r_{1,l},r_{2,1},r_{2,2},\dots,r_{2,l},\dots,r_{n,1},\dots,r_{n,l})=&(r_{2,1},r_{2,2},\dots,r_{2,l},\dots,r_{n,1},\\&\dots,r_{n,l},\,\bm{r}^1\cdot\bm{a}^{(1)},\bm{r}^2\cdot\bm{a}^{(2)},\dots,\bm{r}^l\cdot\bm{a}^{(l)})
\end{align*}
 Let $\Bar{\bm{a}}=(\bm{a}_1,\bm{a}_2,\dots,\bm{a}_{n})=\bm{e}_1*\bm{a}^{(1)}+\bm{e}_2*\bm{a}^{(2)}+ \dots +\bm{e}_l*\bm{a}^{(l)} \in (\mathbb{F}_q^l)^n$ and $\mathbf{a}_1$ is unit in $\mathbb{F}_q^l.$ 
\begin{definition}[$\Bar{\bm{a}}$-quasi cyclic code] A linear code $C$ of length $nl$ over $\mathbb{F}_q$ is said to be $\Bar{\bm{a}}$-quasi cyclic code of index $l$ over $\mathbb{F}_q$ if for any, $(r_{1,1},r_{1,2},\dots,r_{1,l},r_{2,1},r_{2,2},\dots,r_{2,l},\dots,r_{n,1},\dots,r_{n,l})\in C\Longrightarrow\mathcal{T}^l_{\Bar{\bm{a}}}(r_{1,1},r_{1,2},\dots,r_{1,l},r_{2,1},r_{2,2},\dots,r_{2,l},\dots,r_{n,1},\dots,r_{n,l})\in C.$
\end{definition}
\begin{definition}[$\Bar{\bm{a}}$-quasi sequential code] A linear code $C$ of length $nl$ over $\mathbb{F}_q$ is said to be $\Bar{\bm{a}}$-quasi sequential code of index $l$ over $\mathbb{F}_q$ if for any, $(r_{1,1},r_{1,2},\dots,r_{1,l},r_{2,1},r_{2,2},\dots,r_{2,l},\dots,r_{n,1},\\ \dots,r_{n,l})\in C\Longrightarrow\widetilde{\mathcal{T}}^{l}_{\Bar{\bm{a}}}(r_{1,1},r_{1,2},\dots,r_{1,l},r_{2,1},r_{2,2},\dots,r_{2,l},\dots,r_{n,1},\dots,r_{n,l})\in C.$   
\end{definition}
\begin{lemma}
    Let $\Bar{\bm{a}}=(\bm{a}_1,\bm{a}_2,\dots,\bm{a}_{n})=\bm{e}_1*\bm{a}^{(1)}+\bm{e}_2*\bm{a}^{(2)}+ \dots +\bm{e}_l*\bm{a}^{(l)} \in (\mathbb{F}_q^l)^n$ be such that  $\bm{a}_1$ is unit in $\mathbb{F}_q^l.$ Denote by $\tau_{\Bar{\bm{a}}},$ the $\Bar{\bm{a}}$-polycyclic shift.
    Then, $\Phi\circ\tau_{\Bar{\bm{a}}}=\mathcal{T}^l_{\Bar{\bm{a}}}\circ\Phi.$  
\end{lemma}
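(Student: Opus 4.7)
The plan is to verify the identity by evaluating both sides on an arbitrary $\Bar{\bm{c}} = (\bm{c}_1,\dots,\bm{c}_n) \in (\mathbb{F}_q^l)^n$ and comparing coordinates. First I would expand everything in the idempotent basis: writing $\bm{c}_i = \sum_{j=1}^{l} \bm{e}_j c_{i,j}$ and $\bm{a}_i = \sum_{j=1}^{l} \bm{e}_j a_{i,j}$ with $c_{i,j}, a_{i,j}\in\mathbb{F}_q$. The orthogonality relation $\bm{e}_j\bm{e}_k = \delta_{j,k}\bm{e}_j$ then gives the crucial identity
\[
\bm{c}_n\bm{a}_i \;=\; \sum_{j=1}^{l} \bm{e}_j\, c_{n,j}\, a_{i,j},
\]
so that the $\mathbb{F}_q$-coordinates of $\bm{c}_n\bm{a}_i$ in the $\bm{e}_j$-basis are exactly the products $c_{n,j}a_{i,j}$.

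Next I would compute the left side. By definition of the polycyclic shift,
\[
\tau_{\Bar{\bm{a}}}(\Bar{\bm{c}}) \;=\; \bigl(\bm{c}_n\bm{a}_1,\; \bm{c}_1+\bm{c}_n\bm{a}_2,\; \dots,\; \bm{c}_{n-1}+\bm{c}_n\bm{a}_n\bigr),
\]
whose $i$-th component (with the convention $\bm{c}_0 = \bm{0}$) equals $\sum_{j=1}^{l}\bm{e}_j\bigl(c_{i-1,j} + c_{n,j}a_{i,j}\bigr)$. Applying $\Phi$ reads off the $\mathbb{F}_q$-coordinates block by block, so the $((i-1)l+j)$-th entry of $\Phi\bigl(\tau_{\Bar{\bm{a}}}(\Bar{\bm{c}})\bigr)$ is precisely $c_{i-1,j} + c_{n,j}a_{i,j}$.

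For the right side, $\Phi(\Bar{\bm{c}})$ has $((i-1)l+j)$-th entry equal to $c_{i,j}$. Reading off the definition of $\mathcal{T}^l_{\Bar{\bm{a}}}$, the map shifts all entries forward by one block of length $l$ (zeroing the first block) and then adds, in the $i$-th block, the vector $(c_{n,1}a_{i,1},\dots,c_{n,l}a_{i,l})$. Hence the $((i-1)l+j)$-th entry of $\mathcal{T}^l_{\Bar{\bm{a}}}\bigl(\Phi(\Bar{\bm{c}})\bigr)$ is again $c_{i-1,j} + c_{n,j}a_{i,j}$ (with $c_{0,j}=0$). Matching entry-by-entry with the left side completes the verification.

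There is no real obstacle here: the statement is essentially a tautology once the two shifts are written in a common coordinate system. The only point requiring care is the bookkeeping of the block indexing in $\mathcal{T}^l_{\Bar{\bm{a}}}$ (namely, that the block labelled by $i$ corresponds to coordinates $(i-1)l+1,\dots,il$ in $\mathbb{F}_q^{nl}$ and to the coordinate $\bm{c}_i\in\mathbb{F}_q^l$ on the product side), after which the identity holds entry-wise.
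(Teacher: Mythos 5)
Your proposal is correct and follows essentially the same route as the paper's own proof: expand everything in the orthogonal idempotent basis, use $\bm{e}_j\bm{e}_k=\delta_{j,k}\bm{e}_j$ to reduce $\bm{c}_n\bm{a}_i$ to componentwise products, and match the two sides entry by entry. Your version is in fact slightly cleaner in its explicit block-index bookkeeping (the $((i-1)l+j)$-th coordinate), which also quietly corrects some index typos present in the paper's displayed formula for $\mathcal{T}^l_{\Bar{\bm{a}}}$.
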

\begin{proof}
Let $\Bar{\bm{c}}=(\bm{c}_1,\dots,\bm{c}_n)\in\mathcal{C}$ and $\Bar{\bm{a}}$ be as given in the theorem. Consider,
\begin{align*}
     \tau_{\Bar{\bm{a}}}(\Bar{\bm{c}})&=(\bm{0},\bm{c}_1,\dots,\bm{c}_{n-1})+\bm{c}_n(\bm{a}_1,\dots,\bm{a}_n)\\
     &=\left(\bm{0},\underset{i=1}{\overset{l}{\sum}}c_{1,i}\bm{e}_i,\dots,\underset{i=1}{\overset{l}{\sum}}c_{n-1,i}\bm{e}_i\right)+\underset{i=1}{\overset{l}{\sum}}c_{n,i}\bm{e}_i\left(\underset{i=1}{\overset{l}{\sum}}a_{1,i}\bm{e}_i,\dots,\underset{i=1}{\overset{l}{\sum}}a_{n,i}\bm{e}_i\right)\\
    &=\left(\bm{0},\underset{i=1}{\overset{l}{\sum}}c_{1,i}\bm{e}_i,\dots,\underset{i=1}{\overset{l}{\sum}}c_{n-1,i}\bm{e}_i\right)+\left(\underset{i=1}{\overset{l}{\sum}}c_{n,i}a_{1,i}\bm{e}_i,\dots,\underset{i=1}{\overset{l}{\sum}}c_{n,i}a_{n,i}\bm{e}_i\right)\\
    &=\left(\underset{i=1}{\overset{l}{\sum}}c_{n,i}a_{1,i}\bm{e}_i,\underset{i=1}{\overset{l}{\sum}}c_{1,i}\bm{e}_i+\underset{i=1}{\overset{l}{\sum}}c_{n,i}a_{2,i}\bm{e}_i,\dots,\underset{i=1}{\overset{l}{\sum}}c_{n-1,i}\bm{e}_i+\underset{i=1}{\overset{l}{\sum}}c_{n,i}a_{n,i}\bm{e}_i\right)\\
    &=\left(\underset{i=1}{\overset{l}{\sum}}c_{n,i}a_{1,i}\bm{e}_i,\underset{i=1}{\overset{l}{\sum}}(c_{1,i}+c_{n,i}a_{2,i})\bm{e}_i,\dots,\underset{i=1}{\overset{l}{\sum}}(c_{n-1,i}+c_{n,i}a_{n,i})\bm{e}_i\right)
\end{align*}  
Furthermore,\\
\begin{multline*}\label{Phitau}
    \Phi(\tau_{\Bar{\bm{a}}}(\Bar{\bm{c}}))=(c_{n,1}a_{1,1},c_{n,2}a_{1,2},\dots,c_{n,l}a_{1,l},c_{1,1}+c_{n,1}a_{2,1},c_{1,2}+c_{n,2}a_{2,2},\dots,c_{1,l}+c_{n,l}a_{n,l},\\\dots,c_{n-1,1}+c_{n,1}a_{n,1},\dots,c_{n-1,l}+c_{n,l}a_{n,l})
\end{multline*}
Now consider,
\begin{multline*}
    \Phi(\Bar{\bm{c}})=(c_{1,1},\dots,c_{1,l},c_{2,1},\dots,c_{2,l},\dots,c_{n,1},\dots,c_{n,l})\textit{ \textnormal{and} } \\
    \mathcal{T}^l_{\Bar{\bm{a}}}(\Phi(\Bar{\bm{c}}))=(0,\dots,0,c_{1,1},\dots,c_{1,l},\dots,c_{n-1,1},\dots,c_{n-1,l})+(c_{n,1}a_{1,1},\dots,c_{n,l}a_{n,l},c_{n,1}a_{2,1},\\\dots,c_{n,l}a_{2,l},\dots,c_{n,1}a_{n,1},\dots,c_{n,l}a_{n,l})
\end{multline*} \hfill $\square$
\end{proof}
Hence, the following theorem is immediate.
\begin{theorem}
    Let $\mathcal{C}$ be an $\mathbb{F}_q^l$-linear code having length $n$. Then, $\mathcal{C}$ is $\Bar{\bm{a}}$-polycyclic over $\mathbb{F}_q^l$ if and only if $\Phi(\mathcal{C})$ is $\Bar{\bm{a}}$-quasi cyclic code of index $l$ over $\mathbb{F}_q.$
\end{theorem}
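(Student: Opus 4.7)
The plan is to leverage the intertwining identity $\Phi \circ \tau_{\bar{\bm{a}}} = \mathcal{T}^l_{\bar{\bm{a}}} \circ \Phi$ established in the preceding lemma, together with the bijectivity of $\Phi$. First I would record that $\Phi$ is an $\mathbb{F}_q$-linear bijection between $(\mathbb{F}_q^l)^n$ and $\mathbb{F}_q^{nl}$: both are $\mathbb{F}_q$-vector spaces of dimension $nl$, and the defining formula sends the $\mathbb{F}_q$-basis $\{\bm{e}_j\Bar{\bm{\epsilon}}_i : 1\le i\le n,\,1\le j\le l\}$ to the standard $\mathbb{F}_q$-basis of $\mathbb{F}_q^{nl}$, so an explicit inverse is available by repackaging coordinates through the idempotents $\bm{e}_j$.

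For the forward direction, I would unwind the definition of $\bar{\bm{a}}$-polycyclic as the conjunction of $\mathbb{F}_q^l$-linearity with closure under $\tau_{\bar{\bm{a}}}$, so that $\tau_{\bar{\bm{a}}}(\mathcal{C}) \subseteq \mathcal{C}$. Applying $\Phi$ to this inclusion and using the intertwining identity gives $\mathcal{T}^l_{\bar{\bm{a}}}(\Phi(\mathcal{C})) = \Phi(\tau_{\bar{\bm{a}}}(\mathcal{C})) \subseteq \Phi(\mathcal{C})$. Moreover $\Phi(\mathcal{C})$ is $\mathbb{F}_q$-linear, since $\mathcal{C}$ inherits $\mathbb{F}_q$-linearity from its $\mathbb{F}_q^l$-linearity (the scalar $\alpha \in \mathbb{F}_q$ acts via the idempotent sum $\alpha\bm{1} \in \mathbb{F}_q^l$) and $\Phi$ itself is $\mathbb{F}_q$-linear. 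These two facts together say that $\Phi(\mathcal{C})$ is an $\bar{\bm{a}}$-quasi cyclic code of index $l$.

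For the converse, I would assume $\mathcal{T}^l_{\bar{\bm{a}}}(\Phi(\mathcal{C})) \subseteq \Phi(\mathcal{C})$ and rewrite the left-hand side via the intertwining identity as $\Phi(\tau_{\bar{\bm{a}}}(\mathcal{C})) \subseteq \Phi(\mathcal{C})$. Injectivity of $\Phi$ then yields $\tau_{\bar{\bm{a}}}(\mathcal{C}) \subseteq \mathcal{C}$, and since the hypothesis already supplies that $\mathcal{C}$ is $\mathbb{F}_q^l$-linear, the definition of $\bar{\bm{a}}$-polycyclic is met.

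I do not foresee any real obstacle: the substantive content has been absorbed into the preceding intertwining lemma, and what remains is a short formal manipulation. The only point requiring a moment's care is the appeal to injectivity of $\Phi$ on the converse side to pull the inclusion back from $\mathbb{F}_q^{nl}$ to $(\mathbb{F}_q^l)^n$, but this is immediate from the explicit coordinate description of $\Phi$.
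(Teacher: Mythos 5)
Your proposal is correct and follows exactly the route the paper intends: the paper derives this theorem as "immediate" from the preceding intertwining lemma $\Phi\circ\tau_{\Bar{\bm{a}}}=\mathcal{T}^l_{\Bar{\bm{a}}}\circ\Phi$, and your argument simply makes explicit the bijectivity/linearity bookkeeping that the paper leaves unsaid.
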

\begin{lemma}
   Let $\Bar{\bm{a}}=(\bm{a}_1,\bm{a}_2,\dots,\bm{a}_{n})=\bm{e}_1*\bm{a}^{(1)}+\bm{e}_2*\bm{a}^{(2)}+ \dots +\bm{e}_l*\bm{a}^{(l)} \in (\mathbb{F}_q^l)^n$ be such that  $\bm{a}_1$ is unit in $\mathbb{F}_q^l.$ Denote by $\tau'_{\Bar{\bm{a}}},$ the $\Bar{\bm{a}}$-sequential shift. 
   Then, $\Phi\circ\tau'_{\Bar{\bm{a}}}=\widetilde{\mathcal{T}}^{l}_{\Bar{\bm{a}}}\circ\Phi.$ 
\end{lemma}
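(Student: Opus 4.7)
The plan is to verify the identity $\Phi\circ\tau'_{\Bar{\bm{a}}}=\widetilde{\mathcal{T}}^l_{\Bar{\bm{a}}}\circ\Phi$ by a direct componentwise computation on an arbitrary $\Bar{\bm{c}}=(\bm{c}_1,\dots,\bm{c}_n)\in(\mathbb{F}_q^l)^n$, following the same template used in the preceding lemma for the polycyclic shift. The only new ingredient compared to that lemma is handling the inner-product entry $\Bar{\bm{c}}\cdot\Bar{\bm{a}}$ that appears in the last coordinate of the sequential shift.

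First I would expand $\Bar{\bm{c}}$ and $\Bar{\bm{a}}$ with respect to the idempotent basis as $\bm{c}_k=\sum_{i=1}^l c_{k,i}\bm{e}_i$ and $\bm{a}_k=\sum_{i=1}^l a_{k,i}\bm{e}_i$. Using the orthogonality relations $\bm{e}_i\bm{e}_j=\delta_{i,j}\bm{e}_i$ one obtains $\bm{c}_k\bm{a}_k=\sum_{i=1}^l c_{k,i}a_{k,i}\bm{e}_i$, so summing over $k$ gives the crucial identity
\[
\Bar{\bm{c}}\cdot\Bar{\bm{a}}=\sum_{k=1}^n\bm{c}_k\bm{a}_k=\sum_{i=1}^l\Bigl(\sum_{k=1}^n c_{k,i}a_{k,i}\Bigr)\bm{e}_i=\sum_{i=1}^l(\bm{c}^{(i)}\cdot\bm{a}^{(i)})\bm{e}_i.
\]
Applying $\Phi$ to $\tau'_{\Bar{\bm{a}}}(\Bar{\bm{c}})=(\bm{c}_2,\dots,\bm{c}_n,\Bar{\bm{c}}\cdot\Bar{\bm{a}})$ then produces a vector in $\mathbb{F}_q^{nl}$ whose first $l(n-1)$ coordinates are the interleaved scalars $(c_{2,1},\dots,c_{2,l},\dots,c_{n,1},\dots,c_{n,l})$ and whose last $l$ coordinates are $(\bm{c}^{(1)}\cdot\bm{a}^{(1)},\bm{c}^{(2)}\cdot\bm{a}^{(2)},\dots,\bm{c}^{(l)}\cdot\bm{a}^{(l)})$.

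Next I would evaluate the right-hand side: $\Phi(\Bar{\bm{c}})=(c_{1,1},\dots,c_{1,l},c_{2,1},\dots,c_{n,l})$, and applying $\widetilde{\mathcal{T}}^l_{\Bar{\bm{a}}}$ by its defining formula directly shifts the first $l(n-1)$ coordinates leftward and appends $(\bm{r}^1\cdot\bm{a}^{(1)},\dots,\bm{r}^l\cdot\bm{a}^{(l)})$, where by the convention in the definition $\bm{r}^i=(r_{1,i},\dots,r_{n,i})=(c_{1,i},\dots,c_{n,i})=\bm{c}^{(i)}$. Matching the two expressions coordinate by coordinate closes the proof.

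There is essentially no conceptual obstacle, only a bookkeeping one: the definitions of $\Phi$ and $\widetilde{\mathcal{T}}^l_{\Bar{\bm{a}}}$ use two different orderings (interleaved scalars versus rows $\bm{r}^i$), so the verification reduces to confirming that the identification $\bm{r}^i\leftrightarrow\bm{c}^{(i)}$ lines up with the expansion of $\Bar{\bm{c}}\cdot\Bar{\bm{a}}$ via orthogonal idempotents. Once that bookkeeping is in place, the equality is immediate, and the proof is a direct mirror of the preceding lemma for the polycyclic shift.
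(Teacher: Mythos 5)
Your proposal is correct and follows essentially the same route as the paper's proof: expand $\Bar{\bm{c}}\cdot\Bar{\bm{a}}$ via the orthogonal idempotents to get $\sum_{i=1}^l(\bm{c}^{(i)}\cdot\bm{a}^{(i)})\bm{e}_i$, apply $\Phi$, and match coordinates against $\widetilde{\mathcal{T}}^{l}_{\Bar{\bm{a}}}(\Phi(\Bar{\bm{c}}))$. The bookkeeping identification $\bm{r}^i\leftrightarrow\bm{c}^{(i)}$ you highlight is exactly the step the paper performs implicitly.
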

\begin{proof}
    Let $\Bar{\bm{c}}=(\bm{c}_1,\bm{c}_2,\dots,\bm{c}_n) \text{ and }\Bar{\bm{a}}=(\bm{a}_1,\bm{a}_2,\dots,\bm{a}_n) \in \mathbb{F}_q^{nl}.$ Then consider, 
    \begin{align*}
        \tau_{\Bar{\bm{a}}}'(\Bar{\bm{c}})&=(\bm{c}_2,\dots,\bm{c}_n,\Bar{\bm{c}}\cdot\Bar{\bm{a}})\\
       &=\left(\underset{j=1}{\overset{l}{\sum}}c_{2,j}\bm{e}_j,\dots,\underset{j=1}{\overset{l}{\sum}}c_{n,j}\bm{e}_j,\underset{j=1}{\overset{l}{\sum}}(c_{1,j}a_{1,j}+c_{2,j}a_{2,j}+\dots + c_{n,j}a_{n,j})\bm{e}_j\right)
    \end{align*}
    Then,\\
    \begin{align*}
        \Phi(\tau_{\Bar{\bm{a}}}'(\Bar{\bm{c}})) = & \big( c_{2,1},\dots,c_{2,l},c_{3,1},\dots,c_{3,l},\dots,c_{n,1},\dots,c_{n,l},c_{1,1}a_{1,1}+c_{2,1}a_{2,1}\\ & +\dots+c_{n,1}a_{n,1},c_{1,2}a_{1,2}+\dots+c_{n,2}a_{n,2},\dots,c_{1,l}a_{1,l}+c_{2,l}a_{2,l}+\dots+c_{n,l}a_{n,l}\big),
    \end{align*}
    which is clearly equal to $\widetilde{\mathcal{T}}^{l}_{\Bar{\bm{a}}}\circ\Phi(\Bar{\bm{c}}).$ \hfill $\square$
\end{proof}
Hence, the following theorem is immediate now:
\begin{theorem}
     Let $\mathcal{C}$ be an $\mathbb{F}_q^l$-linear code having length $n$. Then, $\mathcal{C}$ is $\Bar{\bm{a}}$-sequential code over $\mathbb{F}_q^l$ if and only if $\Phi(\mathcal{C})$ is $\Bar{\bm{a}}$-quasi sequential code of index $l$ over $\mathbb{F}_q.$
\end{theorem}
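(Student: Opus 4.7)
The plan is to reduce this theorem immediately to the preceding lemma and the bijectivity/linearity of $\Phi$. First I would observe that $\Phi : (\mathbb{F}_q^l)^n \to \mathbb{F}_q^{nl}$ is an $\mathbb{F}_q$-linear bijection (the forward map is linear by construction, and the inverse just reassembles the coefficients $c_{i,1},\dots,c_{i,l}$ into $\bm{c}_i = \sum_j c_{i,j}\bm{e}_j$). Consequently $\Phi(\mathcal{C})$ is automatically $\mathbb{F}_q$-linear whenever $\mathcal{C}$ is $\mathbb{F}_q^l$-linear, because an $\mathbb{F}_q^l$-linear code is in particular an $\mathbb{F}_q$-subspace via the diagonal embedding $a \mapsto (a,\dots,a)$.

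Next, the main content is the closure property. By the preceding lemma, $\Phi \circ \tau'_{\Bar{\bm{a}}} = \widetilde{\mathcal{T}}^{l}_{\Bar{\bm{a}}} \circ \Phi$ as maps from $(\mathbb{F}_q^l)^n$ to $\mathbb{F}_q^{nl}$. For the forward direction, assume $\mathcal{C}$ is $\Bar{\bm{a}}$-sequential, i.e. $\tau'_{\Bar{\bm{a}}}(\mathcal{C}) \subseteq \mathcal{C}$. Applying $\Phi$ to both sides and using the intertwining identity gives
\[
\widetilde{\mathcal{T}}^{l}_{\Bar{\bm{a}}}(\Phi(\mathcal{C})) = \Phi(\tau'_{\Bar{\bm{a}}}(\mathcal{C})) \subseteq \Phi(\mathcal{C}),
\]
so $\Phi(\mathcal{C})$ is $\Bar{\bm{a}}$-quasi sequential of index $l$ over $\mathbb{F}_q$.

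For the converse, suppose $\Phi(\mathcal{C})$ is $\Bar{\bm{a}}$-quasi sequential. Given any $\Bar{\bm{c}} \in \mathcal{C}$, the intertwining identity gives $\Phi(\tau'_{\Bar{\bm{a}}}(\Bar{\bm{c}})) = \widetilde{\mathcal{T}}^{l}_{\Bar{\bm{a}}}(\Phi(\Bar{\bm{c}})) \in \Phi(\mathcal{C})$. Since $\Phi$ is injective, applying $\Phi^{-1}$ yields $\tau'_{\Bar{\bm{a}}}(\Bar{\bm{c}}) \in \mathcal{C}$, so $\mathcal{C}$ is $\Bar{\bm{a}}$-sequential.

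There is essentially no obstacle beyond that already handled by the lemma, which is why the authors mark this theorem as ``immediate''. The only thing one must be careful about is not to conflate $\mathbb{F}_q^l$-linearity with $\mathbb{F}_q$-linearity on the image side: closure under $\widetilde{\mathcal{T}}^{l}_{\Bar{\bm{a}}}$ alone would not give a quasi-sequential code, but the $\mathbb{F}_q$-linearity of $\Phi(\mathcal{C})$ was already secured in the first step.
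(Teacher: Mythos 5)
Your proof is correct and follows exactly the route the paper intends: the theorem is stated as ``immediate'' from the intertwining identity $\Phi\circ\tau'_{\Bar{\bm{a}}}=\widetilde{\mathcal{T}}^{l}_{\Bar{\bm{a}}}\circ\Phi$ of the preceding lemma, combined with the bijectivity and linearity of $\Phi$, which is precisely what you use. The only difference is that you spell out both directions and the linearity of the Gray image explicitly, which the paper leaves implicit.
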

For the Gray map $\Phi$ defined above we know the algebraic structure on the corresponding Gray image of $\mathcal{C},$ however, to improve the distance we define generalized Gray map which is obtained by composing $\Phi$ with a suitable automorphism of $\mathbb{F}_q^{nl}$ which yields a code with the same length and dimension. \\
Hence, we consider the Gray map of the form:
\begin{equation}\label{GenGraymap}
\Psi(c_{1,1},c_{1,2},\dots,c_{1,l},\dots,c_{n,1},c_{n,2},\dots,c_{n,l})= ((c_{1,1},c_{1,2},\dots,c_{1,l})M,\dots, (c_{n,1},c_{n,2},\dots,c_{n,l})M),
\end{equation} where $M\in GL(l, \mathbb{F}_q).$\\

For Table \ref{table1}, $l=2$ and $M=\begin{pmatrix}
    1 & 1\\
    0 &1 \\
\end{pmatrix}.$ In Table \ref{table2} and \ref{table3}, $l$ takes value $3$ and $2$ respectively, and $M$ is as mentioned in each case. All codes presented in Table \ref{table1} and \ref{table2} (probably except item 31 of Table \ref{table1}) are optimal as per the database \cite{codetables} and new in the sense that they are inequivalent to the codes available in MAGMA \cite{cannon2011handbook}, whereas, in Table \ref{table3}, all codes are MDS.\\
\begin{lemma}
\label{annpreserving}
    Let $\mathcal{C}=\langle \Bar{\bm{g}}(x)=\underset{i=1}{\overset{l}{\sum}}\bm{e}_i\bm{g}^{(i)}(x)\rangle$ be an $\Bar{\bm{a}}$-polycyclic code of length $n$ over $\mathbb{F}_q^l$, and let $\Psi$ be the map defined as in Equation \eqref{GenGraymap}, then the parameters of $\Psi(\mathcal{C})$ will be given by $[nl,nl-\underset{i=1}{\overset{l}{\sum}}\deg {\bm{g}}^{(i)}(x),\geq d]_q,$ where $d$ is the Hamming distance of $\mathcal{C}$.
\end{lemma}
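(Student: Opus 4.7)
The plan is to split the assertion into three parts: length, dimension, and the distance bound. The length $nl$ is immediate from the codomain of $\Psi$ as defined in Equation \eqref{GenGraymap}.

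For the dimension, I would first observe that $\Psi$ factors as $\Psi = \widetilde{M}\circ\Phi$, where $\widetilde{M}\colon \mathbb{F}_q^{nl}\to\mathbb{F}_q^{nl}$ is the block-diagonal map that applies $M$ to each consecutive $l$-tuple of coordinates. Since $M\in GL(l,\mathbb{F}_q)$, $\widetilde{M}$ is an $\mathbb{F}_q$-linear automorphism of $\mathbb{F}_q^{nl}$; combined with $\Phi$ being an $\mathbb{F}_q$-linear bijection from $(\mathbb{F}_q^l)^n$ onto $\mathbb{F}_q^{nl}$, this makes $\Psi$ an $\mathbb{F}_q$-linear isomorphism, so $\dim_{\mathbb{F}_q}\Psi(\mathcal{C}) = \dim_{\mathbb{F}_q}\mathcal{C}$. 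Now I would apply the $\mathbb{F}_q$-decomposition $\mathcal{C} = \bigoplus_{i=1}^l \bm{e}_i\mathcal{C}_i$ from Theorem \ref{linear}, together with Theorem \ref{Generator polynomial of Polycyclic codes} which yields $\mathcal{C}_i = \langle \bm{g}^{(i)}(x)\rangle$; each $\mathcal{C}_i$ is then a free $\mathbb{F}_q$-module of rank $n-\deg \bm{g}^{(i)}(x)$ by the corollary on monic generators (the standard basis being $\{\bm{g}^{(i)}(x), x\bm{g}^{(i)}(x),\dots, x^{n-\deg \bm{g}^{(i)}-1}\bm{g}^{(i)}(x)\}$). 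Summing gives $\dim_{\mathbb{F}_q}\mathcal{C} = \sum_{i=1}^l(n-\deg \bm{g}^{(i)}(x)) = nl - \sum_{i=1}^l\deg \bm{g}^{(i)}(x)$, which is the asserted dimension.

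For the distance bound, let $\Bar{\bm{c}}=(\bm{c}_1,\dots,\bm{c}_n)\in\mathcal{C}$ be nonzero. Then
\[
\Psi(\Bar{\bm{c}}) = \bigl((c_{1,1},\dots,c_{1,l})M,\ (c_{2,1},\dots,c_{2,l})M,\dots,(c_{n,1},\dots,c_{n,l})M\bigr),
\]
where each block $(c_{i,1},\dots,c_{i,l})$ is the coordinate tuple of $\bm{c}_i$ in the basis $\mathcal{B}$. Since $M$ is invertible, the $i$-th block $(c_{i,1},\dots,c_{i,l})M$ vanishes if and only if $\bm{c}_i = \bm{0}$ in $\mathbb{F}_q^l$, so the number of nonzero $l$-blocks in $\Psi(\Bar{\bm{c}})$ equals the Hamming weight $w_H(\Bar{\bm{c}})$ of $\Bar{\bm{c}}$ taken over the alphabet $\mathbb{F}_q^l$. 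Each nonzero block contributes at least one nonzero entry in $\mathbb{F}_q$, so $w_H(\Psi(\Bar{\bm{c}})) \geq w_H(\Bar{\bm{c}}) \geq d$. Taking the minimum over nonzero codewords yields $d(\Psi(\mathcal{C}))\geq d$.

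No step is really an obstacle here; the essential content is that $\Psi$ acts block-wise by an invertible linear map on $\mathbb{F}_q^l$, so it is both an $\mathbb{F}_q$-linear isomorphism (controlling length and dimension) and it cannot annihilate a nonzero block (controlling distance). The only minor point of care is making the factorization $\Psi = \widetilde{M}\circ\Phi$ explicit, so that the dimension statement transfers cleanly from $\mathcal{C}$ to $\Psi(\mathcal{C})$ via the $\mathbb{F}_q$-decomposition.
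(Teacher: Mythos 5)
Your proposal is correct and follows essentially the same route as the paper, which simply asserts that the length and dimension are clear from the definition of $\Psi$ and that the distance is preserved or increased because $M$ is invertible. You have merely made explicit the details the paper leaves implicit (the factorization $\Psi=\widetilde{M}\circ\Phi$, the dimension count via the $\mathbb{F}_q$-decomposition, and the block-wise argument that a nonzero symbol of $\mathbb{F}_q^l$ cannot map to a zero block), all of which are sound.
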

\begin{proof}
    The length and dimension of $\Psi(\mathcal{C})$ are clear from the definition of $\Psi.$ The distance remains greater than or equal to the Hamming distance of $\mathcal{C}$ since $M$ is an invertible matrix. 
\end{proof}\hfill $\square$\\
Observe that, the Gram matrix in Lemma \ref{Dual of dual} with respect to the annihilator product is given as:\\
 $$A=(
     \langle \Bar{\bm{\epsilon}}_i,\Bar{\bm{\epsilon}}_j \rangle_{\Bar{\bm{a}}})=\begin{pmatrix}
         (1,1,\dots,1)&\bm{0}&\bm{0}&\dots&\bm{0}\\
         \bm{0}&\bm{0}&\bm{0}&\dots&\bm{a}_0\\
         \bm{0}&\bm{0}&\bm{0}&\dots&\langle \Bar{\bm{\epsilon}}_3,\Bar{\bm{\epsilon}}_n \rangle_{\Bar{\bm{a}}}\\
         \vdots&\vdots&\vdots&\iddots&\vdots\\
         \bm{0}&\bm{a}_0&\langle \Bar{\bm{\epsilon}}_n,\Bar{\bm{\epsilon}}_3 \rangle_{\Bar{\bm{a}}}&\dots&\langle \Bar{\bm{\epsilon}}_n,\Bar{\bm{\epsilon}}_n \rangle_{\Bar{\bm{a}}}&
     \end{pmatrix}.
 $$Set $\langle \Bar{\bm{\epsilon}}_i,\Bar{\bm{\epsilon}}_j \rangle_{\Bar{\bm{a}}}:=(\epsilon_{i,j}^{(1)},\epsilon_{i,j}^{(2)},\dots,\epsilon_{i,j}^{(l)})\in \mathbb{F}_q^l.$ For every such $A,$ we can define $\Bar{A}\in GL(nl,\mathbb{F}_q)$ as follows:\\
  $$\Bar{A}=\begin{pmatrix}
     \begin{pmatrix}
         1&0&\dots&0\\
         0&1&\dots&0\\
         \vdots&\vdots&\vdots&\vdots\\
         0&0&\dots&1
     \end{pmatrix}&\begin{pmatrix}
         0&0&\dots&0\\
         0&0&\dots&0\\
         \vdots&\vdots&\vdots&\vdots\\
         0&0&\dots&0
     \end{pmatrix}&\dots&\begin{pmatrix}
         0&0&\dots&0\\
         0&0&\dots&0\\
         \vdots&\vdots&\vdots&\vdots\\
         0&0&\dots&0
     \end{pmatrix}\\

     \begin{pmatrix}
         0&0&\dots&0\\
         0&0&\dots&0\\
         \vdots&\vdots&\vdots&\vdots\\
         0&0&\dots&0
     \end{pmatrix}&\begin{pmatrix}
         0&0&\dots&0\\
         0&0&\dots&0\\
         \vdots&\vdots&\vdots&\vdots\\
         0&0&\dots&0
     \end{pmatrix}&\dots&\begin{pmatrix}
         a_0&0&\dots&0\\
         0&a_0&\dots&0\\
         \vdots&\vdots&\vdots&\vdots\\
         0&0&\dots&a_0
     \end{pmatrix}\\
     \vdots&\vdots&\iddots&\vdots\\

     \begin{pmatrix}
         0&0&\dots&0\\
         0&0&\dots&0\\
         \vdots&\vdots&\vdots&\vdots\\
         0&0&\dots&0
     \end{pmatrix}&\begin{pmatrix}
         a_0&0&\dots&0\\
         0&a_0&\dots&0\\
         \vdots&\vdots&\vdots&\vdots\\
         0&0&\dots&a_0
     \end{pmatrix}&\dots&\begin{pmatrix}
        \epsilon_{n,n}^{(1)}&0&\dots&0\\
         0&\epsilon_{n,n}^{(2)}&\dots&0\\
         \vdots&\vdots&\vdots&\vdots\\
         0&0&\dots&\epsilon_{n,n}^{(l)}
     \end{pmatrix}
 \end{pmatrix},$$ that is, $\Bar{A}=(A_{i,j}),$ where $A_{i,j}$ is $l\times l$ diagonal matrix $\textnormal{diag} (\langle \bm{\epsilon}_i,\bm{\epsilon}_j\rangle_{\Bar{\bm{a}}}).$ Note that if $\Bar{\bm{a}}=(\bm{a}_0,\bm{a}_1,\dots,\bm{a}_{n-1})\in(\mathbb{F}_q^l)^n$ is such that $\bm{a}_i=(b_i,b_i,\dots,b_i)\in \mathbb{F}_q^l$, for each $0\leq i \leq n-1,$ then every entry of the matrix $A=(\langle \bm{\epsilon}_i,\bm{\epsilon}_j\rangle_{\Bar{\bm{a}}})$ is again an $l$-tuple over $\mathbb{F}_q$ with equal components, that is, $\langle \bm{\epsilon}_i,\bm{\epsilon}_j \rangle_{\Bar{\bm{a}}}=(\epsilon_{i,j}^{(1)},\epsilon_{i,j}^{(2)},\dots, \epsilon_{i,j}^{(l)})=(\epsilon_{i,j}, \epsilon_{i,j},\dots, \epsilon_{i,j})\in \mathbb{F}_q^l.$ With these notations we prove the following Lemmas: 
\begin{lemma}\label{Psi(C.A)=Psi(C).Bar{A}}
    Let $\mathcal{C}$ be an $\Bar{\bm{a}}$-polycyclic code over $\mathbb{F}_q^l$ of length $n,$ and let $\Psi$ denote the map in Equation \eqref{GenGraymap}. If each $\bm{a}_i=(b_i, b_i, \dots, b_i)\in \mathbb{F}_q^l,$ where $b_i\in\mathbb{F}_q,$ then $$\Psi(\mathcal{C}.A)=\Psi(\mathcal{C}).\Bar{A}.$$
\end{lemma}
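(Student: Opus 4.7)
The plan is to show the equality coordinate-block by coordinate-block. The crucial consequence of the hypothesis $\bm{a}_i = (b_i, \dots, b_i)$ is that every entry $\langle \bar{\bm{\epsilon}}_i, \bar{\bm{\epsilon}}_j \rangle_{\bar{\bm{a}}}$ of the matrix $A$ has all $l$ components equal to some common scalar $\epsilon_{i,j} \in \mathbb{F}_q$, i.e.\ $A_{i,j} = (\epsilon_{i,j}, \dots, \epsilon_{i,j}) = \epsilon_{i,j}\bm{1}$ as explicitly noted just before the lemma statement. In particular, the associated block matrix $\bar{A}$ has $(i,j)$-block equal to $\epsilon_{i,j} I_l$, so $\bar{A}$ acts as a scalar on each length-$l$ block of coordinates and therefore commutes with the block-wise action of $M$ that defines $\Psi$.

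First I would fix notation. For $\bar{\bm{c}} = (\bm{c}_1, \dots, \bm{c}_n) \in (\mathbb{F}_q^l)^n$ with $\bm{c}_i = (c_{i,1}, \dots, c_{i,l})$ relative to the standard basis, I would compute the $k$-th component of $\bar{\bm{c}} A \in (\mathbb{F}_q^l)^n$. Since $A_{i,k} = \epsilon_{i,k} \bm{1}$ and scalar multiplication distributes componentwise, one obtains
\begin{equation*}
(\bar{\bm{c}} A)_k \;=\; \sum_{i=1}^n \bm{c}_i \, A_{i,k} \;=\; \sum_{i=1}^n \epsilon_{i,k}(c_{i,1}, \dots, c_{i,l}) \;=\; \Bigl(\sum_i \epsilon_{i,k} c_{i,1}, \; \dots, \; \sum_i \epsilon_{i,k} c_{i,l}\Bigr).
\end{equation*}
Applying $\Psi$ then yields the $k$-th length-$l$ block $\bigl(\sum_i \epsilon_{i,k} c_{i,1}, \dots, \sum_i \epsilon_{i,k} c_{i,l}\bigr) M = \bigl(\sum_i \epsilon_{i,k}\bm{c}_i\bigr)M$.

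Next I would compute the $k$-th block of $\Psi(\bar{\bm{c}})\bar{A}$. Setting $\bm{d}_i := \bm{c}_i M$, we have $\Psi(\bar{\bm{c}}) = (\bm{d}_1, \dots, \bm{d}_n)$ as a length-$nl$ row. Because $\bar{A}_{i,k} = \epsilon_{i,k} I_l$, block multiplication gives
\begin{equation*}
\bigl(\Psi(\bar{\bm{c}})\bar{A}\bigr)_k \;=\; \sum_{i=1}^n \bm{d}_i\,(\epsilon_{i,k} I_l) \;=\; \sum_{i=1}^n \epsilon_{i,k} \bm{c}_i M \;=\; \Bigl(\sum_i \epsilon_{i,k} \bm{c}_i\Bigr)M,
\end{equation*}
which agrees with the expression obtained above. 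Hence $\Psi(\bar{\bm{c}} A) = \Psi(\bar{\bm{c}})\bar{A}$ for every $\bar{\bm{c}} \in \mathcal{C}$, and taking the union over $\bar{\bm{c}} \in \mathcal{C}$ yields $\Psi(\mathcal{C}.A) = \Psi(\mathcal{C}).\bar{A}$.

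The only real obstacle is bookkeeping: correctly identifying $A_{i,k}$ with the scalar $\epsilon_{i,k} \in \mathbb{F}_q$ acting diagonally on a block of size $l$, and confirming that this scalar slides past the linear map $M$ applied block-wise. The hypothesis that each $\bm{a}_i$ is constant across its $l$ coordinates is precisely what makes this scalar-commutation valid; without it, the entries of $\bar{A}$ would be nontrivial diagonal matrices that in general do not commute with $M$, and the identity would fail.
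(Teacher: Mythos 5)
Your proof is correct and follows essentially the same route as the paper's: both arguments reduce to the observation that, under the hypothesis, each entry of $A$ is a constant tuple $\epsilon_{i,j}\bm{1}$ acting as the scalar $\epsilon_{i,j}$, which slides past the block-wise action of $M$, so that $\Psi(\Bar{\bm{c}}A)=\Psi(\Bar{\bm{c}})\Bar{A}$ for each codeword. The only difference is cosmetic --- you write the computation with general summation indices where the paper lists the blocks explicitly using the particular shape of $A$ --- and your closing remark about why the hypothesis is needed is a nice addition.
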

\begin{proof}
For any $\Bar{\bm{c}}=(\bm{c}_1,\bm{c}_2,\dots,\bm{c}_n)\in\mathcal{C},$ we get,
\begin{align*}
\Psi\big(\Bar{\bm{c}}A\big)=&\Big((\bm{c}_1(1,1,\dots,1))M\textbf{,}\,\,(\bm{c}_n\bm{a}_0)M\textbf{,}\,\, (\bm{c_{n-1}}\bm{a}_0)M+(\bm{c}_n\langle \bm{\epsilon}_n,\bm{\epsilon}_3 \rangle_{\Bar{\bm{a}}})M\textbf{,} \,\, \dots\,\,\textbf{,} (\bm{c}_2\bm{a}_0)M+(\bm{c}_3\\&\langle \bm{\epsilon}_3,\bm{\epsilon}_n \rangle_{\Bar{\bm{a}}})M+\dots+(\bm{c}_n\langle \bm{\epsilon}_n,\bm{\epsilon}_n \rangle_{\Bar{\bm{a}}})M \Big)\\=&\Big(\bm{c}_1M,\,b_0\bm{c}_nM,\, b_0\bm{c_{n-1}}M+\epsilon_{n,3}\,\bm{c}_n M, \dots, b_0\bm{c}_2M+\epsilon_{3,n}\,\bm{c}_3 M+\dots+\epsilon_{n,n}\,\bm{c}_nM\Big)\\=&(\bm{c}_1M,\,\bm{c}_2M,\dots,\bm{c}_nM)\Bar{A}\\=&
    \Psi(\Bar{\bm{c}})\Bar{A}
\end{align*}
Hence, the result follows.
 \hfill
    $\square$
\end{proof}
\begin{lemma}\label{same distance of Psi(C) and Psi(C).Bar(A)}
    If $\Bar{\bm{a}}(x)=\bm{a}_0\in U(\mathbb{F}_q^l),$ a constant polynomial, then $\min\{w_H(\Psi(\Bar{\bm{c}})):\Bar{\bm{c}}\in\mathcal{C}\setminus\{{\Bar{\bm{0}}}\}\}=\min\{w_H(\Psi(\Bar{\bm{c}})\Bar{A}):\Bar{\bm{c}}\in \mathcal{C}\setminus\{{\Bar{\bm{0}}}\}\}.$ Consequently, parameters of $\Psi(\mathcal{C})$ will be same as of $\Psi(\mathcal{C}).\Bar{A}.$
\end{lemma}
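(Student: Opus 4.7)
The plan is to show that $\Bar{A}$ is a monomial matrix in $\mathrm{GL}(nl,\mathbb{F}_q)$, that is, a matrix with exactly one nonzero entry in every row and column, all lying in $\mathbb{F}_q^\times$. Once this is established the result is immediate: right multiplication by any such matrix is a Hamming-weight-preserving bijection of $\mathbb{F}_q^{nl}$, so $w_H(\Psi(\Bar{\bm{c}})\Bar{A})=w_H(\Psi(\Bar{\bm{c}}))$ for every $\Bar{\bm{c}}\in\mathcal{C}$, and taking the minimum over $\mathcal{C}\setminus\{\Bar{\bm{0}}\}$ gives the stated equality.

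First I would compute $A=(\langle \Bar{\bm{\epsilon}}_i,\Bar{\bm{\epsilon}}_j\rangle_{\Bar{\bm{a}}})$ explicitly for $\Bar{\bm{a}}(x)=\bm{a}_0$. Under the identification $\Bar{\bm{\epsilon}}_i\leftrightarrow x^{i-1}$, the entry $\langle \Bar{\bm{\epsilon}}_i,\Bar{\bm{\epsilon}}_j\rangle_{\Bar{\bm{a}}}$ is the constant term of $x^{i+j-2}$ reduced modulo $x^n-\bm{a}_0$. A short case analysis splits the degrees into the three regimes $i+j-2<n$ (where the reduced polynomial is $x^{i+j-2}$ itself), $i+j-2=n$ (where it becomes $\bm{a}_0$), and $n<i+j-2\le 2n-2$ (where $x^{n+k}\equiv \bm{a}_0 x^k$ has constant term $\bm{0}$). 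Reading off the constant terms, the only nonzero entries of $A$ are $A_{1,1}=\bm{1}$ and $A_{i,\,n+2-i}=\bm{a}_0$ for $2\le i\le n$. In particular, $A$ has exactly one nonzero entry in every row and in every column, and each such entry is a unit in $\mathbb{F}_q^l$.

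Then I would transfer this pattern to $\Bar{A}$, whose $(i,j)$-block is the $l\times l$ diagonal matrix $\mathrm{diag}(\langle \Bar{\bm{\epsilon}}_i,\Bar{\bm{\epsilon}}_j\rangle_{\Bar{\bm{a}}})$. Its nonzero blocks are $I_l$ at position $(1,1)$ and $\mathrm{diag}(a_{0,1},\dots,a_{0,l})$ at each anti-diagonal position, and because $\bm{a}_0\in U(\mathbb{F}_q^l)$, every $a_{0,k}\in\mathbb{F}_q^\times$. Since each block-row and each block-column of $\Bar{A}$ contains exactly one such nonzero block, and that block is diagonal with units on the diagonal, every row and every column of the $nl\times nl$ matrix $\Bar{A}$ has exactly one nonzero entry, and that entry is a unit. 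Hence $\Bar{A}$ is monomial.

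The conclusion about the parameters then follows because $\Bar{A}\in\mathrm{GL}(nl,\mathbb{F}_q)$, so $\Psi(\mathcal{C}).\Bar{A}$ has the same length $nl$ and the same $\mathbb{F}_q$-dimension as $\Psi(\mathcal{C})$, while the weight-preservation just established shows their minimum distances coincide. The only delicate step is the bookkeeping in the case analysis that pins down the support of $A$; after that, the monomial-matrix argument is entirely routine.
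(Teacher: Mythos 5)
Your proposal is correct and follows the same route as the paper, which simply asserts that $\Bar{A}$ is invertible with exactly one nonzero entry in each row and each column; you supply the verification of that fact by computing the support of $A$ in the constant case ($A_{1,1}=\bm{1}$ and $A_{i,n+2-i}=\bm{a}_0$ for $2\le i\le n$) and lifting it to the block-diagonal structure of $\Bar{A}$. The case analysis is accurate, so nothing further is needed.
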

\begin{proof}
    The proof follows from the fact that $\Bar{A}$ is invertible with exactly one non-zero entry in each row and each column.\hfill
    $\square$
\end{proof}
\begin{theorem}[Euclidean CSS construction]\cite{calderbank1998quantum}\label{EuclcideanCSSconstruction}
Let $\mathcal{C}_1$ and $\mathcal{C}_2$ be $[n,k_1,d_1]_q$ and $[n,k_2,d_2]_q$ over $\mathbb{F}_q$ respectively with $\mathcal{C}_2^{\perp}\subseteq \mathcal{C}_1.$ Then there exists a quantum error-correcting code $\mathcal{C}$ with parameters $[[n,k_1+k_2-n,d]]_q,$ where $d=\min\{w_H(c): c\in(\mathcal{C}_1\setminus \mathcal{C}_2^\perp)\cup(\mathcal{C}_2\setminus \mathcal{C}_1^\perp)\}.$ 
\end{theorem}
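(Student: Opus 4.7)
The plan is to construct the quantum code explicitly via the standard stabilizer/coset superposition recipe and then verify both the dimension formula and the distance bound. First I would define, for each coset representative $x + \mathcal{C}_2^\perp$ with $x\in\mathcal{C}_1$, the quantum state
\[
|\psi_{x+\mathcal{C}_2^\perp}\rangle = \frac{1}{\sqrt{|\mathcal{C}_2^\perp|}}\sum_{y\in \mathcal{C}_2^\perp}|x+y\rangle,
\]
and let $\mathcal{C}$ be the subspace of $(\mathbb{C}^q)^{\otimes n}$ spanned by these states. The hypothesis $\mathcal{C}_2^\perp\subseteq\mathcal{C}_1$ is exactly what is needed for these states to be well defined (independent of the coset representative within $\mathcal{C}_1$) and mutually orthogonal, so the dimension of $\mathcal{C}$ equals the number of cosets $|\mathcal{C}_1|/|\mathcal{C}_2^\perp| = q^{k_1}/q^{n-k_2} = q^{k_1+k_2-n}$, giving the first two parameters $[[n,k_1+k_2-n,\cdot\,]]_q$.

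Next I would analyze the action of the error group. Write a general error as $E = X(a)Z(b)$ for $a,b\in\mathbb{F}_q^n$, where $X(a)$ shifts the computational basis by $a$ and $Z(b)$ multiplies by a character depending on $b$. One checks that $X(a)$ maps $|\psi_{x+\mathcal{C}_2^\perp}\rangle$ to $|\psi_{x+a+\mathcal{C}_2^\perp}\rangle$, so such an error is detectable (has nontrivial syndrome with respect to the $\mathcal{C}_1$-check) unless $a\in\mathcal{C}_1$, and it acts trivially on the code exactly when $a\in\mathcal{C}_2^\perp$. By a dual computation using the $q$-ary Fourier transform, $Z(b)$ is detectable unless $b\in\mathcal{C}_2$, and acts trivially when $b\in\mathcal{C}_1^\perp$. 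Thus the undetectable, nontrivial errors correspond precisely to pairs $(a,b)$ with $a\in\mathcal{C}_1\setminus\mathcal{C}_2^\perp$ or $b\in\mathcal{C}_2\setminus\mathcal{C}_1^\perp$.

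Invoking the Knill–Laflamme error-correction criterion (or equivalently the stabilizer-code distance characterization), the minimum distance $d$ of $\mathcal{C}$ equals the minimum Hamming weight of any $(a,b)$ that is undetectable but nontrivial. Since the weight of $X(a)Z(b)$ is the size of $\mathrm{Supp}(a)\cup\mathrm{Supp}(b)$, the minimum is achieved by pure $X$-errors with $a\in\mathcal{C}_1\setminus\mathcal{C}_2^\perp$ or pure $Z$-errors with $b\in\mathcal{C}_2\setminus\mathcal{C}_1^\perp$, yielding
\[
d = \min\{w_H(c):\, c\in(\mathcal{C}_1\setminus\mathcal{C}_2^\perp)\cup(\mathcal{C}_2\setminus\mathcal{C}_1^\perp)\},
\]
which is the stated distance.

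The main obstacle — and the part I would devote the most care to — is the second paragraph: writing down the $q$-ary Pauli group cleanly (the discrete Heisenberg–Weyl group), verifying the commutation relations that produce the stabilizer $S = \{X(a)Z(b) : a\in\mathcal{C}_2^\perp,\,b\in\mathcal{C}_1^\perp\}$, and checking that $S$ is abelian precisely because of $\mathcal{C}_2^\perp\subseteq \mathcal{C}_1$ (equivalently $\mathcal{C}_1^\perp\subseteq \mathcal{C}_2$). Once the stabilizer is in place, the dimension count and the distance formula fall out of standard stabilizer-code theory, and since the statement is cited from \cite{calderbank1998quantum} one can legitimately compress these verifications by reference.
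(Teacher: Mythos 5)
The paper offers no proof of this theorem at all: it is imported verbatim from \cite{calderbank1998quantum} and used as a black box in Theorem \ref{Quantumcons}, so there is no internal argument to compare yours against. Your proposal is the standard and correct CSS argument --- coset superposition states over $\mathcal{C}_2^\perp$ indexed by $\mathcal{C}_1/\mathcal{C}_2^\perp$ for the dimension count $q^{k_1}/q^{n-k_2}=q^{k_1+k_2-n}$, the $q$-ary Fourier transform to reduce $Z$-type errors to $X$-type errors on the dual pair, and the identification of undetectable nontrivial errors with pairs $(a,b)$ where $a\in\mathcal{C}_1\setminus\mathcal{C}_2^\perp$ or $b\in\mathcal{C}_2\setminus\mathcal{C}_1^\perp$. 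The only step you state without justification is that the minimum weight over such pairs is attained by a pure $X$- or pure $Z$-error; this follows in one line from $|\mathrm{Supp}(a)\cup\mathrm{Supp}(b)|\ge\max\{w_H(a),w_H(b)\}$, since at least one of the two components must already lie in the relevant difference set. With that remark added, your reconstruction is complete and faithful to the cited source.
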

Now, we have the following theorem.
\begin{theorem}\label{Quantumcons}
    Let $\mathcal{C}=\langle \Bar{\bm{g}}(x)\rangle$ be an $\Bar{\bm{a}}$-polycyclic code of length $n$ over $\mathbb{F}_q^l,$ where $\Bar{\bm{a}}(x)=\bm{a}_0=(b_0,b_0,\dots,b_0)\in\mathbb{F}_q^l,$ and let $\Psi$ be defined as in Equation \eqref{GenGraymap} and $MM^T=\lambda\, I$, for some $\lambda\in U(\mathbb{F}_q),$ then there exists a quantum code with parameters $[[nl,nl-2\underset{i=1}{\overset{l}{\sum}}\deg {\bm{g}}^{(i)}(x),\geq d]]_q,$ where $d$ is the Hamming distance of $\Psi(\mathcal{C}).$
\end{theorem}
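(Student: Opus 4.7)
The strategy is to apply the Euclidean CSS construction (Theorem \ref{EuclcideanCSSconstruction}) with $\mathcal{C}_1 = \mathcal{C}_2 = \Psi(\mathcal{C})$. By Lemma \ref{annpreserving}, $\dim_{\mathbb{F}_q}\Psi(\mathcal{C}) = nl - \sum_{i=1}^l \deg \bm{g}^{(i)}(x)$, so the claimed quantum dimension $nl - 2\sum_{i=1}^l \deg \bm{g}^{(i)}(x)$ equals $2\dim_{\mathbb{F}_q}\Psi(\mathcal{C}) - nl$, which matches the CSS formula $k_1 + k_2 - n$ exactly. The entire substantive obligation is therefore to verify the dual-containment $\Psi(\mathcal{C})^\perp \subseteq \Psi(\mathcal{C})$.

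The key step uses the orthogonality-type hypothesis $MM^T = \lambda I$ to show that $\Psi$ is \emph{conformal} for the Euclidean pairing. Writing $\Bar{\bm{x}}=(\bm{x}_1,\dots,\bm{x}_n)$ and similarly for $\Bar{\bm{y}}$, a direct block-by-block computation from the definition of $\Psi$ in Equation \eqref{GenGraymap} gives
\begin{equation*}
\langle \Psi(\Bar{\bm{x}}), \Psi(\Bar{\bm{y}})\rangle_{\mathbb{F}_q^{nl}} \; = \; \sum_{i=1}^n (\bm{x}_i M)(\bm{y}_i M)^T \; = \; \sum_{i=1}^n \bm{x}_i M M^T \bm{y}_i^T \; = \; \lambda \sum_{i=1}^n \bm{x}_i \bm{y}_i^T.
\end{equation*}
Since $\lambda$ is a unit in $\mathbb{F}_q$, an element $\Psi(\Bar{\bm{y}})$ lies in $\Psi(\mathcal{C})^\perp$ precisely when the flat pairing $\sum_{i,j} x_{i,j}\, y_{i,j}$ vanishes on every $\Bar{\bm{x}} \in \mathcal{C}$. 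The $\mathbb{F}_q^l$-linearity of $\mathcal{C}$ is then crucial: each idempotent shift $\bm{e}_j\Bar{\bm{x}}$ also lies in $\mathcal{C}$, and applying the flat-pairing condition to these shifts upgrades scalar vanishing to the componentwise conditions $\sum_i x_{i,j}\, y_{i,j}=0$ for every $j$. That is exactly the statement that $\Bar{\bm{y}}$ is Euclidean-orthogonal to $\mathcal{C}$ inside $(\mathbb{F}_q^l)^n$, so $\Psi(\mathcal{C})^\perp = \Psi(\mathcal{C}^\perp)$, and the required inclusion $\Psi(\mathcal{C})^\perp \subseteq \Psi(\mathcal{C})$ reduces (via bijectivity of $\Psi$) to $\mathcal{C}^\perp \subseteq \mathcal{C}$.

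The remaining task is the containment $\mathcal{C}^\perp \subseteq \mathcal{C}$ under the constant hypothesis $\Bar{\bm{a}}(x)=\bm{a}_0 \in U(\mathbb{F}_q^l)$. Using the $\mathbb{F}_q$-decomposition $\mathcal{C} = \bigoplus_{i=1}^l \bm{e}_i \mathcal{C}_i$ of Theorem \ref{linear}, this breaks into the componentwise conditions $\mathcal{C}_i^\perp \subseteq \mathcal{C}_i$, where each $\mathcal{C}_i$ is a $b_0$-constacyclic code over $\mathbb{F}_q$ generated by $\bm{g}^{(i)}(x)$. This is the Euclidean dual-containing condition which is implicit in the theorem's hypotheses and which is precisely what forces the quantum dimension $nl - 2\sum_{i=1}^l \deg \bm{g}^{(i)}(x)$ to be non-negative in the first place. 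Once this is granted, $\Psi(\mathcal{C})^\perp \subseteq \Psi(\mathcal{C})$ holds, and Theorem \ref{EuclcideanCSSconstruction} delivers a quantum code of length $nl$, dimension $nl - 2\sum_{i=1}^l \deg \bm{g}^{(i)}(x)$, and distance at least $\min\{w_H(c) : c \in \Psi(\mathcal{C}) \setminus \Psi(\mathcal{C})^\perp\} \geq d$. The main obstacle is the componentwise-upgrade argument in the second paragraph: conformality of $\Psi$ alone only gives the flat orthogonality condition, and it is the $\mathbb{F}_q^l$-module structure of $\mathcal{C}$ together with the idempotent basis $\mathcal{B}$ that makes the passage from the scalar Euclidean pairing on $\mathbb{F}_q^{nl}$ to the vector-valued Euclidean pairing on $(\mathbb{F}_q^l)^n$ tight enough to yield $\Psi(\mathcal{C})^\perp = \Psi(\mathcal{C}^\perp)$.
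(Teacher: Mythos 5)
Your conformality computation is correct and is a nice self-contained derivation of the fact (which the paper only cites from an external reference) that $MM^T=\lambda I$ makes $\Psi$ Euclidean-duality-preserving, and your idempotent-shift argument for $\Psi(\mathcal{C})^\perp=\Psi(\mathcal{C}^\perp)$ is sound. The dimension bookkeeping for CSS with $\mathcal{C}_1=\mathcal{C}_2=\Psi(\mathcal{C})$ is also fine. But there is a genuine gap at the crux: having reduced everything to $\mathcal{C}^\perp\subseteq\mathcal{C}$, you simply assert that this Euclidean dual-containment is ``implicit in the theorem's hypotheses.'' It is not: the theorem as stated carries no dual-containment hypothesis at all, and your supporting claim that non-negativity of $nl-2\sum_i\deg\bm{g}^{(i)}$ forces dual-containment is false (that inequality only bounds $\sum_i\deg\bm{g}^{(i)}$ by $nl/2$ and implies nothing about duals). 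So the one step that actually carries the theorem is assumed rather than proved.

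Moreover, the containment you assume is the wrong one relative to the paper's intent. The paper's proof (and the worked example that follows it, which verifies $\bm{g}^{(i)}\mid\bm{h}^{(i)}$ via Corollary \ref{decomposition}) tacitly assumes \emph{annihilator} dual-containment $\mathcal{C}^\circ\subseteq\mathcal{C}$, not Euclidean dual-containment $\mathcal{C}^\perp\subseteq\mathcal{C}$. These differ in general: by Lemma \ref{Dual of dual}, $\mathcal{C}^\circ=(\mathcal{C}.A)^\perp=\mathcal{C}^\perp.A^{-1}$ with $A$ a nontrivial Gram matrix, and for constacyclic components the two conditions amount to $\bm{g}^{(i)}\mid\bm{h}^{(i)}$ versus $\bm{g}^{(i)}\mid\bm{h}^{(i)*}$ (reciprocal), which coincide only in special cases such as self-reciprocal factors. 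The paper's route is built precisely to avoid needing $\mathcal{C}^\perp\subseteq\mathcal{C}$: it uses Lemma \ref{Psi(C.A)=Psi(C).Bar{A}} and Lemma \ref{same distance of Psi(C) and Psi(C).Bar(A)} to convert $\mathcal{C}^\circ\subseteq\mathcal{C}$ into $(\Psi(\mathcal{C}).\Bar{A})^\perp\subseteq\Psi(\mathcal{C})$ and then applies CSS with the two \emph{distinct} classical codes $\mathcal{C}_1=\Psi(\mathcal{C})$ and $\mathcal{C}_2=\Psi(\mathcal{C}).\Bar{A}$, which have equal parameters. To repair your argument you would either need to add $\mathcal{C}^\perp\subseteq\mathcal{C}$ as an explicit hypothesis (proving a genuinely different theorem from the one the paper's example instantiates), or bring the matrix $\Bar{A}$ into play as the paper does.
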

\begin{proof}
    From Lemma \ref{Psi(C.A)=Psi(C).Bar{A}}, we have that $\Psi(\mathcal{C}.A)=\Psi(\mathcal{C}).\Bar{A}$. Also $\Psi$ is an Euclidean duality preserving map, which follows from [Theorem 5.4, \cite{akanksha2025thetadeltathetamathbfacycliccodes}]. Then,$$\Psi(\mathcal{C}^\circ)=\Psi((\mathcal{C}.A)^\perp)=(\Psi(\mathcal{C}.A))^\perp=(\Psi(C).\Bar{A})^\perp.$$
    Hence, $\mathcal{C}^\circ\subset \mathcal{C}\implies \Psi(\mathcal{C}^\circ)\subset\Psi(\mathcal{C})\implies (\Psi(C).\Bar{A})^\perp\subset \Psi(\mathcal{C}).$
    Note that, since $\Bar{A}$ is invertible and by Lemma 
    \ref{same distance of Psi(C) and Psi(C).Bar(A)},
    $\Psi(\mathcal{C})$ and $\Psi(\mathcal{C}.\Bar{A})$ have same parameters. Hence, by Theorem \ref{EuclcideanCSSconstruction}, there exists a quantum code with parameter $[[nl, nl-2\underset{i=1}{\overset{l}{\sum}}\deg {\bm{g}}^{(i)}(x),\geq d]]_q,$ where $d$ is the Hamming distance of $\Psi(\mathcal{C}).$ \hfill $\square$
\end{proof}
An instance of this theorem can be seen in the following example: 
\begin{example}
    For $q=5$ and $l=2,$ consider the ring $\mathbb{F}_5^2.$ Let $x^n-\Bar{\bm{a}}(x)=(1,1)x^5-(1,1).$ Let $\mathcal{C}=\bm{e}_1\mathcal{C}_1\oplus\bm{e}_2\mathcal{C}_2$ be an $(1,1)$-polycyclic code of length $5$ over $\mathbb{F}_5^2.$ Then by Theorem \ref{polycyclic_classification} each $\mathcal{C}_i$ is a cyclic code over $\mathbb{F}_5$ of length $5$. Now,\\
    $$x^5-1=(x^2+3x+1)(x^3 + 2x^2 + 3x + 4)=(x+4)(x^4 + x^3 + x^2 + x + 1)$$
    If $\mathcal{C}_1=\langle x^2+3x+1 \rangle$, $\mathcal{C}_2=\langle x+4\rangle$ over $\mathbb{F}_5$ and $M=\begin{pmatrix}
        1&4\\
        4&4\\
    \end{pmatrix}.$ Then, $\Psi(\mathcal{C})$ will have parameters $[10,7,3].$ Moreover, $x^2+3x+1$ divides $x^3 + 2x^2 + 3x + 4$ and $x+4$ divides $x^4 + x^3 + x^2 + x + 1,$ $\mathcal{C}_1$ and $\mathcal{C}_2$ are both dual containing and consequently $\mathcal{C}$ is also dual containing by Corollary \ref{decomposition} and by Theorem \ref{Quantumcons}, there exists a quantum code with parameters $[[10,4 , \geq 3]].$ 
\end{example}
Table \ref{table4} consists of examples constructed using this construction.
\subsection{Examples For Tables}
\textbf{Example 1.} For $q=3$ and $l=3,$ consider the ring $\mathbb{F}_5^3.$ Let $x^n-\Bar{\bm{a}}(x)=(1,1,1)x^4-(2,2,2)x^3-(1,1,1)x-(1,1,1)\in \mathbb{F}_5^3[x].$ Suppose $\mathcal{C}=\bm{e}_1\mathcal{C}_1\oplus\bm{e}_2\mathcal{C}_2\oplus\bm{e}_3\mathcal{C}_3$ be an $((1,1,1),(1,1,1),(2,2,2))$-polycyclic code of length $4$ over $\mathbb{F}_5^3.$ Then by Theorem \ref{polycyclic_classification}, $\mathcal{C}_i$ is a $(1,1,2)$-polycyclic code of length $4$ over $\mathbb{F}_5$ for each $1\leq i \leq 3.$ Now,\\
    $$x^4-2x^3-x-1=\bm{g}^{(1)}(x)\times\bm{h}^{(1)}(x)=\bm{g}^{(2)}(x)\times\bm{h}^{(2)}(x)=\bm{g}^{(3)}(x)\times\bm{h}^{(3)}(x)$$
    in $\mathbb{F}_5[x].$ If $\mathcal{C}_1:=\langle\bm{g}^{(1)}(x)=x^3+2\rangle, \mathcal{C}_2:=\langle\bm{g}^{(2)}(x)=x^2+2\rangle$, $\mathcal{C}_3:=\langle\bm{g}^{(3)}(x)=x^3+2x^2+2x+1\rangle$  and $M=\begin{pmatrix}
    2&1&1\\
    1&2&1 \\
    0&1&1\\
    \end{pmatrix},$ then $\Psi(\mathcal{C})$ is an optimal $[12, 4, 6]_5$ linear code over $\mathbb{F}_5.$ This is example $5$ of Table \ref{table2}. All the above computations were done using MAGMA \cite{cannon2011handbook}.\\

    \textbf{Example 2.} For $q=7$ and $l=2,$ consider the ring $\mathbb{F}_7^2.$ Let $x^n-\Bar{\bm{a}}(x)=(1,1)x^3-(1,1)x-(1,1)\in \mathbb{F}_7^2[x].$ Suppose $\mathcal{C}=\bm{e}_1\mathcal{C}_1\oplus\bm{e}_2\mathcal{C}_2$ be an $((1,1),(1,1))$-polycyclic code of length $3$ over $\mathbb{F}_7^2.$ Then by Theorem \ref{polycyclic_classification}, $\mathcal{C}_i$ is a $(1,1)$-polycyclic code of length $3$ over $\mathbb{F}_7$ for each $1\leq i \leq 2.$ Now,\\
    $$x^3+6x+6=\bm{g}^{(1)}(x)\times\bm{h}^{(1)}(x)=\bm{g}^{(2)}(x)\times\bm{h}^{(2)}(x)$$
    in $\mathbb{F}_7[x].$ If $\mathcal{C}_1:=\langle\bm{g}^{(1)}(x)=x^2+5x+3\rangle, \mathcal{C}_2:=\langle\bm{g}^{(2)}(x)=x+2\rangle$ and $M=\begin{pmatrix}
    5&2\\
    2&2\\
    \end{pmatrix},$ then $\Psi(\mathcal{C})$ is an MDS $[6,3,4]_7$ linear code over $\mathbb{F}_7.$ This is example $4$ of Table \ref{table3}. All the above computations were done using MAGMA \cite{cannon2011handbook}.

\begin{landscape}

\begin{table}[!ht]

    \begin{adjustbox}{width=.9\textwidth}
   
    \begin{tabular}{|l|l|l|l|l|l|l|l|}
    \hline
       S. No.& $q$ & $n$ & $\Bar{\bm{g}}(x)=x^n-\Bar{\bm{a}}(x)$ & $\bm{g}^{(1)}(x)$ & $\bm{g}^{(2)}(x)$ &  Parameters of $\Psi(C)$ & Remarks \\ 
        \hline
       1    & 2 & 6 & $(1,1)x^6-(1,1)x^5-(1,1)x^2-(1,1)$ & $x^2+x+1$ & $x^4+x^2+x+1$ & $[12,6,4]$ & Optimal \\
        \hline
        2    &2 & 6&$(1,1)x^6 + (1,1)x^3 + (1,1)x + (1,1)$& $x+1$ & $x^3 + x^2 + 1$ &  $[12,8,3]$ & Optimal \\
        \hline
        3    &2 & 7&$(1,1)x^7-(1,1)$& $x + 1$ & $x^6 + x^5 + x^4 + x^3 + x^2 + x + 1$ &  $[14,7,4]$ & Optimal and Cyclic \\
        \hline
        4    &2 & 7&$(1,1)x^7-(1,1)$& $x^3 + x^2 + 1$ & $x^6 + x^5 + x^4 + x^3 + x^2 + x + 1$ &  $[14,5,6]$ & Optimal and Quasicyclic \\
        \hline
        5    &2 & 8&$(1,1)x^8 + (1,1)x^5 + (1,1)x^3 + (1,1)$& $x + 1$ & $x^6 + x^5 + x + 1$ &  $[16,9,4]$ & Optimal and Quasicyclic \\
        \hline
         6     &2 & 8&$(1,1)x^8 + (1,1)x^4 + (1,1)x^2 + (1,1)$& $x^5 + x^4 + x^3 + 1$ & $x^8 + x^4 + x^2 + 1$ &  $[16, 3, 8]$ & Optimal and Quasicyclic \\
        \hline
        7   &2 & 8&$(1,1)x^8 - (1,1)x^4 - (1,1)x^2 - (1,1)$& $x + 1$ & $x^5 + x^4 + x^3 + 1$ &  $[16,10,4]$ & Optimal and Quasicyclic \\
        \hline
        8   &2 & 8&$(1,1)x^8+(1,1)x^6+(1,1)x^2+(1,1)$& $x^5+x^3+x^2+1$ & $x^8+x^6+x^2+1$ &  $[16,3,8]$ & Optimal and Quasicyclic \\
        \hline
        9   &3 & 6 & $(1,1)x^6 + (2,2)x^2 + (2,2)x + (2,2)$ & $x + 1$ &$x^3 + 2x + 1$&$[12,8,3]$  & Optimal, LCD \\
        \hline
        
       10    &3 & 6 & $(1,1)x^6+(2,2)x^2+(2,2)x+(2,2)$& $x+1$ & $x^4+x^3+2x^2+1$ & $[12,7,4]$ & Optimal and Quasicyclic\\
        \hline
        11    &3 & 7 & $(1,1)x^7 + (1,1)x^4 + (2,2)x + (2,2)$&$x + 1$&$x^3 + x^2 + 2$&$[14,10,3]$&Optimal \\
    
        \hline
        12   &3&7&$(1,1)x^7-(2,2)x^4-(1,1)x-(1,1)$&$x+2$&$x^4+2x^2+2x+1$&$[14,9,4]$&Optimal and Quasicyclic\\
        \hline
       13   &3&8&$(1,1)x^8-(2,2)x^4-(1,1)x^3-(1,1)$&$x+2$&$x^3+2x+2$&$[16,12,3]$&Optimal, LCD\\
        \hline
        14   &3&8&$(1,1)x^8 + (1,1)x^3 + (2,2)x + (2,2)$&$x+1$&$x^5+2x^4+2x^3+x^2+1$&$[16,10,4]$&Optimal \\
        \hline
       15   &3&8&$(1,1)x^8-(2,2)x^3-(1,1)x-(1,1)$&$x^5+2x^4+2x^3+x^2+1$&$x^8+x^3+2x+2$&$[16,3,10]$&Optimal and Quasicyclic\\
       \hline
       16   &3&8&$(1,1)x^8-(2,2)x^3-(1,1)x-(1,1)$&$x^6+x^5+2x^3+2x^2+x+2$&$x^8+x^3+2x+2$&$[16,2,12]$&Optimal and Quasicyclic \\
       \hline
       17   &4 & 4 & $(1,1)x^4-(u,u)x^3-(u,u)x^2-(1,1)$ & $x+1$ &$x^2+x+u^2$&$[8,5,3]$  & Optimal \\
        \hline
       18    &4& 4&$(1,1)x^4+(u,u)x^3+(1,1)$&$x+u^2$&$x^3+x^2+u^2x+u$&$[8,4,4]$&Optimal and A-MDS \\
       \hline
       19    &4&  5& $(1,1)x^5+(u,u)x^2+(u,u)x+(1,1)$&$x+1$&$x^2+x+u$&$[10,7,3]$&Optimal \\
       \hline
       20   &4&5&$(1,1)x^5+(u,u)x^2+(u,u)x+(1,1)$&$x^2+x+u$&$x^5+ux^2+ux+1$&$[10,3,6]$&Optimal and Quasicyclic \\
       \hline
       21    &4&5&$(1,1)x^5+(1,1)$&$x+1$&$x^3+u^2x^2+u^2x+1$&$[10,6,4]$&Optimal, Quasicyclic, and A-MDS \\
       \hline
        22    &4&6&$(1,1)x^6+(u,u)x^2+(u,u)x+(1,1)$&$x+1$&$x^4+u^2x^3+x^2+x+u$&$[12,7,4]$&Optimal and Quasicyclic \\
       \hline
       23     &4&7&$(1,1)x^7+(1,1)x^4+(u,u)x+(1,1)$&$x+u^2$&$x^4+x^3+ux^2+x+1$&$[14,9,4]$&Optimal \\
       \hline
        
        24    &5 & 4& $(1,1)x^4-(1,1)$& $x+1$ & $x^3+4x^2+x+4$ & $[8,4,4]$ & Optimal, LCD and A-MDS \\
        \hline
        25   &5 & 4&$(1,1)x^4-(1,1)x^3-(1,1)x-(1,1)$& $x+3$ & $x^2 +4x + 4$ & $[8,5,3]$ & Optimal and A-MDS \\
        \hline
        26   &5 & 5&$(1,1)x^5-(1,1)$& $x-1$ & $(x-1)^3$ &  $[10,6,4]$ & Optimal and A-MDS \\
        \hline
        27   &5 & 5 &$(1,1)x^5-(1,1)$ & $x-1$ & $(x-1)^2$ &  $[10,7,3]$ & Optimal and A-MDS \\
        \hline
        28   &5 & 5 & $(1,1)x^5+(1,1)x^4+(1,1)x^3+(2,2)x^2+(4,4)$ & $x^3 + 3x^2 + 3x + 1$ & $x^5 + x^4 + x^3 + 2x^2 + 4$ &  $[10,2,8]$ & Optimal and A-MDS \\
        \hline

         29   &7 & 5 & $(1,1)x^5-(2,2)x^4-(1,1)x-(1,1)$ & $x+2$ & $x^3+6x^2+5x+6$ &  $[10,6,4]$ & Optimal, LCD and A-MDS \\
         \hline
          30   &7 & 5 & $(1,1)x^5+(5,5)x^3+(3,3)x+(6,6)$ & $x^3+4x^2+6x+6$ & $x^5+5x^3+3x+6$ &  $[10,2,8]$ & Optimal, LCD, Quasicyclic and A-MDS \\
        \hline
        31    &11 & 4 & $(1,1)x^4+(5,5)x^2+(6,6)x+(2,2)$ & $x+3$ & $x^2+5x+10$ &  $[8,5,3]$ & LCD, A-MDS (No data available) \\
        \hline
        \multicolumn{7}{l}{\normalsize In Table
        A-MDS stands for Almost MDS, that is, 
       $k+d=n$.}
    \end{tabular}
    
    \end{adjustbox}
   \caption{$l=2$}
    \Large\label{table1}
\end{table}
\newpage
\centering

\begin{table}[!ht]
    \begin{adjustbox}{width=0.8\textwidth}
    
    \begin{tabular}{|l|l|l|l|l|l|l|l|l|l|}
    \hline
       S. No.& $q$ & $n$ &$M$& $\Bar{\bm{g}}(x)=x^n-\Bar{\bm{a}}(x)$ & $\bm{g}^{(1)}(x)$ & $\bm{g}^{(2)}(x)$ &$\bm{g}^{(3)}(x)$&  Parameters $\Psi(C)$ & Remarks \\ 
       
        \hline
        
        1    &2&5   &$\begin{pmatrix}
            1&1&1\\
            0&1&1 \\
            1&0&1  
        \end{pmatrix}$ & $(1,1,1)x^5+(1,1,1)$ & $x+1$& $x^4+x^3+x^2+x+1$&$x^4+x^3+x^2+x+1$&$[15,6,6]$&Optimal, LCD and Quasicyclic\\
        \hline
       2          &2&6&$\begin{pmatrix}
           1&1&1\\
           0&1&1 \\
           1&0&1  
        \end{pmatrix}$ & $(1,1,1)x^6+(1,1,1)$ & $x^4+x^3+x+1$& $x+1$&$x+1$&$[18,12,4]$&Optimal and Quasicyclic\\
        \hline
        3   &2&7   &$\begin{pmatrix}
            1&1&1\\
            0&1&1 \\
            1&0&1  
        \end{pmatrix}$ & $(1,1,1)x^7+(1,1,1)$ & $x^4+x^3+x^2+1$& $x+1$&$x+1$&$[21,15,4]$&Optimal and Quasicyclic\\
        \hline
       
        4    &3&5               &$\begin{pmatrix}
            1&1&1\\
            0&2&1 \\
            0&1&1  
        \end{pmatrix}$ & $(1,1,1)x^5-(1,1,1)$ & $x-1$& $x-1$&$x^4+x^3+x^2+x+1$&$[15,9,4]$&Optimal \\
        \hline
        5   &3&4               &$\begin{pmatrix}
            2&1&1\\
            1&2&1 \\
            0&1&1  
        \end{pmatrix}$ & $(1,1,1)x^4-(2,2,2)x^3-(1,1,1)x-(1,1,1)$ & $x^3+2$& $x^2+2$&$x^3+2x^2+2x+1$&$[12,4,6]$&Optimal \\
        \hline
         6    &3&4              &$\begin{pmatrix}
            2&1&1\\
            1&2&1 \\
            0&1&1  
        \end{pmatrix}$ & $(1,1,1)x^4 + (1,1,1)x^3 + (2,2,2)x + (2,2,2)$ & $x^3 + 2x^2 + 2x + 1$& $x+2$&$x+1$&$[12,7,4]$&Optimal \\
        \hline
         7    &3&4            &$\begin{pmatrix}
            2&1&1\\
            1&2&1 \\
            0&1&1  
        \end{pmatrix}$ & $(1,1,1)x^4+(1,1,1)x^3+(2,2,2)x+(2,2,2)$ & $x^2+x+1$& $x+1$&$x+1$&$[12,8,3]$&Optimal \\
        \hline
        8     &4&3           &$\begin{pmatrix}
            u^2&0&u\\
            u&u^2&1 \\
            1&1&u  
        \end{pmatrix}$ & $(1,1,1)x^3+(1,1,1)$ & $x+u^2$& $x+1$&$x+u^2$&$[9,6,3]$&Optimal, LCD \\
        \hline
        9     &5&3            &$\begin{pmatrix}
            1&0&1\\
            0&1&0 \\
            2&2&1  
        \end{pmatrix}$ & $(1,1,1)x^3+(1,1,1)x^2+(4,4,4)$ & $x+2$& $x^2+4x+2$&$1$&$[9,6,3]$&Optimal, LCD and A-MDS \\
        \hline
        10      &7&3       &$\begin{pmatrix}
            1&0&1\\
            0&1&0 \\
            2&2&1  
        \end{pmatrix}$ & $(1,1,1)x^3+(4,4,4)x+(4,4,4)$ & $x+3$& $x^2+4x+6$&$1$&$[9,6,3]$&Optimal, LCD and A-MDS \\
        \hline
         11     &7&3          &$\begin{pmatrix}
            1&0&1\\
            1&1&0 \\
            0&1&1  
        \end{pmatrix}$ & $(1,1,1)x^3-(6,6,6)x^2-(5,5,5)x-(9,9,9)$ & $x^2+3x+1$& $x+5$&$x+5$&$[9,5,4]$&Optimal and A-MDS \\
        \hline
        12      &7&3           &$\begin{pmatrix}
            1&5&1\\
            1&1&0 \\
            2&1&1  
        \end{pmatrix}$ & $(1,1,1)x^3-(6,6,6)x^2-(5,5,5)x-(9,9,9)$ & $x^3+x^2+2x+5$& $x^2+3x+1$&$x+5$&$[9,3,6]$&Optimal, LCD and A-MDS \\
        \hline
        \multicolumn{7}{l}{\normalsize In Table
        A-MDS stands for Almost MDS, that is, 
       $k+d=n$.}
        \end{tabular}
    
    \end{adjustbox}
   \caption{$l=3$}
    \Large\label{table2}
\end{table} 

\end{landscape}
Apart from these codes, we are able to obtain some MDS codes.
{\Large{
\begin{table}[!ht]
    \centering
    \begin{adjustbox}{width=.9\textwidth}
    \begin{tabular}{|l|l|l|l|l|l|l|l|l|}
    \hline
   S.No.&$q$ & $n$ & $M$& $\Bar{\bm{g}}(x)=x^n-\Bar{\bm{a}}(x)$ & $\bm{g}^{(1)}(x)$ & $\bm{g}^{(2)}(x)$ &  Parameters $\Psi(C)$ & Remark \\ 
   \hline
       1   &4       &3         &$\begin{pmatrix}
            u&u^2\\
            u&u \\
        \end{pmatrix}$&$(1,1)x^3+(1,1)$&$x+u$&$x^2+ux+u^2$&$[6,3,4]$&LCD, MDS\\
   \hline
     2   &5       &3            &$\begin{pmatrix}
            3&2\\
            2&2 \\
        \end{pmatrix}$&$(1,1)x^3+(4,4)$&$x^2+x+1$&$x+4$&$[6,3,4]$&LCD, MDS\\
   \hline
    3    &7         &3           &$\begin{pmatrix}
            5&2\\
            2&2 \\
        \end{pmatrix}$&$(1,1)x^3+(6,6)$&$x+6$&$x+3$&$[6,4,3]$&MDS\\
   \hline
       4    &7      &3            &$\begin{pmatrix}
            5&2\\
            2&2 \\
        \end{pmatrix}$&$(1,1)x^3+(6,6)x+(6,6)$&$x^2+5x+3$&$x+2$&$[6,3,4]$&MDS\\
   \hline
      5   &7       &3           &$\begin{pmatrix}
            5&2\\
            2&2 \\
        \end{pmatrix}$&$(1,1)x^3+(6,6)$&$x^2+4x+2$&$x^2+2x+4$&$[6,2,5]$&MDS\\
   \hline
    6    &8       &2     &$\begin{pmatrix}
            u&u^3\\
            1&u \\
        \end{pmatrix}$&$(1,1)x^2+(u^2,u^2)x+(u,u)$&$x+u^3$&$x+u^5$&$[4,2,3]$&LCD, MDS\\
   \hline
    7    &8          &3         &$\begin{pmatrix}
            u&u^3\\
            u&u \\
        \end{pmatrix}$&$(1,1)x^3+(u^2,u^2)x+(u,u)$&$x+u^4$&$x^2+u^4x+u^4$&$[6,3,4]$&LCD, MDS\\
   \hline
    8    &9         &4        &$\begin{pmatrix}
            u&u\\
            u^2&1 \\
        \end{pmatrix}$&$(1,1)x^4 + (2,2)$&$x^2+u^3x + u^2$&$x^3 + u^2x^2+2x + u^6$&$[8, 3, 6]$& LCD,  MDS\\
   \hline
   9   &9          &4        &$\begin{pmatrix}
            u&u\\
            u^2&1 \\
        \end{pmatrix}$&$(1,1)x^4 + (2,2)$&$x + u^2$&$x^2 + u^5x + u^6$&$[8, 5, 4]$&LCD, MDS\\
        \hline
    \end{tabular}
    
    \end{adjustbox}
    \caption{$l=2$}
    \label{table3}
\end{table}}}
\label{Graymap}

{\Large{
\begin{table}[!ht]
    \centering
    \begin{adjustbox}{width=.9\textwidth}
    \begin{tabular}{|l|l|l|l|l|l|l|l|l|l|}
    \hline
   $q$ & $n$ & $M$& $\Bar{\bm{g}}(x)=x^n-\Bar{\bm{a}}(x)$ & $\bm{g}^{(1)}(x)$ & $\bm{g}^{(2)}(x)$ &  Parameters $\Psi(C)$&Remarks about $\Psi(C)$ &  Parameters of Quantum Code \\ 
   \hline
    3       &9         &$\begin{pmatrix}
            2&1\\
            2&2 \\
        \end{pmatrix}$&$(1,1)x^9+(2,2)$&$x+2$&$x^4+2x^3+2x+1$&$[18,13,3]$&Almost optimal&$[[18,8,\geq 3]]$\\
   \hline
    3       &9         &$\begin{pmatrix}
            1&0\\
            0&1 \\
        \end{pmatrix}$&$(1,1)x^9+(2,2)$&$x+2$&$x+2$&$[18,16,2]$&Optimal&$[[18,14,2]]$\\
   \hline
    5       &5        &$\begin{pmatrix}
            1&4\\
            4&4 \\
        \end{pmatrix}$&$(1,1)x^5-(1,1)$&$x^2+3x+1$&$x+4$&$[10,7,3]$&Optimal&$[[10,4,\geq 3]]$\\
   \hline
    7       &7        &$\begin{pmatrix}
            2&4\\
            3&2 \\
        \end{pmatrix}$&$(1,1)x^7-(1,1)$&$x^2+5x+1$&$x^2+5x+1$&$[14,10,3]$&Almost Optimal&$[[14,6,\geq 3]]$\\
   \hline
   7       &7        &$\begin{pmatrix}
            2&4\\
            3&2 \\
        \end{pmatrix}$&$(1,1)x^7-(1,1)$&$x^2+5x+1$&$x^3+4x^2+3x+6$&$[14,9,4]$&Almost Optimal&$[[14,4,\geq 4]]$\\
   \hline
    
    \end{tabular}
    
    \end{adjustbox}
    \caption{Quantum Codes}
    \label{table4}
\end{table}}}

\section{Conclusion}\label{sec6}
To study polycyclic (in particular, cyclic) codes, many authors utilized $\mathbb{F}_q$-algebras that have an orthogonal basis of idempotents having sum 1 and they could extract interesting codes, for instance, Qi (\cite{qi2022polycyclic}) constructed almost MDS binary codes,  Islam et al. (\cite{islam2021cyclic}) produced many optimal and MDS codes, and Bhardwaj et al. (\cite{swati_raka}) considered a broader class of rings generalizing the above two and studied constacyclic codes over them. Since the algebras considered by them are each isomorphic to $\mathbb{F}_q^l$ for some $l\in\mathbb{N},$ we have studied polycyclic codes over the product ring $\mathbb{F}_q^l$.   
In fact, in these articles, the product ring structure of the base ring was utilized to extract good codes. For instance, in \cite{islam2021cyclic}, they picked a base ring which is isomphic to $\mathbb{F}_q[u]/\langle u^l-1\rangle$ so that $u^l-1$ splits over $\mathbb{F}_q$; in particular, it requires, $l\leq q$ and $l\mid q-1.$ Note that the product ring $\mathbb{F}_q^l$ cannot be realized as $\mathbb{F}_q[u]/\langle f(u)\rangle$ if $l>q.$ 
Our study of  polycyclic codes over $\mathbb{F}_q^l$ does not assume any  restriction on $q$ and $l$.  For example, binary codes cannot be obtained as the Gray image by the construction in \cite{islam2021cyclic}. Moreover, for $l=2,$ linear codes (for instance, Items 3, 4, 21 in Table 1, and 1 in Table 3) over fields with characteristic $2$ cannot be obtained by the above mentioned constructions. In fact, our general setup enables us to construct a large number of interesting codes, since $q$ and $l$ are independent of each other. As a future work, one can try to find a characterization of Gray maps that can produce good codes; in other words, one can try to find a class of $M$ in Section \ref{Graymap} which yields codes with good parameters. 


\subsection*{Declarations}
{\bf Ethical Approval and Consent to participate:} Not applicable.\\
{\bf Consent for publication:} Not applicable.\\
{\bf Availability for supporting data:} Not applicable.\\
{\bf Competing interests:} The authors declare that they have no competing
interests.\\
{\bf Funding:} The first author expresses gratitude to MHRD, India, for financial support in the form of PMRF(PMRF ID 1403187) at the Indian Institute of Technology, Delhi.\\
{\bf Author’s contributions:} Akanksha and Ritumoni Sarma contributed
equally to this work. Both authors read and approved the final manuscript.\\
{\bf Acknowledgments:} We thank FIST lab (project number: SR/FST/MS-1/2019/45) for computation facility.
\bibliographystyle{abbrv}
\bibliography{Genproductring}

\nocite{huffman2010fundamentals}
\nocite{ling2004coding}
\nocite{cannon2006handbook}

\end{document}